\documentclass[11pt,reqno]{amsart}
\usepackage{mathrsfs,amsmath,amsxtra,amssymb,amsthm,amsfonts}
\usepackage{graphicx}
\usepackage{amsaddr}
\usepackage{float}
\usepackage{authblk}
\usepackage[T1]{fontenc}
\usepackage[svgnames]{xcolor}
\numberwithin{equation}{section}
\newcommand{\Z}{\mathbb{Z}}

\def\fr{\begin{align*}}

\newcommand{\kl}{\pl \le \pl}
\newcommand{\gl}{\pl \ge \pl}

\newcommand{\lel}{\pl = \pl}

\newcommand{\nz}{{\mathbb N}}
\newcommand{\nen}{n \in \nz}

\newcommand{\cz}{{\mathbb C}}

\newcommand{\ten}{\otimes}

\newcommand{\pl}{\hspace{.1cm}}
\newcommand{\pll}{\hspace{.3cm}}

\newcommand{\ran}{\rangle}
\newcommand{\lan}{\langle}

\newcommand{\om}{\omega}

\newcommand{\al}{\alpha}
\newcommand{\si}{\sigma}

\newcommand{\la}{\lambda}


\newcommand{\F}{{\mathcal F}}
\newcommand{\E}{{\mathcal E}}

\newcommand{\D}{{\mathcal D}}

\newcommand{\N}{{\mathcal N}}

\newcommand{\fs}[2]{\frac{#1}{#2}}

\newcommand{\norm}[2]{\parallel \! #1 \! \parallel_{#2}}
\newcommand{\dep}[1]{\frac{d}{dp}#1|_{p=1}}
\renewcommand{\thefootnote}{\fnsymbol{footnote}}
\newcommand{\C}{\mathcal{C}}
\setcounter{footnote}{2}
\newtheorem{lemma}{Lemma}[section]
\newtheorem{prop}[lemma]{Proposition}
\newtheorem{theorem}[lemma]{Theorem}
\newtheorem{cor}[lemma]{Corollary}

\newtheorem{rem}[lemma]{Remark}

\newcommand{\re}{\begin{rem}\rm}
\newcommand{\mar}{\end{rem}}
\newtheorem{exam}[lemma]{Example}
\newcommand{\bra}[1]{\langle{#1}|}
\newcommand{\ket}[1]{|{#1}\rangle}
\newcommand{\ketbra}[1]{|{#1}\rangle\langle{#1}|}
\newcommand{\qd}{\end{proof}\vspace{0.5ex}}
\newcommand{\prf}{\begin{proof}[\bf Proof:]}

\newcommand{\xspace}{\hbox{\kern-2.5pt}}

\newcommand{\rra}{\rightarrow }

\makeatletter
\renewcommand\@biblabel[1]{#1.}
\makeatother

\DeclareMathOperator{\flip}{flip}
\DeclareMathOperator{\st}{st}
\DeclareMathOperator{\crp}{rc}

\allowdisplaybreaks
\newtheorem{defi}[lemma]{Definition}
\oddsidemargin0cm
\evensidemargin0cm
\textwidth16.1cm

\setlength{\parindent}{4ex}


\hyphenation{arith-metic}

\makeatletter
\g@addto@macro{\endabstract}{\@setabstract}
\newcommand{\authorfootnotes}{\renewcommand\thefootnote{\@fnsymbol\c@footnote}}%
\makeatother
\begin{document}

\begin{center}
  \LARGE
  Capacity Bounds via Operator Space Methods \par \bigskip

  \normalsize
  \authorfootnotes
  Li Gao\textsuperscript{1}, Marius Junge\textsuperscript{1}$^{,}$\footnote[1]{MJ is partially supported by {NSF-DMS} 1501103.}, and
  Nicholas LaRacuente\textsuperscript{2}$^{,}$\footnote[2]{NL is supported by NSF Graduate Research Fellowship Program DGE-1144245.}\par \bigskip

  \textsuperscript{1}Department of Mathematics, University of Illinois, Urbana, IL 61801, USA \par
  \textsuperscript{2}Department of Physics, University of Illinois, Urbana, IL 61801, USA\par \bigskip
\end{center}

\renewcommand{\abstractname}{\bf{ABSTRACT}}
\begin{abstract}
Quantum capacity, the ultimate transmission rate of quantum communication, is characterized by regularized coherent information. In this work, we reformulate approximations of the quantum capacity  by operator space norms and give both upper and lower estimates on quantum capacity and potential quantum capacity using complex interpolation techniques from operator space theory. Upper bounds are obtained by a comparison inequality for R{\'e}nyi entropies. Analyzing the maximally entangled state for the whole system and for error-free subsystems provides lower bounds for the ``one-shot'' quantum capacity. These two results combined give upper and lower bounds on quantum capacity for our ``nice'' classes of channels, which differ only up to a factor $2$, independent of the dimension. The estimates are discussed for certain classes of channels, including group channels, generalized Pauli channels and other high-dimensional channels.\\
\end{abstract} 
\section{Introduction}
\noindent The aim of quantum Shannon theory is to extend Shannon's information theory, formulated in his landmark paper \cite{48}, and provide the proper framework in the context of quantum mechanics, including non-locality \cite{bell,EPR}. In recent decades, vast progress has been made in extending Shannon's theory for quantum channels and their capacities. Moreover, the role  of different resources such as entanglement, transmission of classical and quantum  bits and their interaction has significantly improved (see e.g. \cite{mother,Family,DS}).
A surprising but important feature in quantum Shannon theory is the variety of capacities associated with a quantum channel. For instance, the \emph{classical capacity} \cite{Holevo,SW} describes the capability of classical information transmission through a quantum channel; \emph{entanglement-assisted classical capacity} \cite{BSST} considers classical transmission using additional entanglement accessible to the sender Alice and the receiver Bob. One big success in quantum information theory is the quantum capacity theorem proved
by Lloyd \cite{Lloyd}, Shor \cite{Shor} and Devetak \cite{Devetak} with increasing standards of rigor. It demonstrates that the \emph{quantum capacity} $Q(\Phi)$ of a channel $\Phi$, as the ultimate capability of $\Phi$ to transmit quantum information, is characterized by the \emph{regularized coherent information} as follows:
\begin{align}\label{qcapacity}
 Q(\Phi)\lel\lim_{k\to \infty} \frac{Q^{(1)}(\Phi^{\ten k})}{k}\pl,\pl
Q^{(1)}(\Phi)\lel\max_{\rho \pl  \text{\tiny pure}}I_c(A\ran B)_\si\pl,
\end{align}
where $\si^{AB}=id_A\ten \Phi(\rho^{AA'})$ and the maximum runs over all pure bipartite state $\rho^{AA'}$. $I_c(A\ran B)_\si$ is the coherent information of bipartite $\si$ given by $H(\si^B)-H(\si^{AB})$, with $H(\si)=-tr(\si\log \si)$ being the von Neumann entropy, and $Q^{(1)}$ is the ``one-shot'' quantum capacity. Let us also recall that the negative cb-entropy (also called the reverse coherent information) of a channel $\Phi$ is defined similarly as
${-S_{cb}(\Phi)=\max_{\rho} H(A)_\rho-H(AB)_\rho} $ (see Section $2$ for formal definitions).

Despite of this impressive theoretical success, there are few classes of quantum channels which have a closed, computable formula for the quantum capacity. The mathematical reason is the necessity to consider the limit in   \eqref{qcapacity},
the so-called \emph{regularization}, which amounts to making calculations for channels with arbitrary large inputs and outputs. It is known that for qubit depolarizing channels the regularization is strictly greater than the ``one-shot'' expression \cite{sa, sa1}. Moreover it was proved in \cite{cubitt} that for any $k\in \mathbb{N}$, there exists a channel $\Phi$ such that
the regularization of $k$ uses of $\Phi$ is one, but adding one more copy makes it positive, i.e.  ${Q^{(1)}(\Phi^{\ten (k+1)})>Q^{(1)}(\Phi^{\ten k})=0}$. As of today, calculation of quantum capacities is possible only for specific channels \cite{erasure,Neufang,adephasing}. Devetak and Shor in \cite{DS} proved that $Q=Q^{(1)}$ for degradable channels, those for which the environment can be retrieved from Bob's output with the help of another channel. Hence regularization is not necessary for degradable channels. For non-degradable channels, little is known about the exact value of quantum capacity. Several different methods have been introduced to give estimates on particular or general channels \cite{Holevo2,Winterss,approximate,sc,XW}.

The aim of this work is to introduce complex interpolation techniques to estimate the quantum capacity $Q$ from above and below for large, nice  classes of channels. The upper and lower bounds only differ by a factor of $2$. These in general non-degradable channels can be viewed as perturbations of the so-called conditional expectations, projections onto $C^*$-subalgebras. In finite dimensions, conditional expectations are direct sums of partial traces, hence they have clear capacity formula by observations of Fukuda and Wolf in \cite{Wolf}. Based on that, we observe a ``comparison property'' on entropy and capacity on our nice class of channels. Related estimates for the potential quantum capacity and the quantum dynamic capacity region also follow from the ``comparison property''.  Moreover, with similar assumptions we prove a formula for the negative cb-entropy.

Here we briefly formulate our results for certain random unitary channels which fall in our nice class. Let $G$ be a finite group of  order $|G|=n$ and the left regular representation given by $\la(g)(e_h)=e_{gh}$ on Hilbert space $\ell_2(G)\cong l_2^n$. Here $e_g(h)=\delta_{g,h}$ are the standard unit vectors for $\ell_2(G)$. There is also a right regular representation ${r(g)(e_h)=e_{hg^{-1}}}$. The group von Neumann algebra is $L(G)={\rm span}\{\la(g)|g\in G\}$ with commutant ${L(G)'=\{T| \pl\forall {x\in L(G)} \pl,  Tx=xT\}}$ given by the right regular representation ${L(G)'=R(G)={\rm span}\{r(g)|g\in G\}}$ (see e.g. \cite{Tak}). Given a function $f:G\to \cz$ with $f(g)\gl 0$ and $\sum_g f(g)=n$, we may define the channel
\begin{align}\label{group}
\theta_f(\rho) \lel \frac1n\sum_g f(g) \la(g)\rho \la(g)^* \pl .
\end{align}
In general, such a random unitary channel is not degradable unless $G$ is abelian. $L(G)$ is a finite dimensional $C^*$-algebra and hence admits a decomposition $L(G)=\oplus_k M_{n_k}$ into matrix blocks, given by a complete list of irreducible representations. We obtain the following estimates for the quantum capacity:
\begin{theorem}\label{group01} Let $G$ be a finite group such that ${L(G)=\oplus_k  M_{n_k}}$, and
$\theta_f$ defined as above. Then
 \begin{align} \label{com}\max\{\log (\max_k n_k), -S_{cb}(\theta_f)\}
&\kl Q^{(1)}(\theta_f)\kl Q(\theta_f)\kl \log (\max_k n_k)+ (-S_{cb}(\theta_f)) \pl, \\
 -S_{cb}(\theta_f) &= \log n- H(\frac{1}{n}f) \label{cb} .\end{align}
\end{theorem}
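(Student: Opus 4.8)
The plan is to treat the two displays separately: derive the cb-entropy formula \eqref{cb} by an explicit diagonalization of the Choi state, and then assemble \eqref{com} from two lower bounds, the trivial inequality $Q^{(1)}\le Q$, and an interpolation upper bound.

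\textbf{The formula \eqref{cb}.} First I would diagonalize the normalized Choi state. With $\ket{\Omega}=\tfrac1{\sqrt n}\sum_h e_h\ten e_h$, set $J=(\mathrm{id}\ten\theta_f)(\ketbra{\Omega})$ and $\ket{\psi_g}=(1\ten\la(g))\ket{\Omega}$, so that $J=\sum_g\tfrac{f(g)}{n}\ketbra{\psi_g}$. Since $\lan\psi_g|\psi_h\ran=\tfrac1n\operatorname{tr}(\la(g^{-1}h))$ equals $1$ for $g=h$ and $0$ otherwise (the left regular representation is a permutation representation with character $n\,\delta_{\cdot,e}$), the family $\{\ket{\psi_g}\}$ is orthonormal. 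Hence $J$ is diagonal with spectrum $\{\tfrac1n f(g)\}_g$ and $H(J)=H(\tfrac1n f)$. Because the cb-entropy is defined through completely bounded Schatten norms (Section $2$), whose value is attained at the maximally entangled input, the reverse coherent information defining $-S_{cb}$ is evaluated on $J$: the $A$-marginal is $\tfrac1n\mathbf 1$ of entropy $\log n$, while $H(AB)=H(J)=H(\tfrac1n f)$. Subtracting gives $-S_{cb}(\theta_f)=\log n-H(\tfrac1n f)$.

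\textbf{The lower bounds in \eqref{com}.} For $-S_{cb}(\theta_f)\le Q^{(1)}(\theta_f)$ I would feed the same maximally entangled state into the coherent information: the output marginal is $\theta_f(\tfrac1n\mathbf 1)=\tfrac1n\mathbf 1$, so $H(\si^B)=\log n$ and $H(\si^{AB})=H(J)=H(\tfrac1n f)$, giving $I_c(A\ran B)_\si=\log n-H(\tfrac1n f)=-S_{cb}(\theta_f)$; since $Q^{(1)}$ is the supremum of $I_c$ over inputs, this value is a lower bound. For $\log(\max_k n_k)\le Q^{(1)}(\theta_f)$ I would exhibit an error-free subsystem: because each $\la(g)\in L(G)$, any $\rho$ in the commutant $L(G)'=R(G)$ satisfies $\la(g)\rho\la(g)^*=\rho$, so $\theta_f$ restricts to the identity on $R(G)\cong\oplus_k M_{n_k}$. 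Encoding into the largest block $M_{\max_k n_k}\su R(G)$ transmits a $(\max_k n_k)$-dimensional system noiselessly, yielding coherent information $\log(\max_k n_k)$. Taking the larger of the two values gives the left-hand inequality, and $Q^{(1)}(\theta_f)\le Q(\theta_f)$ is immediate from the superadditivity of $Q^{(1)}$ (the $k=1$ term bounds the regularization in \eqref{qcapacity}).

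\textbf{The upper bound, the main obstacle.} The difficulty is $Q(\theta_f)\le\log(\max_k n_k)+(-S_{cb}(\theta_f))$, where I would use the operator-space reformulation. The key is that $\theta_f$ is a perturbation of the conditional expectation $E=\theta_{\mathbf 1}$ onto $R(G)$, a direct sum of partial traces whose quantum capacity equals $\log(\max_k n_k)$ by the Fukuda--Wolf description \cite{Wolf}. I would write the (regularized) coherent information as a derivative $\dep{\log\|\cdot\|}$ of a sandwiched Schatten $p$-norm attached to $\theta_f$, and factor that norm by complex interpolation between two endpoints: the $p=1$ endpoint, whose derivative produces the entropy term $-S_{cb}(\theta_f)$, and the conditional-expectation endpoint, which contributes $\log(\max_k n_k)$. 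The comparison inequality for R\'enyi entropies is precisely the assertion that this interpolation bound holds uniformly in the input state; multiplicativity of the relevant completely bounded norms under tensor products upgrades the one-shot estimate to the regularization, so the bound survives the limit in \eqref{qcapacity}. Differentiating the multiplicative interpolation bound at $p=1$ converts the product of the two endpoint contributions into the sum on the right-hand side of \eqref{com}. The hardest point is setting up the operator-space norm whose $p=1$ derivative reproduces the coherent information, proving the R\'enyi comparison in a tensor-stable form, and checking that the conditional-expectation endpoint contributes exactly $\log(\max_k n_k)$ and not a larger block-dependent constant; everything else is elementary once the Choi diagonalization is in hand.
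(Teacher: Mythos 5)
Your lower-bound arguments are sound, and in fact take a mildly different (more hands-on) route than the paper: the vectors $\ket{\psi_g}=(1\ten \la(g))\ket{\Omega}$ are indeed orthonormal because $tr(\la(g^{-1}h))=n\delta_{g,h}$, so the maximally entangled input gives coherent information $\log n-H(\frac1n f)$, and your error-free-subsystem encoding (information carried in the largest block of $R(G)=L(G)'$, multiplicity factor maximally mixed, reference entangled with the $R(G)$ factor) gives $Q^{(1)}(\theta_f)\ge \log\max_k n_k$ after one checks that the Kraus operators $\la(g)$ act only on the multiplicity factor of the block. The paper gets the same bound instead from the lifting identity $\theta_1=\theta_1\circ\theta_f$ (Lemma \ref{mu}), data processing, and $Q^{(1)}(\E_M)=\ln d_M$ (Proposition \ref{apropiori}). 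The two hard halves of the theorem, however, are asserted rather than proved, and these are genuine gaps. First, your derivation of \eqref{cb} only establishes $-S_{cb}(\theta_f)\ge \log n-H(\frac1n f)$. The sentence ``the cb-entropy \dots is attained at the maximally entangled input'' is precisely the nontrivial claim, not a consequence of the definition: one has $\|\theta_f:S_1\to S_p\|_{cb}=\|\chi_{\theta_f}\|_{M_n(S^n_p)}$, and Pisier's vector-valued norm is $\sup\{\|(a\ten 1)\chi_{\theta_f}(b\ten 1)\|_p:\|a\|_{2p},\|b\|_{2p}\le 1\}$, i.e. a supremum over all filtered inputs; restricting to $a,b$ proportional to the identity (the maximally entangled state) yields only a lower bound, and for a general channel (amplitude damping, say) the reverse coherent information is \emph{not} maximized at the maximally entangled input. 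Closing this gap is exactly the content of Theorem \ref{cbentropy}: the matching upper bound $\|\chi_{\theta_f}\|_{M_n(S_p^n)}\le \mu^{1-1/p}\|f\|_p$ is proved by interpolating the map $f\mapsto \chi_{\theta_f}$ between $L_\infty(N)\to M_n(M_n)$ and $L_1(N)\to M_n(S_1^n)$, and it uses the structural fact that $B$ is a unitary, equivalently that $\{\la(g)\}_g$ and $\{e_{g,g}\}_g$ are orthogonal bases of the respective $L_2$-spaces.

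Second, for the upper bound $Q(\theta_f)\le \log(\max_k n_k)+(-S_{cb}(\theta_f))$ you name the needed inequality (``the interpolation bound holds uniformly in the input state'') but never prove it. That inequality, $\|(id\ten\theta_f)(\rho)\|_p\le \|f\|_p\,\|(id\ten\theta_1)(\rho)\|_p$, is the paper's main technical result (Theorem \ref{comp}) and its proof occupies Sections 5--6: it needs the standard inclusion $R(G)\subset B(\ell_2(G))$, the complete contractivity of conditional expectations and of the projection onto $MB$ on the asymmetric spaces $H^{c_p}\ten_h K^{r}$ (Lemma \ref{oproj}), and Stein interpolation applied to the analytic family $\Phi(z)=U(id\ten|a_1^{pz}\ran\lan a_2^{p\bar z}|)U^*$. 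Your regularization step is also misattributed: what lets the one-shot bound survive the limit in \eqref{qcapacity} is not generic ``multiplicativity of completely bounded norms'' but the facts that $\theta_f^{\ten k}=\theta_{f^{\ten k}}$ is again a channel of the same class (so the comparison applies for every $k$) together with the strong additivity $Q^{(1)}(\E_M\ten \Psi)=\ln d_M+Q^{(1)}(\Psi)$ of Proposition \ref{apropiori}, which itself requires the row-column concavity computation for Stinespring spaces and is only gestured at in your last sentence. Finally, note a conflation: the comparison route yields $Q(\theta_f)\le \log(\max_k n_k)+\tau(f\ln f)$, and converting $\tau(f\ln f)$ into $-S_{cb}(\theta_f)$ is exactly formula \eqref{cb}; so the gap in your proof of \eqref{cb} also leaves this half of \eqref{com} unproven.
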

\noindent Here $H(\frac{1}{n}f)=-\sum_g \frac{f(g)}{n}\log \frac{f(g)}{n}$ is the Shannon entropy. The formula for the cb-entropy of (quantum) group channels has been discovered in the unpublished paper \cite{JNR2} (reproved here), a common source of inspiration for this work and \cite{Neufang}.  The upper bound tackles, up  to a factor 2, the problem of regularization for this class of non-degradable channels. Our results are particularly striking for non-abelian $G$ with $\max_k n_k \ll |G|^{1/2}$. Additionally, Theorem \eqref{group01} holds verbatim for quantum groups. We have two motivations for considering quantum groups. First, quantum groups provide new examples of channels with Kraus operators which are neither unitaries nor projections. Second, some variations of quantum group operations relate to Kitaev's work \cite{kitaev} on anyons. 
It appears that there is an interesting link between representation theory and capacity.

Our proof relies heavily on operator space tools, in particular complex interpolation. The connection between operator spaces and quantum information has long been noted. In particular,  the additivity of the cb-entropy can be derived by differentiating completely bounded norms \cite{DJKR}. In \cite{GW} Gupta and Wilde used the same completely bounded norm to prove the strong converse of entanglement-assisted classical capacity. Junge and Palazuelos found a reformulation of entanglement-assisted classical capacity and Holevo capacity in terms of the completely $p$-summing norm \cite{psumming}. Based on this, they also gave a super-additivity example of $d$-restricted entanglement-assisted classical capacity \cite{JP}. Our work discovers connections between quantum capacity and operator space structures and introduce interpolation technique to estimate the R{\'e}nyi entropy and information measures.

We organize this work as follows. The next section reviews basic definitions about channels and capacities. In Section 3, we state our main theorem and derive our upper bounds based on the ``comparison property''
 \[  \|(id\ten \theta_1)(\rho)\|_p\kl   \|(id\ten \theta_f)(\rho)\|_p\kl \|f\|_p  \|(id\ten \theta_1)(\rho)\|_p \pl, \]
 where $\norm{\cdot}{p}$ denotes the Schatten-$p$ norm. This section provide the basic idea of our estimates, postponing operator space terminology and proof. In Section 4 we deliver basic operator space and interpolation theory necessary for the rest of the paper. Section 5 introduces the Stinespring space of a channel and its connection to quantum capacity. 
 Section 6 is devoted to the proof of the ``comparison property''. Section 7 discusses cb-entropy and combined upper and lower bounds. Section 8 provides six examples including the group channels we see above. 
\section{Preliminaries}
\subsection{States and channels} We denote by $B(H)$ the space of bounded operators on Hilbert space $H$. In this paper, we restrict oursevles to finite dimensional Hilbert spaces and write $\displaystyle \dim H=|H|.$ Sometimes we also use the matrix algebra $M_n\cong B(l^n_2)$ where $l^n_2$ is the standard $n$-dimensional Hilbert space. For $1\le p< \infty$, the Schatten-$p$ norm of an operator $a\in B(H)$ is defined as
\[\norm{a}{p}=tr((a^*a)^{\frac{p}{2}})^{\frac1p}\pl ,\]
where ``$tr$'' is the standard trace on matrix algebra. In particular, $p=\infty$ denotes the usual operator norm, and $p=1$ is called the trace class norm. We denote $S_p(H)$ (or $S^n_p$) as the Banach space $B(H)$ (respectively $M_n$) equipped with the Schatten-$p$ norm. A \emph{state} of the system of Hilbert space $H$ is given by a density operator $\rho\in B(H)$, i.e.  $\rho\ge 0,\ {tr(\rho)=1}$. Following the duality between the Schr\"odinger and Heisenberg pictures, we view the density $\rho$ as an element in the trace class operators $S_1(H)$, which is the Banach space pre-dual of $B(H)$. A state is called \emph{pure} if its density is a rank one projector. Pure states are extreme points of the set of states. The identity operator in $B(H)$ is denoted as $1$ and $\frac{1}{|H|}1$ as a density operator is called the \emph{totally mixed state}.

We index physical systems by capital letters and the corresponding Hilbert spaces by subscripts. For example, it is common to assume Alice is in hold of system $H_{A'}$ and Bob $H_B$, whereas $H_A$ and $H_E$ are the reference system and environment respectively. The bipartite system is denoted as $H_{AB}\cong H_A\ten H_B$. For a multipartite state, we use the superscripts to track the systems of the states, i.e. for a state $\rho^{AB}\in S_1(H_{AB})$,
$\rho^{A}=id_A \ten tr_B (\rho^{AB})$ is the reduced density operator on $A$. Here $id_A$ is the identity map on $B(H_A)$ whereas the identity operator in $B(H_A)$ will be denoted by $1_A$, and $tr_B$ is the trace on $B(H_B)$. 
A pure bipartite state of unit vector $\ket{\psi}^{AA'}$ is a \emph{maximally entangled state} if $\ket{\psi}=\frac{1}{|H_A|}\sum_i e^A_i\ten e^{A'}_i$ with two orthogonal bases $\{e^A_i\}$ and $\{e^{A'}_i\}$.

A quantum channel from Alice to Bob is mathematically a completely positive and trace preserving (CPTP) map $\displaystyle {\Phi:S_1(H_{A'})\to S_1(H_B)}$, i.e. $id_A\ten \Phi (\rho^{AA'})$ is again a state in $S_1(H_{AB})$ for all bipartite states $\rho^{AA'}\in S_1(H_{AA'})$ with any reference systems $H_A$. Two equivalent definitions of quantum channels will also be used:
\begin{enumerate}
  \item[i)] Kraus operators: there exists a finite sequence of operators $x_i\in B(H_{A'}, H_B)$ satisfying $\sum_i x_i^*x_i=1_{A'}$, s.t. $\displaystyle \Phi(\rho)=\sum_i x_i\rho x_i^*$;
  \item[ii)] Stinespring dilation: there exists an environment Hilbert space $H_E$ and a partial isometry ${V\in B(H_{A'}, H_B\ten H_E)}$ with $V^*V=1_{{A'}}$, s.t.
      \begin{align} \label{dilation}\Phi(\rho)=id_B\ten tr_E(V\rho V^*)\pl .\end{align}
\end{enumerate}
The Stinespring dilation leads to the complementary channel of $\Phi$:
\[\Phi^E(\rho)=tr_B\ten id_E(V\rho V^*)\pl,\]
for which the outputs are sent to the environment. A channel $\Phi$ is \emph{degradable} if there exists another channel $\Psi$ such that $\Phi^E=\Psi\circ\Phi$.
A well-studied class of degradable channels are Hadamard channels, which have a general form as following:
\[\Phi(\rho)=\sum_{1\le i,j\le n}\bra{h_i} \rho\ket{h_j}\lan k_i|k_j\ran e_{i,j}\pl ,\]
where $\sum_{i\le n}\ketbra{h_i}=1$, $\ket{k_i}$'s are unit vectors and $e_{i,j}$'s are the matrix units. Here and in the following we use the standard bra-ket notation.\\
\subsection{Information measures} Given that $\rho$ is a density matrix, the von Neumann entropy of $\rho$ is closely related to its Schatten $p$-norms as follows,
\begin{align} H(\rho)\lel -tr(\rho \ln \rho)=\lim_{p\to 1^+} \frac{1-\|\rho\|_p}{p-1} \pl. \label{pentropy}\end{align}
As a matter of convenience, we use the natural logarithm for the definition of entropy, which differs to the logarithm with base $2$ by a constant scalar $\ln 2$. All the main results hold verbatim if the natural logarithm is replaced by $\log_2$, in the usual unit of (qu)bit. For a bipartite state $\rho^{AB}$ the mutual information $I(A:B)_{\rho}$ and the coherent information $I_c(A\rangle B)$ are defined as
\[
 I(A;B)_{\rho} :\lel H(A)_{\rho}+H(B)_{\rho}-H(AB)_{\rho}\pl\ ,\ \pl I_c(A\rangle B)_{\rho}
 :\lel H(B)_{\rho}-H(AB)_{\rho}  \pl ,\]
where $H(A)_{\rho}=H(\rho^A)$, $H(AB)=H(\rho^{AB})$.
If the state $\rho$ is clear from the context, the subindex is often omitted.
\subsection{Channel capacity}
Let us briefly review different quantum channel capacities which will be considered in this paper. Here we only state the rate definition of quantum capacity $Q$ but refer to \cite{Wildebook} for similar rate definitions of other capacities. Given a channel $\Phi$,
  a $(n, m,\epsilon)$-\emph{quantum code} is a pair of completely positive and trace preserving maps $(\C,\D)$,
 \[\C: S_1^m \to S_1(H_{A'}^{\ten n})\ \ \ \  \D: S_1(H_{B}^{\ten n}) \to S_1^m\pl,\]
 such that
 \[\norm{id_{m}\ten(\D\circ\N^{\ten^n}\circ \C)(\phi)-\phi}{1} \pl\le \pl \epsilon\pl ,\]
 where $\phi$ is a maximally entangled state in $S_1^m\ten S_1^m$, and $id_m$ is the identity map on $S_1^m$. The maps $\C$ and $\D$ are called the encoding and decoding respectively. A non-negative number $R$ is a \emph{achievable rate of quantum communication} if for any $\epsilon>0$ there exists an $(n, m,\epsilon)$ code such that $\displaystyle\frac{\ln m}{n}\ge R- \epsilon$. Then the \emph{quantum capacity} of $\Phi$, denoted $Q(\Phi)$, is defined as the supremum of all achievable rates $R$.

The quantum capacity theorem (also known as the LSD theorem) states that for a quantum channel $\Phi$, the capacity to transmit quantum information is \begin{align}
Q(\Phi)=\lim_{k\to \infty}\frac{Q^{(1)}(\Phi^{\ten k})}{k}\pl\ \ ,\ \
Q^{(1)}(\Phi)=\max_{\rho^{AA'}\ \text{\tiny pure}}I_c(A\ran B)_\si\pl,
\end{align}
where $\si^{AB}=id_A\ten \Phi(\rho^{AA'})$ is the output of channel. The maximum runs over all pure bipartite states $\rho^{AA'}$, and by convexity it is equivalent to consider any bipartite states. We will also be concerned with entanglement-assisted classical capacity denoted by $C_{EA}$. The entanglement-assisted classical capacity theorem \cite{BSST} shows that for a quantum channel $\Phi$, the capacity to transmit classical information with unlimited entanglement-assistance is
 \begin{align} C_{EA}(\Phi) \lel \max_{\rho_{AA'} \ {\tiny pure}}
 I(A;B)_\si \pl .
\end{align}Again the maximum runs over all pure bipartite inputs $\rho^{AA'}$. The potential capacities were introduced in \cite{WD}  by Winter and Yang to consider the maximal possible superadditivity of capacities. In this paper, we only consider the single-letter potential quantum capacity defined as follows:
 \begin{align} Q^{(p)}(\Phi) \lel \sup_{\Psi}\pl Q^{(1)}(\Phi\ten \Psi)-Q^{(1)}(\Psi)\pl, \end{align}
where the maximum runs over arbitrary channel $\Psi$. Note that we use a different notation ``$Q^{(p)}$'' from ``$V^{(1)}$'' in \cite{Winterss}, respectively ``$Q^{(1)}_p$'' in \cite{WD} to save the symbol ``$Q_p$'' for later use. By definition, we have  $Q^{(p)}\ge Q\ge Q^{(1)}$. $\Phi$ is \emph{strongly additive} on $Q^{(1)}$ if $Q^{(p)}=Q^{(1)}$, i.e. $Q^{(1)}(\Phi\ten \Psi)=Q^{(1)}(\Phi)+Q^{(1)}(\Psi)$ for arbitrary $\Psi$. Another information measure we will consider in this paper is the negative $cb$-entropy introduced in \cite{DJKR}:
\begin{align}\label{Scb} -S_{cb}(\Phi) \lel \max_{\rho^{A'A},\pl \text{pure}} H(A)_\si-H(AB)_\si\pl . \end{align}
It is also called reverse coherent information, and an operational meaning is discussed in \cite{invcoh}.

Finally, we will apply our estimates to the quantum dynamic region. Hsieh and Wilde introduced the quantum dynamic region $C_{CQE}$ to describes the resources traded off with a quantum channel \cite{HW} . ``$C$'' represents classical information transmission, ``$Q$'' represents qubit transmission and ``$E$'' is the entanglement distribution. We refer to their paper \cite{HW} and  Wilde's book \cite{Wildebook} for a formal definition of $C_{CQE}$. Here we state the quantum dynamic theorem from \cite{HW} for the convenience of readers. For a quantum channel $\Phi: S_1(H_{A'})\to S_1(H_B)$, its dynamic capacity region $C_{CQE}$ is characterized as following:
\[
C_{CQE}(\Phi)=\overline{\bigcup^{\infty}_{k=1}\frac{1}{k}C_{CQE}^{(1)}(\Phi^{\otimes k})}\ , \ \ \ \ \ C_{CQE}^{(1)}\equiv \bigcup_\sigma C^{(1)}_{CQE,\sigma}
\]
where the overbar indicates the closure of a set. The ``one-shot" region $C_{CQE}^{(1)}\subset \mathbb{R}^3$ is the union of the ``one-shot, one-state" regions $C_{CQE,\si}^{(1)}$, which are the sets of all rate triples $(C,Q,E)$ such that:
\begin{align*}
C+2Q\le I(AX;B)_\sigma\pl, \ Q+E\le I(A\rangle BX)_\sigma\pl , \  C+Q+E&\le I(X;B)_\sigma+I(A\rangle BX)_\sigma\pl.
\end{align*}
The above entropy quantities are with respect to a classical-quantum state
\[\sigma^{XAB}\lel\sum_x p_X(x)\ketbra{x}^X\otimes(id_A \ten \Phi^{A'\to B})(\rho^{AA'}_x)\]
and the states $\rho^{AA'}_x$ are pure.
\subsection{Von Neumann algebras}
Let us recall that a von Neumann algebra is a weak$^*$-closed $^*$-subalgebra of $B(H)$ for some Hilbert space $H$. We say $\tau$ is a normal faithful trace on the von Neumann algebra $N$ if $\tau:N_+\to [0,\infty]$ satisfies
\begin{enumerate}
\item[i)] $\tau(x+y)=\tau(x)+\tau(y)$;
\item[ii)] $\tau(u^*xu)=\tau(x)$ for all unitaries $u$;
\item[iii)] $\tau(x)=\sup_{0\le x\le y,\tau(y)<\infty}\tau(y)$;
 \item[iv)] $\tau(x)=0$ iff $x=0$.
\end{enumerate}
Here $x,y\in N_+=\{z^*z| z\in N\}$ is the cone of positive elements. In additional, $\tau$ is called normalized if $\tau(1)=1$. For $1\le p\le\infty$, the $L_p$-norm with respect to trace $\tau$ is defined by
\[\norm{a}{p}\lel \tau((a^*a)^{\frac{p}{2}})^{\frac1p}\pl, \ \  \  a\in N\pl,
\]
which is a generalization of Schatten-$p$ norms on $N$. A density $\rho\in N$ is a positive element with trace $\tau(\rho)=1$. In operator algebra literature, a state on $N$ is a unital positive linear functional $\phi: N \to \mathbb{C}$, and again by duality, a state is also given by a density $\rho$ in $N$, i.e. $\phi_{\rho}(T)=tr(\rho T)$. 

For a given state $\phi$ on $N$, the GNS construction is given by the triple $(H_\phi,\pi_\phi, \xi_\phi)$. The Hilbert space $H_\phi=L_2(N,\phi)$ is the completion of $N$ with inner product $(x,y)=\phi(x^*y)$ and $\xi_\phi=\ket{1}$ is given by the corresponding vector of identity in $L_2(N,\phi)$. Then the GNS representation $\pi_{\phi}$ is $\pi_{\phi}(x)\ket{y}=\ket{xy}$. If $\phi$ is a normal faithful state, $\xi_{\phi}$ is also separating, and there exists an anti-linear isometry $J$ such that $JNJ=N'$ holds for the commutant. In our case, we call the inclusion $N\subset B(H)$ a \emph{standard inclusion} if $H\cong  L_2(N,\phi)$ for some faithful state $\phi$. See Section 5 for more information on standard inclusions. 
\section{Capacity bounds via comparison theorem}

\subsection{VN-Channels}
We are interested in classes of channels indexed by densities from a von Neumann algebra. Indeed,  let $N$ be a von Neumann algebra with a faithful normalized trace $\tau$ and $U\in M_m\ten N$ be a unitary. For each density $f\in N$ we may  introduce a channel $\theta_f: S_1^m \to S_1^m$ as follows:
\begin{align}\theta_f(\rho)= id_{m}\ten\tau(U(\rho\ten f)U^*)\pl ,\label{channel3}\end{align}
Note that the map \eqref{channel3} is completely positive and trace preserving if and only if $f$ is a density. We use the normalized trace on $N$ so that that the identity operator $1$ becomes a density in $N$ (i.e. $\tau(a)=1$). Our main goal is to understand perturbations of quantum capacity on the channel $\theta_1$. The channels $\theta_1$ were intensively studied for the asymptotic quantum Birkhoff theorem (see \cite{haa}). We call $\theta_f$ VN-channels. One can understand that $f$, chosen from the von Neumann algebra $N$, is a quantum parameter of $\theta_f$. Note that in this setting the dimensions $H_{A'}=H_{B}=l_2^m$ coincide. Our first main theorem is the following comparison property on Schatten-$p$ norms for some nice classes of VN-channels.
\begin{theorem}[Comparison Theorem] \label{comp}Let $\theta_f$ be the channel defined by \eqref{channel3}. Assume that $N$ and $U$ satisfy
the following assumptions,
\begin{enumerate}
\item[i)] there exists a subalgebra $M\subset M_m$ as a standard inclusion ;
\item[ii)] the unitary $U$ admits a tensor representation $U=\sum_i x_i\ten y_i\in M'\ten N$;
\item[iii)] the operator $B=\sum_i |x_i\ran\ten \lan y_i^*|\in B(L_2(N), L_2(M))$ satisfies  $BB^*= \pl  id_{L_2(M)}$.
 \end{enumerate}
 Then for any bipartite state $\rho^{AA'}$ in $S_1(H_A\ten H_{A'})$ with some reference system $A$,
\begin{align} \|(id_A\ten \theta_1)(\rho)\|_p\kl   \|(id_A\ten \theta_f)(\rho)\|_p\kl \|f\|_p  \|(id_A\ten \theta_1)(\rho)\|_p \pl \label{pin} \end{align}
holds for $1\le p\le\infty$.
\end{theorem}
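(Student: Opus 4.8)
The plan is to collapse both channel outputs onto a single operator and then treat the two inequalities separately: the left one by a data-processing (conditional-expectation) argument, the right one by complex interpolation between $p=1$ and $p=\infty$. For the reformulation, use assumption (ii) to write the Stinespring dilation of $\theta_1$ with environment $E=L_2(N)$: the isometry $V\xi=U(\xi\otimes\ket{1})$ gives $\theta_1(\rho)=\mathrm{Tr}_E(V\rho V^*)$. The useful observation is the commutation identity $V_f=(1\otimes R_{f^{1/2}})V$, where $R_a$ is right multiplication by $a$ on $L_2(N)$ and $V_f\xi=U(\xi\otimes\ket{f^{1/2}})$; it holds because $U\in M'\otimes N$ and $R_{f^{1/2}}\in N'$ commute. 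With $W=(1_A\otimes V)\rho^{1/2}$ and cyclicity of the partial trace this yields
\[ (\mathrm{id}_A\otimes\theta_f)(\rho)=\mathrm{Tr}_E\big((1\otimes R_f)WW^*\big),\qquad (\mathrm{id}_A\otimes\theta_1)(\rho)=\mathrm{Tr}_E(WW^*). \]
Equivalently, packaging the Kraus data into the row operator $\mathbf{X}=\sum_i(1_A\otimes x_i)\otimes\bra{y_i^*}$ and setting $S=\mathbf{X}(\rho^{1/2}\otimes 1)$, one gets $(\mathrm{id}_A\otimes\theta_f)(\rho)=S(1\otimes L_f)S^*$ and $D:=(\mathrm{id}_A\otimes\theta_1)(\rho)=SS^*$, with $L_f$ left multiplication by $f$. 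In this identity only the positive element $L_f$ in the environment slot varies; it is the backbone of both estimates.

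For the lower bound, note first that $\theta_1$ is unital and trace preserving, since $U$ is unitary and $\tau(1)=1$. Assumption (iii) is precisely what forces the range of $\theta_1$ into $M$ (this is where $BB^*=\mathrm{id}$ enters: $\theta_1(\ketbra{1})$ is a multiple of the identity, and more generally $\theta_1(\rho)$ commutes with $M'$), so $\theta_1$ is the trace-preserving conditional expectation $E_M$ onto $M$. Next, for any density $f$ the Kraus operators $x_i\in M'$ commute with $M$, and trace preservation forces $\sum_{ij}\tau(y_ify_j^*)\,x_j^*x_i=1$; hence $\tau(\theta_f(\rho)m)=\tau(\rho m)$ for every $m\in M$, i.e. $E_M\circ\theta_f=E_M$. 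Therefore $(\mathrm{id}_A\otimes\theta_1)(\rho)=(\mathrm{id}_A\otimes E_M)\big((\mathrm{id}_A\otimes\theta_f)(\rho)\big)$, and since $\mathrm{id}_A\otimes E_M$ is again a conditional expectation, hence a contraction on $S_p$ for every $p$, the left inequality in \eqref{pin} follows.

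For the upper bound, interpolate the identity $(\mathrm{id}_A\otimes\theta_f)(\rho)=S(1\otimes L_f)S^*$ between two endpoints. At $p=\infty$, operator monotonicity gives $S(1\otimes L_f)S^*\le\|f\|_\infty\,SS^*$, so $\|\cdot\|_\infty\le\|f\|_\infty\|D\|_\infty$. At $p=1$, trace preservation gives the equality $\|S(1\otimes L_f)S^*\|_1=\mathrm{Tr}\,\rho=\|f\|_1\|D\|_1$. The clean interpolation input is the rotation estimate on the imaginary axis: for the unitary $f^{it}$ one has $\|S(1\otimes L_{f^{it}})S^*\|_p\le\|S\|_{2p}\,\|(1\otimes L_{f^{it}})S^*\|_{2p}=\|S\|_{2p}^2=\|SS^*\|_p=\|D\|_p$, because unitary right factors are isometries for $\|\cdot\|_{2p}$. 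Feeding the analytic family $z\mapsto S(1\otimes L_{f^{pz}})S^*$ (with $f^{pz}$ a positive power, unitary on $\mathrm{Re}\,z=0$) into a Stein-type three-lines argument is the mechanism that should produce the factor $\|f\|_p$ while retaining $\|D\|_p$; non-invertible $f$ and $D$ are handled by approximation.

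The main obstacle is obtaining the \emph{sharp} factor $\|f\|_p\,\|(\mathrm{id}_A\otimes\theta_1)(\rho)\|_p=\|f\|_p\|D\|_p$ rather than the weaker cross term $\|f\|_p\,\|D\|_1^{1/p}\|D\|_\infty^{1-1/p}$ that naive interpolation of the linear map $f\mapsto S(1\otimes L_f)S^*$ delivers. The two demands pull the exponents apart: measuring the rotation boundary in $S_p$ is what yields the good weight $\|D\|_p$, yet producing $\|f\|_p$ wants a trace estimate on the opposite boundary, and these cannot both sit at target exponent $p$. Reconciling them requires locking the weight $D$ to the exponent $p$ via the operator-space valued $L_p$ interpolation developed earlier, i.e. reading $S(1\otimes L_{\,\cdot\,})S^*$ as a map into a $D$-weighted (Kosaki) $L_p$ space; this, together with the rigorous verification that (iii) places the range of $\theta_1$ inside $M$, is where the real work lies.
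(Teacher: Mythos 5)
Your lower bound is correct and coincides with the paper's route (Lemma \ref{mu} plus Proposition \ref{th1f}): assumption iii) gives $\theta_1(|k\ran\lan h|)=kBB^*h^*=kh^*$, so $\theta_1=\mathcal{E}_M$ is the conditional expectation, and then $(id_A\otimes\theta_1)=(id_A\otimes\mathcal{E}_M)\circ(id_A\otimes\theta_f)$ with $id_A\otimes\mathcal{E}_M$ a unital trace-preserving CP map, hence an $S_p$-contraction. Your trace-duality derivation of $\mathcal{E}_M\circ\theta_f=\mathcal{E}_M$ (commuting $x_j^*\in M'$ past $m\in M$ and invoking $\sum_{i,j}\tau(y_ify_j^*)x_j^*x_i=1$ from trace preservation) is valid and in fact slightly cleaner than the paper's averaging computation; just note that the identification $\theta_1=\mathcal{E}_M$ itself still requires the explicit matrix-unit computation you only gesture at.

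The upper bound, however, contains a genuine gap, and you have diagnosed it accurately yourself: interpolating the map $f\mapsto S(1\otimes L_f)S^*$ between an $S_p$-bound on the unitary boundary and an $S_1$-trace bound on the other boundary cannot land at exponent $p$, and the naive two-endpoint scheme only yields $\|f\|_p\,\|D\|_1^{1/p}\|D\|_\infty^{1-1/p}\ge\|f\|_p\|D\|_p$. The appeal to a ``$D$-weighted Kosaki $L_p$ space'' is not carried out, and it is not how the paper closes the argument. The paper's resolution (Theorem \ref{compair}) is to interpolate one level down, at the amplitude rather than at the density: writing $\rho=\xi\xi^*$, Lemma \ref{shuffle} expresses $\|(id_A\otimes\theta_f)(\rho)\|_p$ as the \emph{square} of the norm of $\eta=(1_A\otimes V_f)\xi$ in the asymmetric Haagerup tensor space $H_A^{c_p}\otimes_h \st_p(\theta_f)\otimes_h(\cdots)^r$. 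One then runs Stein interpolation on the family $\Phi(z)=U(id\otimes|a_1^{pz}\ran\lan a_2^{p\bar z}|)U^*$ with $a_k=\sqrt{f_k}/\|\sqrt{f_k}\|_{2p}$, where the $\mathrm{Re}\,z=0$ boundary is a complete contraction on $L_2^{c}(M)\otimes_h L_2^{r}(N)$ (this is exactly where iii) and standardness enter, via complete contractivity of the projection onto $MB=\st_p(\theta_1)$, Lemmas \ref{oproj} and \ref{spemap}), and the $\mathrm{Re}\,z=1$ boundary is a contraction on the Hilbert space $L_2^{r}(M)\otimes_h L_2^{r}(N)$. At the amplitude level your two irreconcilable demands become compatible: the trace estimate turns into a Hilbert-space ($S_2$-type) estimate, the rotation estimate stays at the $c\otimes_h r$ ($S_\infty$-type) level, and these interpolate to precisely $L_2^{c_p}(M)\otimes_h L_2^{r}(N)$; squaring at the end converts the resulting cb-bound $\|\sqrt{f}\|_{2p}$ for the transition map $\st_p(\theta_1)\to\st_p(\theta_f)$ into the sharp factor $\|f\|_p$ while keeping the weight at exponent $p$. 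This purification/transition-map mechanism is the missing idea; without it your endpoint estimates do not assemble into \eqref{pin}.
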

The assumptions i), ii),  iii) are extracted from several concrete classes of channels, including the group channels and quantum group channels mentioned in the introduction. They are discussed in detials in Section 8.
\subsection{Upper estimates via Theorem \ref{comp}}
Now we translate the $L_p$-estimates \eqref{pin} into capacity bounds.
We will prove several capacity bounds assuming the ``comparison property'' Theorem \ref{comp}. Let us start with an immediate consequence.
\begin{cor}\label{entropyin} Under the assumptions of Theorem \ref{comp}, denote $\pl\si^{AB}_f=id_A\ten\theta_f(\rho^{AA'})$ and respectively $\pl\si^{AB}_1=id_A\ten\theta_1(\rho^{AA'})$ as the outputs. Then the following inequalities hold:
\begin{enumerate}
\item[i)]$H(AB)_{\si_1}-\tau(f\ln f)\le H(AB)_{\si_f}\le H(AB)_{\si_1}$;
\item[ii)]$I_c(A\ran B)_{\si_f}\le I_c(A\ran B)_{\si_1}+\tau(f\ln f)$;
\item[iii)]$I(A:B)_{\si_f}\le I(A:B)_{\si_1}+\tau(f\ln f)$.
\end{enumerate}
In particular, if $H_A$ is one dimensional, {\rm i)} implies
\[H(B)_{\si_1}-\tau(f\ln f)\le H(B)_{\si_f}\le H(B)_{\si_1}\pl .\]
\end{cor}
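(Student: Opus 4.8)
The plan is to derive all three inequalities from the Comparison Theorem \ref{comp} by relating the Schatten-$p$ norms appearing in \eqref{pin} to the von Neumann entropy via the limit formula \eqref{pentropy}. The key observation is that both $\si_f^{AB}$ and $\si_1^{AB}$ are states (the channels are trace-preserving), so $\|\si_f\|_1=\|\si_1\|_1=1$, and we may differentiate the $p$-norm at $p=1$ from the right. Concretely, for a density $\si$ one has $H(\si)=\lim_{p\to 1^+}\frac{1-\|\si\|_p}{p-1}=-\frac{d}{dp}\big|_{p=1^+}\|\si\|_p$. So the first step is to rewrite the middle and outer terms of \eqref{pin} as functions of $p$ that agree at $p=1$, and then compare their one-sided derivatives.

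The main work is part i). First I would establish the right-hand inequality $H(AB)_{\si_f}\le H(AB)_{\si_1}$: from \eqref{pin} we have $\|\si_f\|_p\le \|f\|_p\,\|\si_1\|_p$ and $\|\si_1\|_p\le\|\si_f\|_p$ for all $p\ge 1$. Taking logarithms and using $\|\si_f\|_1=\|\si_1\|_1=1$, the function $p\mapsto \frac{1-\|\si\|_p}{p-1}$ is the relevant difference quotient, and the inequality $\|\si_1\|_p\le\|\si_f\|_p$ gives $\frac{1-\|\si_1\|_p}{p-1}\ge\frac{1-\|\si_f\|_p}{p-1}$ for $p>1$; letting $p\to 1^+$ yields $H(AB)_{\si_1}\ge H(AB)_{\si_f}$. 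For the left-hand inequality I use $\|\si_f\|_p\le\|f\|_p\,\|\si_1\|_p$, so
\[
\frac{1-\|\si_f\|_p}{p-1}\ \ge\ \frac{1-\|f\|_p\,\|\si_1\|_p}{p-1}\ ,
\]
and the right side, by the Leibniz rule applied at $p=1$ (noting $\|f\|_1=\tau(f)=1$ since $f$ is a density and $\|\si_1\|_1=1$), tends to $H(AB)_{\si_1}+\lim_{p\to1^+}\frac{1-\|f\|_p}{p-1}=H(AB)_{\si_1}-(-\tau(f\ln f))$; that is, it contributes exactly the Shannon-type term $-\tau(f\ln f)$, giving $H(AB)_{\si_f}\ge H(AB)_{\si_1}-\tau(f\ln f)$. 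Here \eqref{pentropy} applied to $f$ itself identifies $\lim_{p\to1^+}\frac{1-\|f\|_p}{p-1}$ with $H(f)=-\tau(f\ln f)$.

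Parts ii) and iii) then follow formally from i) by observing that the marginals $\si_f^B$ and $\si_1^B$, and $\si_f^A$ and $\si_1^A$, coincide. Indeed $\theta_f$ and $\theta_1$ act only on the $A'$-to-$B$ leg, so the reduced state on $A$ is unchanged, $\si_f^A=\rho^A=\si_1^A$; and since $\tau$ is a normalized trace and $f$ is a density, tracing out $A$ leaves the $B$-marginal of $\theta_f(\rho^{A'})$ equal to that of $\theta_1$ — more carefully, one checks $\si_f^B=\si_1^B$ directly from \eqref{channel3}. Granting $H(B)_{\si_f}=H(B)_{\si_1}$ and $H(A)_{\si_f}=H(A)_{\si_1}$, the definition $I_c(A\ran B)=H(B)-H(AB)$ combined with the lower bound $H(AB)_{\si_f}\ge H(AB)_{\si_1}-\tau(f\ln f)$ gives ii), and $I(A{:}B)=H(A)+H(B)-H(AB)$ gives iii) identically. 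The final special case ($H_A$ one-dimensional) is immediate from i) since then $H(AB)=H(B)$.

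I expect the main obstacle to be the careful justification of passing to the limit and applying the product (Leibniz) rule for the one-sided derivative at $p=1^+$: one must confirm that $p\mapsto\|f\|_p$ and $p\mapsto\|\si_1\|_p$ are right-differentiable at $p=1$ with the derivatives given by \eqref{pentropy}, and that the difference quotient of the product splits as claimed. This is where the normalization $\|f\|_1=\|\si_1\|_1=1$ is essential, because it makes each factor equal to $1$ at $p=1$ so that the derivative of the product is the sum of the derivatives. The inequalities themselves are monotone in $p$ and pass to the limit without difficulty once differentiability is in hand.
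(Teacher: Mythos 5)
Your proof of part i) is correct and is essentially the paper's argument: differentiate the two-sided norm comparison of Theorem \ref{comp} at $p=1^+$, using $\|\si_f\|_1=\|\si_1\|_1=\|f\|_1=1$ and the identification $\lim_{p\to 1^+}\frac{1-\|f\|_p}{p-1}=-\tau(f\ln f)$. (Your intermediate expression ``$H(AB)_{\si_1}-(-\tau(f\ln f))$'' has a sign slip, but the stated conclusion $H(AB)_{\si_f}\ge H(AB)_{\si_1}-\tau(f\ln f)$ is the right one, and your Leibniz-rule splitting is the same computation as the paper's algebraic decomposition $1-\|f\|_p\|\si_1\|_p=(1-\|f\|_p)\|\si_1\|_p+(1-\|\si_1\|_p)$.)

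Parts ii) and iii), however, rest on a false claim: the $B$-marginals do \emph{not} coincide, i.e. $\si_f^B=\theta_f(\rho^{A'})\neq\theta_1(\rho^{A'})=\si_1^B$ in general. The maps $\theta_f$ and $\theta_1$ are genuinely different channels; for instance, for the group channel with $G=\Z_2$ (the bit-flip channel) one has $\theta_f(\rho)=(1-q)\rho+qX\rho X$ while $\theta_1(\rho)=\frac12(\rho+X\rho X)$, and on $\rho=\ketbra{0}$ these outputs, hence their entropies, differ whenever $q\neq\frac12$. So the step ``one checks $\si_f^B=\si_1^B$ directly from \eqref{channel3}'' would fail, and with it your derivations of ii) and iii) as written. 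What is true --- and what the paper uses --- is only the \emph{inequality} $H(B)_{\si_f}\le H(B)_{\si_1}$, which is exactly your ``in particular'' statement: part i) applied with a one-dimensional reference system. With that replacement the computation goes through verbatim: $I_c(A\ran B)_{\si_f}=H(B)_{\si_f}-H(AB)_{\si_f}\le H(B)_{\si_1}-H(AB)_{\si_1}+\tau(f\ln f)=I_c(A\ran B)_{\si_1}+\tau(f\ln f)$, which is ii); and iii) then needs only the (correct) equality $H(A)_{\si_f}=H(A)_{\si_1}$ --- the $A$-marginal equals $\rho^A$ regardless of $f$ since $\theta_f$ is trace preserving --- together with $I(A{:}B)=H(A)+I_c(A\ran B)$. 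In short, the strategy is right and the repair is already contained in your own write-up, but the marginal-equality claim is a genuine error, not a shortcut.
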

\begin{proof}
Thanks to Theorem \ref{comp} we have
\begin{align*}
\|\si_1^{AB}\|_p\kl  \|\si_f^{AB}\|_p\kl \|f\|_p  \pl \|\si_1^{AB}\|_p\pl.
\end{align*}
Taking the derivatives at $p=1$, we deduce that
 \begin{align*}
 H(AB)_{\si_f}
 = \lim_{p\to 1^+} \frac{1-\|\si_f\|_p}{p-1}
 \kl \lim_{p\to 1^+} \frac{1-\|\si_1\|_p}{p-1} = H(AB)_{\si_1} \pl ,
 \end{align*}
and conversely
 \begin{align*}
 H(AB)_{\si_f}
 &= \lim_{p\to 1^+} \frac{1-\|\si_f\|_p}{p-1}\ge \lim_{p\to 1^+} \frac{1-\|f\|_{p}\|\si_1\|_p }{p-1} \\
 &= \lim_{p\to 1^+} \frac{(1-\|f\|_{p})\|\si_1\|_p + (1-\|\si_1\|_p)}{p-1} \ge  H(AB)_{\si_1}-\tau(f\ln f) \pl .
 \end{align*}
This yields i). For ii), applying i) for the outputs on $B$ and $AB$ we get
\begin{align*}
 I_c(A\ran B)_{\si_f}&= H(B)_{\si_f}-H(AB)_{\si_f} \nonumber \\
 &\le H(B)_{\si_1}-H(AB)_{\si_1}+\tau(f\ln f)=I_c(A\ran B)_{\si_1}+\tau(f\ln f) \pl .\end{align*}
Since $I(A: B)_{\si_f}=H(A)_{\si_f}+I_c(A\ran B)_{\si_f}$ and  $H(A)_{\si_f}=H(A)_{\si_1}$, we prove iii).
 \end{proof}

\begin{rem} {\rm It is easy to check that the function $g(p)=\|f\|_p$ is differentiable and satisfies ${g'(1)=\tau(f\ln f)}$ for finite dimensional $N$. The expression  $-\tau(f\ln f)$ may be considered as a von Neumann entropy for normalized traces in von Neumann algebras and closely related to the Fuglede determinant, see e.g. \cite{determinant,noteonentropy}. The normalization
$\tau(1)=1$ is used in order to prevent cumbersome constants for the symbol $f=1$. For the reader more familiar with the usual trace on matrices, we note that if $N\subset M_n$ and
the normalized trace $\tau=\frac{tr}{n}|_N$ is the restriction of the normalized trace $\frac{tr}{n}$ on $M_n$, then $\frac{1}{n}f$ is a density in $M_n$ and
\[\tau(f\ln f)= \ln n - H(\frac{1}{n}f)\pl .\]}
\end{rem}

\begin{cor}\label{cbounds}
Under the assumptions of Theorem \ref{comp}, we have
\begin{enumerate}
\item[i)]$Q^{(1)}(\theta_f)\le Q^{(1)}(\theta_1)+\tau(f\ln f)$, $Q(\theta_f)\le Q(\theta_1)+\tau(f\ln f)$;
\item[ii)]$C_{EA}(\theta_f)\le C_{EA}(\theta_1)+\tau(f\ln f)$.
\end{enumerate}
\end{cor}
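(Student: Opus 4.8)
The plan is to deduce Corollary \ref{cbounds} directly from the single-letter entropy inequalities in Corollary \ref{entropyin}, exploiting the variational (maximization) characterizations of the capacities together with the fact that the assumptions of Theorem \ref{comp} are stable under tensor powers. The key structural observation is that each capacity here is a maximum over pure bipartite inputs of a linear combination of entropies of the output state, and Corollary \ref{entropyin} bounds precisely those entropy differences by the additive constant $\tau(f\ln f)$.

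For part ii), I would argue as follows. Fix a pure bipartite input $\rho^{AA'}$ achieving (or nearly achieving) the maximum defining $C_{EA}(\theta_f)$, and let $\si_f^{AB}$, $\si_1^{AB}$ be the corresponding outputs. By Corollary \ref{entropyin} iii) we have
\[
I(A:B)_{\si_f}\le I(A:B)_{\si_1}+\tau(f\ln f)\le C_{EA}(\theta_1)+\tau(f\ln f),
\]
since $I(A:B)_{\si_1}$ is itself bounded by the maximum defining $C_{EA}(\theta_1)$. Taking the supremum over all pure inputs $\rho^{AA'}$ gives $C_{EA}(\theta_f)\le C_{EA}(\theta_1)+\tau(f\ln f)$. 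The single-letter formula for $C_{EA}$ quoted in the preliminaries makes this immediate, so no regularization enters here.

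For part i), the first inequality $Q^{(1)}(\theta_f)\le Q^{(1)}(\theta_1)+\tau(f\ln f)$ is the same argument applied to Corollary \ref{entropyin} ii): maximizing $I_c(A\ran B)_{\si_f}\le I_c(A\ran B)_{\si_1}+\tau(f\ln f)$ over pure inputs yields the bound for the one-shot quantum capacity. The second, regularized inequality $Q(\theta_f)\le Q(\theta_1)+\tau(f\ln f)$ requires one extra step: I would apply the one-shot bound to the tensor powers. Here the essential point is that the hypotheses i)--iii) of Theorem \ref{comp} tensorize, so that $\theta_f^{\ten k}=\theta_{f^{\ten k}}$ is again a VN-channel of the same form (with algebra $N^{\ten k}$, unitary $U^{\ten k}$, and density $f^{\ten k}$), and correspondingly $\theta_1^{\ten k}=\theta_1$ for the product construction. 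Since entropy is additive on tensor products, $\tau^{\ten k}(f^{\ten k}\ln f^{\ten k})=k\,\tau(f\ln f)$, so the one-shot bound gives
\[
Q^{(1)}(\theta_f^{\ten k})\le Q^{(1)}(\theta_1^{\ten k})+k\,\tau(f\ln f).
\]
Dividing by $k$ and letting $k\to\infty$ in \eqref{qcapacity} yields $Q(\theta_f)\le Q(\theta_1)+\tau(f\ln f)$.

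The main obstacle I anticipate is verifying that the assumptions of Theorem \ref{comp} are genuinely preserved under tensor powers, so that the one-shot estimate may legitimately be applied to $\theta_f^{\ten k}$. One must check that $M^{\ten k}\subset M_{m^k}$ remains a standard inclusion, that $U^{\ten k}$ inherits the tensor representation in $(M')^{\ten k}\ten N^{\ten k}$, and that the associated operator $B^{\ten k}$ still satisfies the isometry-type condition $B^{\ten k}(B^{\ten k})^*=id$; all of these follow from taking tensor products of the corresponding data, but the bookkeeping deserves a careful word. Everything else is a routine passage from pointwise entropy inequalities to the maximized and regularized capacity quantities.
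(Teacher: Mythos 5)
Your proposal is correct and follows essentially the same route as the paper: take suprema in the entropy inequalities of Corollary \ref{entropyin} to get the $Q^{(1)}$ and $C_{EA}$ bounds, then use the fact that the hypotheses of Theorem \ref{comp} tensorize (so $\theta_f^{\ten k}=\theta_{f^{\ten k}}$ with $\tau^{\ten k}(f^{\ten k}\ln f^{\ten k})=k\,\tau(f\ln f)$) and pass to the regularized limit for $Q$. The only blemish is the slip ``$\theta_1^{\ten k}=\theta_1$,'' which should read $\theta_1^{\ten k}=\theta_{1^{\ten k}}$, i.e.\ the tensor power is again the VN-channel with identity density for $(N^{\ten k},U^{\ten k})$; this is exactly what the paper asserts.
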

\begin{proof} Taking the supremums on the second inequality of Corollary \ref{entropyin}, we obtain the inequality of $Q^{(1)}$. For $Q$, we observe that our assumptions are stable under taking tensor products. More precisely, we have   \[(\theta_f)^{\ten k}(\rho^{A^k{A'}^k})\lel id_{A^k}\ten \tau^k(U^{\ten k}(\rho\ten f^{\ten_k}) {U^*}^{\ten k})\pl, \]
and all assumptions of Theorem \ref{comp} are satisfied for $N^{\ten k}$ and $U^{\ten k}$. Then applying the inequality of $Q^{(1)}$ on $\theta_f^{\ten k}\equiv \theta_{f^{\ten k}}$
\begin{align*}Q(\theta_f)=&\lim_{k\to\infty}\frac1k Q^{(1)}(\theta_f^{\ten k})=\lim_{k\to\infty}\frac1kQ^{(1)}(\theta_{f^{\ten k}})\le\lim_{k\to\infty}\frac1k[Q^{(1)}(\theta_{1^{\ten k}})+\tau(f^{\ten k}\ln f^{\ten k})]\\=&\lim_{k\to\infty}\frac1k(Q^{(1)}(\theta_{1}^{\ten k})+k\tau(f\ln f))=\lim_{k\to\infty}\frac1kQ^{(1)}(\theta_{1}^{\ten k})+\tau(f\ln f)=Q(\theta_1)+\tau(f\ln f)\pl,
\end{align*}
which proves i). The assertion ii) follows immediately from the third inequality of Corollary \ref{entropyin}. \end{proof}

We can prove similar capacity bounds for the potential quantum capacity $Q^{(p)}$. For that, we need suitable $L_p$-approximations of the ``one-shot'' expression $Q^{(1)}$. For a quantum channel ${\Phi: S(H_{A'})\to S(H_{B})}$ and $p>1$, we can define the following two families of approximation quantities:
\[
 Q_p^{(1)}(\Phi) = \sup_{\rho \mbox{ \tiny pure}} \frac{\|(id_A\ten \Phi)(\rho^{AA'})\|_p}{\| \Phi(\rho^{A'})\|_p}\pl ,\ \ \ Q_{p,d}^{(1)}(\Phi) = \sup_{\rho^{AA'},\pl|A|\le d} \frac{\|(id_A\ten \Phi)(\rho^{AA'})\|_p}{\| \Phi(\rho^{A'})\|_p}\pl .
\]
 For a fixed $d$ both expressions are related to  $Q^{(1)}$ by differentiation at $p=1$.
\begin{lemma}\label{differ} For a quantum channel $\Phi$,
 \begin{enumerate}
 \item[i)] $\lim_{p\to1^+}\frac{1}{p-1}(Q_p^{(1)}(\Phi)-1)=Q^{(1)}(\Phi)$\pl ;
{\rm ii)} $\lim_{p\to1^+} \frac{1}{p-1}(Q_{p,d}^{(1)}(\Phi)-1)\le Q^{(1)}(\Phi)$\pl .

 \end{enumerate}
\end{lemma}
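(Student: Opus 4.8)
The plan is to express the ratios defining $Q_p^{(1)}$ and $Q_{p,d}^{(1)}$ through R\'enyi entropies, to factor the supremum out through an exponential, and to reduce the whole statement to the continuity at $p=1$ of a R\'enyi version of the one-shot capacity. First I would fix a pure input $\rho^{AA'}$, set $\si^{AB}=id_A\ten\Phi(\rho^{AA'})$ and note $\Phi(\rho^{A'})=\si^B$, so the quantity inside the supremum is $g_\rho(p)=\|\si^{AB}\|_p/\|\si^B\|_p$. Writing $\ln\|\si\|_p=\frac1p\ln tr(\si^p)=\frac{1-p}{p}H_p(\si)$ with $H_p(\si)=\frac{1}{1-p}\ln tr(\si^p)$ the R\'enyi entropy, this becomes
\[ g_\rho(p)=\exp\Big(\frac{p-1}{p}\,[H_p(\si^B)-H_p(\si^{AB})]\Big). \]
Since $\exp$ is increasing and $\frac{p-1}{p}>0$ for $p>1$, the supremum moves inside the exponential: $Q_p^{(1)}(\Phi)=\exp(\frac{p-1}{p}\Lambda_p)$ with $\Lambda_p=\sup_\rho[H_p(\si^B)-H_p(\si^{AB})]$, and $Q_{p,d}^{(1)}(\Phi)=\exp(\frac{p-1}{p}\Lambda_{p,d})$ for the supremum restricted to $|A|\le d$. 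As the entropies are bounded, $\frac{p-1}{p}\Lambda_p\to0$, so $Q_p^{(1)}\to1$ and $\frac{Q_p^{(1)}-1}{p-1}$ has the same limit as $\frac{\ln Q_p^{(1)}}{p-1}=\frac{\Lambda_p}{p}$. Thus both claims reduce to showing $\Lambda_p\to\Lambda_1$ as $p\to1^+$, where $\Lambda_1=\sup_\rho I_c(A\ran B)_\si=Q^{(1)}(\Phi)$, respectively $\Lambda_{1,d}\le Q^{(1)}(\Phi)$.

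For the limit $\Lambda_p\to\Lambda_1$ I would first make the supremum a supremum over a compact set. By the Schmidt decomposition any pure $\rho^{AA'}$ can be supported on a reference of dimension at most $|H_{A'}|$ without altering the nonzero spectra of $\si^{AB}$ and $\si^B$, hence without altering $g_\rho$; so for $Q_p^{(1)}$ the supremum may be taken over the compact set of pure states with $|A|\le |H_{A'}|$, and for $Q_{p,d}^{(1)}$ over the compact set with $|A|\le d$.

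The technical heart is the joint continuity of $(p,\si)\mapsto H_p(\si)$ on $[1,\infty)\times\{\text{densities}\}$, with the value at $p=1$ being the von Neumann entropy $H(\si)$. I would prove this by writing $H_p(\si)=\frac{-1}{p-1}\ln\big(1+(p-1)c(p,\si)\big)$ with $c(p,\si)=\frac{tr(\si^p)-1}{p-1}=\sum_i\frac{\lambda_i^p-\lambda_i}{p-1}$, and showing each summand $\frac{\lambda^p-\lambda}{p-1}$ extends jointly continuously to $p=1$ with value $\lambda\ln\lambda$. The only delicate point is near $(\lambda,p)=(0,1)$, where the uniform estimate $\big|\lambda\,\frac{\lambda^{p-1}-1}{p-1}\big|\le -\lambda\ln\lambda$ (from $1-e^{-t}\le t$ with $t=-(p-1)\ln\lambda$) forces the summand to $0$. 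Hence $c(1,\si)=-H(\si)$ and $H_p\to H$ jointly, so $(p,\rho)\mapsto H_p(\si^B)-H_p(\si^{AB})$ is jointly continuous; a supremum of a jointly continuous function over a fixed compact set is continuous in $p$, giving $\Lambda_p\to\Lambda_1=Q^{(1)}(\Phi)$ and $\Lambda_{p,d}\to\Lambda_{1,d}\le Q^{(1)}(\Phi)$, which finishes i) and ii).

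The main obstacle is exactly the interchange of $\lim_{p\to1^+}$ with the supremum over inputs: differentiating $g_\rho$ pointwise at $p=1$ gives $g_\rho'(1)=I_c(A\ran B)_\si$ and hence only the easy bound $\liminf\ge Q^{(1)}$, while the matching upper bound fails without uniform control and is what the compactness reduction plus the joint-continuity lemma supply. I expect most of the work to sit in that lemma, specifically the uniform estimate near vanishing eigenvalues; granted it, the remainder is the elementary observation that $\frac{Q_p^{(1)}-1}{p-1}$ and $\frac{\ln Q_p^{(1)}}{p-1}=\frac{\Lambda_p}{p}$ share the same limit.
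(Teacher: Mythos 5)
Your proposal is correct, and for part i) it is essentially the paper's own argument with the details written out: the paper disposes of i) in one line by citing uniform convergence of $\frac{1-\|\rho\|_p}{p-1}$ to $H(\rho)$ ``on the state space'', and your R\'enyi reformulation, the Schmidt-decomposition reduction to references with $|A|\le |H_{A'}|$ (needed for that state space to actually be compact), and the eigenvalue estimate near $(\lambda,p)=(0,1)$ are precisely what makes that one line rigorous. Where you genuinely diverge is part ii). The paper purifies a state $\rho^{AA'}$ with $|A|\le d$ on a system $F$ with $|H_F|=d|H_{A'}|$, observes that the defining ratio for $\rho$ under $\Phi$ equals the ratio for the purification under $\Phi\ten tr_{|A'|d}$, so that $Q_{p,d}^{(1)}(\Phi)\le Q_{p}^{(1)}(\Phi\ten tr_{|A'|d})$, and then applies part i) to the enlarged channel together with the identity $Q^{(1)}(\Phi\ten tr_{|A'|d})=Q^{(1)}(\Phi)$; you instead rerun the compactness/continuity argument directly on the (compact) set of generally mixed states with $|A|\le d$ and finish with $\Lambda_{1,d}\le Q^{(1)}(\Phi)$. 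Note that this last inequality --- that the coherent information of a mixed bipartite input never exceeds the pure-state maximum --- is asserted rather than proved in your text; it is the ``by convexity'' fact (concavity of conditional entropy) recorded in Section 2 of the paper and should be invoked explicitly. This is not a defect relative to the paper, since the paper's own step $Q^{(1)}(\Phi\ten tr_{|A'|d})=Q^{(1)}(\Phi)$ rests on exactly the same fact. The trade-off between the two routes: yours is self-contained and makes the uniformity behind the limit--supremum interchange explicit, while the paper's purification trick is shorter because it recycles part i) and never has to treat mixed inputs or their continuity separately.
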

\begin{proof}
The proof of i) is straightforward by uniform convergence of $\displaystyle \frac{1-\|\rho\|_p}{p-1}$ to $H(\rho)$ on the state space. For ii) we purify $\rho_{AA'}$ on a system $AA'F$ with ${|H_F|=|H_A||H_{A'}|=d|H_{A'}|}$ and then apply i),
 \[\lim_{p\to1^+} \frac{1}{p-1}(Q_{p,d}^{(1)}(\Phi)-1)\le \lim_{p\to1^+} \frac{1}{p-1}(Q_{p}^{(1)}(\Phi\ten tr_{|A'|d})-1) =Q^{(1)}(\Phi\ten tr_{|A'|d} )= Q^{(1)}(\Phi) \pl .\]
\end{proof}
\begin{prop}\label{potential} Under the assumptions of Theorem \ref{comp}, we have
\[Q^{(p)}(\theta_f)\le \tau(f\ln f)+Q^{(p)}(\theta_1) \pl .\]
\end{prop}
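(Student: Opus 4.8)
The plan is to peel off the supremum in the definition $Q^{(p)}(\theta_f)=\sup_\Psi\,[\,Q^{(1)}(\theta_f\ten\Psi)-Q^{(1)}(\Psi)\,]$ and reduce everything to a single-channel estimate, namely
\[ Q^{(1)}(\theta_f\ten\Psi)\kl \tau(f\ln f)+Q^{(1)}(\theta_1\ten\Psi) \qquad\text{for every channel }\Psi. \]
Once this is available, subtracting $Q^{(1)}(\Psi)$ from both sides gives $Q^{(1)}(\theta_f\ten\Psi)-Q^{(1)}(\Psi)\le \tau(f\ln f)+[\,Q^{(1)}(\theta_1\ten\Psi)-Q^{(1)}(\Psi)\,]\le \tau(f\ln f)+Q^{(p)}(\theta_1)$, and taking the supremum over $\Psi$ yields the claim. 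So the whole proposition hinges on comparing the one-shot capacities of $\theta_f\ten\Psi$ and $\theta_1\ten\Psi$.

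The key step is to show that the comparison property of Theorem \ref{comp} survives tensoring with an \emph{arbitrary} channel $\Psi$, with the \emph{same} constant $\|f\|_p$. Fix an input $\rho^{A(A'C')}$, where $A'$ is the $m$-dimensional input of $\theta_f$ and $C'$ the input of $\Psi$ (with output $D$), and set $\tilde\rho^{(AD)A'}=(id_{AA'}\ten\Psi)(\rho)$, the state obtained after running $\Psi$ on $C'$. Because $\theta_f$ acts only on $A'$ and $\Psi$ only on $C'$, the two maps commute and $(id_{AD}\ten\theta_f)(\tilde\rho)=(id_A\ten\theta_f\ten\Psi)(\rho)$, and likewise for $\theta_1$. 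Now I apply Theorem \ref{comp} to $\theta_f$ with the \emph{enlarged reference system} $AD$ and input $A'$; this is legitimate precisely because Theorem \ref{comp} is stated for arbitrary bipartite states $\tilde\rho^{(AD)A'}$, not merely pure ones. This produces
\[ \|(id\ten\theta_1\ten\Psi)(\rho)\|_p\kl \|(id\ten\theta_f\ten\Psi)(\rho)\|_p\kl \|f\|_p\,\|(id\ten\theta_1\ten\Psi)(\rho)\|_p\pl , \]
that is, the estimate \eqref{pin} holds verbatim for the pair $(\theta_f\ten\Psi,\ \theta_1\ten\Psi)$, with a constant $\|f\|_p$ independent of $\Psi$.

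From here I would conclude using the machinery just introduced. Applying the tensored comparison to the full output (numerator) and, with one-dimensional reference, to the reduced output (denominator), the definition of $Q_p^{(1)}$ gives directly $Q_p^{(1)}(\theta_f\ten\Psi)\le \|f\|_p\,Q_p^{(1)}(\theta_1\ten\Psi)$ for each $p>1$. Since $\|f\|_1=\tau(f)=1$ and $Q_1^{(1)}=1$, I subtract $1$, divide by $p-1>0$, split $\|f\|_p\,Q_p^{(1)}(\theta_1\ten\Psi)-1=\|f\|_p\,(Q_p^{(1)}(\theta_1\ten\Psi)-1)+(\|f\|_p-1)$, and let $p\to 1^+$; Lemma \ref{differ} i) handles the first term and the Remark's identity $\tfrac{d}{dp}\|f\|_p|_{p=1}=\tau(f\ln f)$ the second, yielding the single-channel estimate $Q^{(1)}(\theta_f\ten\Psi)\le \tau(f\ln f)+Q^{(1)}(\theta_1\ten\Psi)$. (Equivalently, one can differentiate the $L_p$-inequalities of the previous paragraph at $p=1$ exactly as in Corollaries \ref{entropyin} and \ref{cbounds} to get $I_c(A\ran BD)_{\si_f}\le I_c(A\ran BD)_{\si_1}+\tau(f\ln f)$ and take the supremum over pure inputs.)

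The main obstacle is really the single conceptual point of the second paragraph: that the comparison constant $\|f\|_p$ is preserved under tensoring with an arbitrary $\Psi$, \emph{uniformly} in $\Psi$. This is exactly where the strength of Theorem \ref{comp}---its validity for mixed bipartite states---is used, since it lets the $\Psi$-output $D$ be absorbed into the reference system; the uniformity of the constant is what makes the unbounded supremum over $\Psi$ defining $Q^{(p)}$ harmless. The subsequent differentiation at $p=1$ is routine and mirrors the arguments already carried out for $Q^{(1)}$ and $Q$.
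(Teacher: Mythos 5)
Your proposal is correct and follows essentially the same route as the paper: reduce to the uniform single-$\Psi$ estimate $Q^{(1)}(\theta_f\ten\Psi)\le \tau(f\ln f)+Q^{(1)}(\theta_1\ten\Psi)$, obtain it by exploiting that Theorem \ref{comp} holds for \emph{mixed} bipartite states (so the $\Psi$-output is absorbed into the reference system with the constant $\|f\|_p$ unchanged), and then differentiate at $p=1$ via Lemma \ref{differ} before taking the supremum in the definition of $Q^{(p)}$. The only (immaterial) difference is bookkeeping: you compare numerator and denominator separately to get $Q_p^{(1)}(\theta_f\ten\Psi)\le\|f\|_p\,Q_p^{(1)}(\theta_1\ten\Psi)$ and invoke Lemma \ref{differ} i), whereas the paper keeps $\|\si_f^{BB_1}\|_p$ on both sides of its chain and passes through the restricted quantity $Q^{(1)}_{p,d}$ and Lemma \ref{differ} ii); both variants are valid.
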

\begin{proof} Let $\Psi:S_1(H_{A_1'})\to S_1(H_{B_1})$ be an arbitrary channel and $\rho^{AA'A_1'}$ be a purification of the bipartite state $\rho^{A'A_1'}$. Let us  denote by  $\omega^{AA'B_1}= id_{AA'}\otimes \Psi(\rho^{AA'A_1'})$ and \[{\si_f^{ABB_1}=id_{A}\otimes\theta_f\otimes \Psi(\rho^{AA'A_1'})}\ , \ \si_1^{ABB_1}=id_{A}\otimes\theta_1\otimes \Psi(\rho^{AA'A_1'}).\] Note that $\si_f^{ABB_1}=id_{AB_1}\otimes \theta_f (\omega_{AA'B_1})$, then we deduce, with the help of Theorem \ref{comp}, that
 \begin{align*}
 \|\si_f^{ABB_1}\|_p&\le \|f\|_p
 \|\si_1^{ABB_1}\|_p \kl \|f\|_p \pl Q^{(1)}_{p,d}(\theta_1\ten \Psi) \pl \|\si_1^{BB_1}\|_p
 \kl
  \|f\|_p \pl Q^{(1)}_{p,d}(\theta_1\ten \Psi) \pl
 \|\si_f^{BB_1}\|_p \pl .
 \end{align*}
Here $d=|A|$ and $Q^{(1)}_{p,d}$ appears because $\omega^{AA'B_1}= id_{AA'B_1}\otimes \Psi(\rho^{AA'A_1'})$ may not be a pure state. According to Lemma \ref{differ}, differentiating the inequality above  yields
 \begin{align*}
 Q^{(1)}(\theta_f\ten \Psi)
 &\kl  \tau(f\ln f)+ Q^{(1)}(\theta_1\ten \Psi)
 \kl \tau(f\ln f) + Q^{(p)}(\theta_1)+Q^{(1)}(\Psi) \pl .
 \end{align*}
 Since $\Psi$ is arbitrary, we deduce
 \begin{align*}
 Q^{(p)}(\theta_f)&=\sup_{\Psi}\ Q^{(1)}(\theta_f\ten \Psi)
 -Q^{(1)}(\Psi) \le
 \tau(f\ln f) + Q^{(p)}(\theta_1)\pl . \qedhere
 \end{align*}
 \end{proof}

We conclude this section by the application on the quantum dynamic capacity region. Although it is in general difficult to describe this capacity region exactly, there is a mathematically nice way to characterize the  ``one-shot, one-state'' region $C^{(1)}_{CQE,\si}$. Let us consider the cone
 \[ W \lel \{(C,Q,E)\pl| \pl 2Q+C\le 0,\ Q+E\le 0,\ Q+E+C\le 0\}\]
obtained from trading resources, i.e.
 teleportation, superdense coding and entanglement distribution (see \cite{HW} for a detailed explanation).
Given an output state
\[ \sigma^{XABE}=\sum_{x} p(x)|x\ran \lan x|^X\ten (1_A\ten V)\rho^{AA'}(1_A\ten V^*)\pl \]
where $V$ is the Stinespring partial isometry, we find the ``one-shot, one-state'' achievable region is
   \[ C^{(1)}_{CQE,\si}=
    (I(X;B)_{\si},\frac12I(A;B|X)_{\si},-\frac{1}{2}I(A:E|X)_{\si} )+ W \pl .\]
Thus, instead of estimating the entire ``one-shot'' region $C^{(1)}_{CQE}=\cup_{\si}C^{(1)}_{CQE,\si}$, we may compare the entropy terms $(I(X;B)_{\si},\frac12I(A;B|X)_{\si},-\frac{1}{2}I(A:E|X)_{\si} )$ for a single $\si$.

\begin{prop} Under the assumptions of Theorem \ref{comp}, denote $\tau=\tau(f\ln f)$, we have the following inclusions:
\begin{enumerate}
\item[i)]For each input $\rho^{XAA'}=\sum_x p(x)\ketbra{x}^X\ten \rho_x^{AA'}$ with $\rho_x^{AA'}$ pure states,\\ $C^{(1)}_{CQE,\si_f}(\theta_f)\subset C^{(1)}_{CQE,\si_1}(\theta_1)+(\tau,\frac{\tau}{2},\frac{\tau}{2})$;
\item[ii)]$C^{(1)}_{CQE}(\theta_f)\subset C^{(1)}_{CQE}(\theta_1)+(\tau,\frac{\tau}{2},\frac{\tau}{2})$, $C_{CQE}(\theta_f)\subset C_{CQE}(\theta_1)+(\tau,\frac{\tau}{2},\frac{\tau}{2})$.
\end{enumerate}
\end{prop}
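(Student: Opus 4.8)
The plan is to exploit the fact that each one-state region $C^{(1)}_{CQE,\si}$ is the Minkowski sum of a single ``corner point'' $P_\si:=(I(X;B)_\si,\tfrac12 I(A;B|X)_\si,-\tfrac12 I(A:E|X)_\si)$ with the fixed cone $W$. Since $W$ is a closed convex cone containing the origin, it satisfies $W+W\subset W$, so an inclusion of the form $p+W\subset q+W$ holds precisely when $p-q\in W$. Thus claim i) reduces to verifying that the vector $P_{\si_f}-P_{\si_1}-(\tau,\tfrac{\tau}{2},\tfrac{\tau}{2})$ lies in $W$, i.e. that it satisfies the three defining inequalities $2Q+C\le 0$, $Q+E\le 0$ and $Q+E+C\le 0$.

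First I would eliminate the environment term. Since $X$ is classical and each $\rho_x^{AA'}$ is pure, conditioning on $X=x$ produces a pure tripartite state on $ABE$, for which $I(A:E)_{\phi_x}=2H(A)_{\phi_x}-I(A;B)_{\phi_x}$ (using $H(AE)=H(B)$ and $H(AB)=H(E)$); averaging over $x$ gives $I(A:E|X)=2H(A|X)-I(A;B|X)$. The term $H(A|X)$ depends only on the input marginal on $XA$ and is untouched by the channel, so $I(A:E|X)_{\si_f}-I(A:E|X)_{\si_1}=-(I(A;B|X)_{\si_f}-I(A;B|X)_{\si_1})$. Consequently the $Q$- and $E$-coordinates of the corner-point difference coincide, both being $\tfrac12(I(A;B|X)_{\si_f}-I(A;B|X)_{\si_1})$.

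The heart of the argument is to bound the surviving entropy differences via Corollary \ref{entropyin}, applied with three different reference systems, which is legitimate because $\theta_f$ and $\theta_1$ act only on $A'$ and commute with partial traces over the reference. Expanding $I(X;B)=H(X)+H(B)-H(XB)$ and $I(A;B|X)=H(AX)+H(BX)-H(ABX)-H(X)$, the channel-independent marginals $H(X)$ and $H(AX)$ cancel in all differences, leaving only
\[ x_1=H(B)_{\si_f}-H(B)_{\si_1},\quad x_2=H(XB)_{\si_f}-H(XB)_{\si_1},\quad x_3=H(XAB)_{\si_f}-H(XAB)_{\si_1}. \]
Applying Corollary \ref{entropyin} i) with trivial reference, with reference $X$, and with reference $XA$ respectively gives $x_1,x_2,x_3\in[-\tau,0]$. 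The $C$-difference is then $x_1-x_2$ and the common $Q=E$ difference is $\tfrac12(x_2-x_3)$, so after subtracting $(\tau,\tfrac{\tau}{2},\tfrac{\tau}{2})$ the three cone inequalities reduce to $x_1-x_3\le 2\tau$, $x_2-x_3\le\tau$ and again $x_1-x_3\le 2\tau$, all immediate from $x_i\in[-\tau,0]$. This establishes i).

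For ii), note that $C^{(1)}_{CQE}(\theta_f)$ is the union of $C^{(1)}_{CQE,\si_f}$ over all cq-inputs, and each input simultaneously determines the matching $\si_1$; taking the union of the inclusions from i) over all inputs yields $C^{(1)}_{CQE}(\theta_f)\subset C^{(1)}_{CQE}(\theta_1)+(\tau,\tfrac{\tau}{2},\tfrac{\tau}{2})$. Finally, as already observed in the proof of Corollary \ref{cbounds}, the hypotheses of Theorem \ref{comp} are stable under tensor powers with $\theta_f^{\ten k}=\theta_{f^{\ten k}}$ and $\tau(f^{\ten k}\ln f^{\ten k})=k\tau$, so the one-shot inclusion applies to $\theta_f^{\ten k}$; dividing by $k$, taking the union over $k$, and passing to the closure gives the regularized inclusion $C_{CQE}(\theta_f)\subset C_{CQE}(\theta_1)+(\tau,\tfrac{\tau}{2},\tfrac{\tau}{2})$. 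I expect the main obstacle to be the bookkeeping of the first two paragraphs: correctly bundling the classical register $X$ with the quantum reference $A$ so that Corollary \ref{entropyin} applies to the right marginals, and using purity of the conditional states to trade the environment term $I(A:E|X)$ for $I(A;B|X)$ so that the $Q$ and $E$ shifts come out equal to $\tfrac{\tau}{2}$.
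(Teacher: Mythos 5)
Your proposal is correct and takes essentially the same route as the paper: reduce the inclusion to showing that the difference of corner points, shifted by $(\tau,\frac{\tau}{2},\frac{\tau}{2})$, lies in the cone $W$ (using $0\in W$ and $W+W=W$), use purity of the conditional output states to make the $Q$- and $E$-shifts coincide, bound everything by $\tau$ via Corollary \ref{entropyin}, and handle the regularized region by tensor-power stability $\theta_f^{\ten k}=\theta_{f^{\ten k}}$ exactly as in the paper. The only cosmetic difference is bookkeeping: you apply part i) of Corollary \ref{entropyin} with three reference systems (trivial, $X$, and $XA$) and verify the cone inequalities directly from the entropy differences $x_1,x_2,x_3\in[-\tau,0]$, whereas the paper applies parts ii) and iii) conditionally on each $x$ and averages, introducing slack variables $\al_1,\al_2\ge 0$.
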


\begin{proof}Let us first compare the rate triple $\big(I(X;B),\pl\frac12I(A;B|X),\pl-\frac{1}{2}I(A; E|X)\big)$ between $\si_f$ and $\si_1$. We denote them respectively as $(C_f, Q_f, E_f)$ and $(C_1,Q_1,E_1)$. By Corollary \ref{entropyin}, we have
\[ I(X;B)_{\sigma_f}\le \tau+ I(X;B)_{\sigma_1} \pl .
\]
Hence $C_f=C_1+\tau-\al_1$ for some $\al_1\gl 0 $. Similarly, for $\si^{AB}_{x,f}=id_A \ten \Phi (\rho_x^{AA'})$ we have
 \begin{align*} I(A;B|X)_{\si_f} \lel \sum_x p(x) H(\rho^A_x)
 + \sum_x p(x)[H(\si^{B}_{x,f})-H(\si^{AB}_{x,f})]
  \end{align*}
and
 \begin{align*} I(A;E|X)_{\si_f} \lel \sum_x p(x) H(\rho^A_x)
 + \sum_x p(x)[H(\si^{E}_{x,f})-H(\si^{AE}_{x,f})]
   \pl .\end{align*}
Since each $\rho^{AA'}_x$ is pure we get
$ H(\si^{E}_{x,f})=H(\si^{AB}_{x,f})$ and $
 H(\si^{AE}_{x,f})=H(\si^{B}_{x,f})$.
 This means
 \begin{align*} Q_f \lel Q_1+ \frac{\tau-\al_2}{2} \pl ,\pl E_f \lel E_1+\frac{\tau-\al_2}{2} \pl \end{align*}
for some $\al_2\gl 0$. Now we observe that $(-\al_1,-\frac{\al_2}{2},-\frac{\al_2}{2})\in W$ because $-\al_1-\al_2 \kl 0$ and $   -\al_2\le 0$. Thus we obtain
 \[ (C_f,Q_f,E_f) \in (\tau,\frac{\tau}{2},\frac{\tau}{2})+ (C_1,Q_1,E_1)+W \pl .\]
Since $W$ is a cone, $W+W=W$. we get
\[C^{(1)}_{CQE,\si_f}=(C_f,Q_f,E_f)+W\subset (\tau,\frac{\tau}{2},\frac{\tau}{2})+ (C_1,Q_1,E_1)+W+W= (\tau,\frac{\tau}{2},\frac{\tau}{2})+C^{(1)}_{CQE,\si_1}\pl .\]
This concludes the proof of i). For ii), taking the union over all output $\si$ implies
 \begin{align*} C^{(1)}_{CQE}(\theta_f)\subset (\tau,\frac{\tau}{2},\frac{\tau}{2}) + C^{(1)}_{CQE}(\theta_1)\pl .\end{align*}
For iii), we use again the fact that $\theta_{f}^{\ten_k}\equiv\theta_{f^{\ten_k}}$ is of the same nature as $\theta_f$ and hence we deduce that
 \begin{align*}\frac{1}{k} C^{(1)}_{CQE}(\theta_f^{\otimes k})\subset \frac{1}{k}[k(\tau,\frac{\tau}{2},\frac{\tau}{2})+C_{CQE}^{(1)}(\theta_1^{\otimes k})]=(\tau,\frac{\tau}{2},\frac{\tau}{2})+\frac{1}{k}C_{CQE}^{(1)}(\theta_1^{\otimes k})\pl.\end{align*}
The result follows by taking the union over $k\in \nz$. \end{proof}
\begin{rem}{\rm i) All above estimates rely on the special channel $\theta_1$. Fortunately, we will see in Section 5 that $\theta_1$ is a channels as direct sums of partial trace, which has clear capacity expression depending on the von Neumann algebra $M$. It can also be deduced from \cite{HW} that the capacity region of such $\theta_1$ is strongly additive, hence it is regularized. Namely, we obtain the following ``single-letter upper bound''
\begin{align*}C_{CQE}(\theta_f)\subset (\tau,\frac{\tau}{2},\frac{\tau}{2})+
 {C_{CQE}^{(1)}(\theta_1)} \pl .\end{align*}
ii) If in additional $\theta_f$ is unital ($\theta_f(1)=1$), we find ${(\tau,\frac{\tau}{2},\frac{\tau}{2})\in 2C^{(1)}_{CQE}(\theta_f)}$. Indeed, we choose the input state $\rho^{AA'}$ to be a maximal entangled state, then  ${\displaystyle (0,\frac12(\ln m+\tau),\frac12(-\ln m+\tau))}$ and hence $(\tau,0,0)$, $(0,\frac{ \tau}{2},\frac{\tau}{2})$ belong to $C^{(1)}_{CQE}(\theta_f)$. Our estimate implies a comparison of convex regions often considered in convex geometry and Banach spaces
  \[  C_{CQE}(\theta_1)\subset C_{CQE}(\theta_f)\subset (\tau,\frac{\tau}{2},\frac{\tau}{2})+
 C_{CQE}(\theta_1)\subset 3 C_{CQE}(\theta_1) \pl .\]
 The first inclusion is an immediate consequence of Lemma \ref{mu}.
}\end{rem} 
\section{Operator space duality and $L_p$-spaces}
\subsection{Basic operator space} The background on operator space reviewed here is avalible in \cite{ER} and \cite{Psbook}. We say X is a (concrete) \emph{operator space} if $X\subset B(H)$ is a closed subspace for some Hilbert space $H$. The $C^*$-algebra $B(H)$ has a natural sequence of matrix norms associated with it: $M_n(B(H))=B(H^{\otimes n})$. Then the inclusion $X\subset B(H)$ not only equips $X$ with a Banach space norm, but also a sequence of norms on the vector-valued matrices
\[M_n(X)=\{(x_{ij})_{ij}|\pl x_{ij}\in X, \forall 1\le i,j \le n\}\pl. \]
Here we understand $M_n(X)\subset M_n(B(H))$ as being isometrically embedded. This sequence of matrix norms satisfy Ruan's Axioms, which are two properties inherited from $M_n(B(H))$ (here $1$ denotes the identity operator of $B(H)$):
\begin{enumerate}
\item[i)]$\text{For any}\  a,b\in M_n,\ x=(x_{ij})\in M_n(X),\ $
\begin{center}$\norm{(a\otimes 1)(x_{ij})(b\otimes 1)}{}\le \norm{a}{M_n}\norm{x}{M_n(X)}\norm{b}{M_n}$;
\end{center}
\item[ii)]$\text{For any}\  x=(x_{ij})\in M_n(X),\  y=(y_{ij})\in M_m(X), \ $
\begin{center}$\norm{\left( \begin{array}{cc}
x& 0  \\
0 & y \end{array} \right)}{M_{n+m}(X)}\le \max\{\norm{x}{M_n(X)},\ \norm{y}{M_m(X)}\} \pl .$ \end{center} \label{RA}
\end{enumerate}
 An operator space structure is either given by a concrete embedding $X\subset B(H)$ or a sequence of matrix norms satisfying Ruan's axioms. Thanks to Ruan's theorem this defines the same category, i.e. every matrix normed space satisfying Ruan's axioms admits an embedding ${\iota:X\to B(H)}$ which preserves the norms on all levels. A map ${\iota:X\to Y}$ such that ${id_n\ten \iota:M_n(X)\to M_n(Y)}$ is isometric for all $n$ is called a \emph{complete isometry}. Basic examples of operator spaces are given by the column space $C_n$ and  the row space $R_n$:
\begin{align} C_n={\rm span}\{e_{i,1}|1\le i \le n\}\subset M_n,\ \  R_n={\rm span}\{e_{1,i}|1\le i \le n\}\subset M_n\pl .\end{align}
Here and in the following $e_{i,j}$ denote the standard matrix unit (with the respect to the computational basis), i.e. the matrix which is $0$ except for the single entry $1$ in $i$-th row and $j$-th column. A basis-free description of the row and column space can be given as follows
\begin{align}H^c=B(\cz,H)\ \ ,\ H^r=B(H,\cz)\pl .\end{align}
The morphisms between operator spaces are completely bounded maps ($cb$-maps). Given two operator spaces $X,Y$ and a linear map $u:X\to Y$, we say $u$ is \emph{completely bounded} if the $cb$-norm
 \begin{align}\label{cbnorm}
  \|u\|_{cb} \lel \sup_n \|id_{M_n}\ten u:M_n(X)\to M_n(Y)\|
 \end{align}
is finite. The space of completely bounded maps from $X$ to $Y$ is denoted as $CB(X,Y)$. Clearly, $CB(X,Y)$
 is a Banach space, even more  an operator space equipped with the matrix level structure $M_n(CB(X,Y))=CB(X,M_n(Y))$. Particularly, ${X^*= CB(X,\cz)}$ is called the operator space dual of $X$. 
\subsection{Haagerup tensor product}Beyond the basic operator space concepts, the Haagerup tensor product is also a key tool in our estimates. Let us recall that for two operator spaces $X\subset B(H)$ and $Y\subset B(K)$, the \emph{Haagerup tensor} norm is defined on $X\otimes Y$ as
\[ \|z\|_{X \ten_h Y} \lel \inf_{z=\sum_k x_k\ten y_k}
 \|(\sum_k x_kx_k^*)^{1/2}\|_{B(H)} \|(\sum_k y_k^*y_k)^{1/2}\|_{B(K)} \pl .\]
In many cases we will not be able to provide a concrete embedding $X\subset B(H)$, and then it is better to note that
 \begin{align*} \|(\sum_k x_kx_k^*)^{1/2}\| \lel \|\sum_{k} x_k \ten e_{1,k} \|_{R_n(X)} \pl ,\pl \|(\sum_k y_k^*y_k)^{1/2}\|
 \lel \|\sum_k e_{1,k}\ten y_k\|_{C_n(X)} \pl ,\end{align*}
where $C_n(X), R_n(X)\subset M_n(X)$ are the $X$-valued column and row spaces. The Haagerup tensor product can recover the operator space structure
\begin{align*}
M_n(X) \lel C_n\ten_h X\ten_h R_n\pl,\pl\ C_n(X)=C_n\ten_h X \pl,\pl
R_n(X)=X\ten_h R_n \pl ,
\end{align*}
which holds completely isometrically. In particular, we have
\begin{align*}
M_n(M_m) &= C_n\ten_h M_m\ten_h R_n\lel M_{mn} \\
 C_n(C_m)&=C_n\ten_h C_m= C_{mn} \pl,\pl
R_n(R_m)=R_m\ten_h R_n = R_{mn} \pl .
\end{align*}
These identifications are also compatible with the general duality  relation
 \begin{align*} (X\ten_h Y)^* \lel X^*\ten_h Y^* \pl .\end{align*}
We recall that (see e.g. \cite{ER, Psbook}) $C_n^*=R_n$, $R_n^*=C_n$  holds completely isometrically. This implies
 \begin{align*}
 M_n^{*} =  (C_n\ten_h R_n)^* = R_n\ten_h C_n = S_1^n\ , \
 (S_1^n)^{*}  = (R_n\ten_h C_n)^* = C_n \ten_h R_n = M_n \pl .
 \end{align*}
It is important to note that the columns in $S_1^n$ carry the operator space structure of $R_n$, and the rows in $S_1^n$ become $C_n$. 
Another fundamental concept is the \emph{minimal tensor} norm for operator spaces $X\subset B(H)$, $Y\subset B(K)$ given by
 \[ X\ten_{\min}Y\subset B(H)\ten_{\min}B(K)\subset
  B(H\ten K) \pl ,\]
where the second inclusion serves as a definition of the $\min$-norm ($\min$ operator space structure). The connection with the space $CB(X,Y)$ is functorial, i.e. if one of the spaces is finite dimensional then
 \begin{align}\label{min} CB(X,Y) \lel X^*\ten_{\min}Y  \end{align}
holds completely isometrically. The minimal tensor norm is the smallest operator space tensor norm (see \cite{Psbook,ER}). 

\subsection{Complex interpolation}
Let $X_0$ and $X_1$ be two Banach spaces. We say $X_0$ and $X_1$ are compatible if there exists a Hausdorff topological vector $X$ such that $X_0, X_1\subset X$ as subspaces. One can define the sum as
\[X_0+X_1:\lel \{x\in X|x=x_0+x_1\pl \text {for some}\pl x=X_0, x_1\in X_1\}\pl,\]
and $X_0+X_1$ equipped with the norm
\[\norm{x}{X_0+X_1}=\inf_{x=x_0+x_1} (\norm{x_0}{X_0}+\norm{x_1}{X_1})\]
is again a Banach space. Let us denote by $S=\{z|0\le Re (z)\le 1\}$ the classical vertical strip of unit width on the complex plane and $S_0=\{z|0< Re (z)< 1\}$ its open interior. We will consider the space $\F(X_0, X_1)$ of all functions $f:S\to X_0+X_1$, which are bounded and continuous on $S$ and analytic on $S_0$, and moreover
\[\{f(it)|t\in \mathbb{R}\}\subset X_0\pl ,\pl \{f(1+it)|t\in \mathbb{R}\}\subset X_1\pl.\]
$\F(X_0, X_1)$ is a Banach space under the norm
\[\norm{f}{\F}=\max\{\sup_{t\in \mathbb{R}} \norm{f(it)}{X_0}, \sup_{t\in \mathbb{R}}\norm{f(1+it)}{X_1}\}\pl. \]
For $0<\theta<1$, the complex interpolation space $(X_0,X_1)_\theta$ is defined as a subspace of $\F(X_0,X_1)$ as follows
\[(X_0, X_1)_\theta=\{x\in X_0+X_1| \pl x=f(\theta), f\in F(X_0, X_1)\} \pl.\]
$(X_0,X_1)_\theta$ is a Banach space equipped with the norm
\[\norm{x}{\theta}=\inf \{\norm{f}{\F}| f(\theta)=x\}\pl .\]
For example, the Schatten-$p$ class is the interpolation space of bound operator and trace class
\[S_p(H)=(B(H), S_1(H))_{\frac{1}{p}}\pl.\]
The following Stein's interpolation theorem (cf. \cite{BL}) is a key tool in our analysis. \begin{theorem}\label{stein}
Let $(X_0,X_1)$ and $(Y_0,Y_1)$ be two compatible couples of Banach spaces. Let $\{T_z| z\in S\}\subset B(X_0+X_1, Y_0+Y_1)$ be a bounded analytic family of maps such that
\[\{T_{it}|t\in \mathbb{R}\}\subset B(X_0,Y_0)\pl ,\pl \{T_{1+it}|t\in \mathbb{R}\}\subset  B(X_1,Y_1)\pl.\]
Suppose $M_0=\sup_t{\norm{T_{it}}{B(X_0,Y_0)}}$ and  $M_1=\sup_t{\norm{T_{1+it}}{B(X_1,Y_1)}}$ are both finite, then $T_\theta$ is a bounded linear map from $(X_0,X_1)_\theta$ to $(Y_0,Y_1)_\theta$ and
\[\norm{T_\theta}{B((X_0,X_1)_\theta ,(Y_0,Y_1)_\theta)}\le M_0^{1-\theta}M_1^{\theta}\pl .\]
\end{theorem}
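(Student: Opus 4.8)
The plan is to run the classical weighted three-lines argument: I turn the two operator bounds into a single bound on an auxiliary $(Y_0+Y_1)$-valued analytic function, and then read off the interpolation estimate from the extremal definition of $\norm{\cdot}{\theta}$. It is harmless to assume $M_0,M_1>0$, since the general finite case follows by replacing each $M_j$ with $M_j+\eps$ throughout and letting $\eps\nach 0$ at the end; this also disposes of any degenerate situation in which one constant vanishes.

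Fix $x\in(X_0,X_1)_\theta$ and, given $\eps>0$, choose $f\in\F(X_0,X_1)$ with $f(\theta)=x$ and $\norm{f}{\F}\le\norm{x}{\theta}+\eps$. The key construction is the function
\[ g(z)\lel M_0^{\,z-1}\,M_1^{-z}\,T_z f(z)\pl,\pl z\in S\pl, \]
in which the scalar weight $M_0^{z-1}M_1^{-z}$ is chosen so that its modulus equals $M_0^{-1}$ on the line $\{\Re z=0\}$ and $M_1^{-1}$ on $\{\Re z=1\}$, thereby exactly cancelling the two operator norms on the respective boundaries.

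Granting for the moment that $g\in\F(Y_0,Y_1)$, the boundary estimates are immediate from the hypotheses: for $z=it$,
\[ \norm{g(it)}{Y_0}\lel M_0^{-1}\norm{T_{it}f(it)}{Y_0}\kl M_0^{-1}\norm{T_{it}}{B(X_0,Y_0)}\norm{f(it)}{X_0}\kl\norm{f}{\F}\pl, \]
and symmetrically $\norm{g(1+it)}{Y_1}\le\norm{f}{\F}$, so that $\norm{g}{\F}\le\norm{f}{\F}$. Since $g(\theta)=M_0^{\theta-1}M_1^{-\theta}\,T_\theta x$ with a positive scalar in front, the defining infimum for the interpolation norm gives
\[ M_0^{\theta-1}M_1^{-\theta}\norm{T_\theta x}{\theta}\lel\norm{g(\theta)}{\theta}\kl\norm{g}{\F}\kl\norm{f}{\F}\kl\norm{x}{\theta}+\eps\pl. \]
Rearranging and letting $\eps\nach 0$ yields $\norm{T_\theta x}{\theta}\le M_0^{1-\theta}M_1^{\theta}\norm{x}{\theta}$, which is the asserted bound; in particular $T_\theta$ maps $(X_0,X_1)_\theta$ boundedly into $(Y_0,Y_1)_\theta$.

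The main obstacle is precisely the deferred claim that $g\in\F(Y_0,Y_1)$, i.e.\ that $z\mapsto T_z f(z)$ is bounded and continuous on $S$, analytic on $S_0$, and takes its boundary values in $Y_0$ and $Y_1$. Boundedness and the boundary memberships follow from the uniform boundedness of $\{T_z\}$ in $B(X_0+X_1,Y_0+Y_1)$ together with $T_{it}\in B(X_0,Y_0)$, $T_{1+it}\in B(X_1,Y_1)$ and the corresponding properties of $f$; the delicate point is \emph{analyticity} of the application of an operator-valued analytic family to a vector-valued analytic function. I would establish this weakly: for every $\psi\in(Y_0+Y_1)^*$ the scalar function $z\mapsto\psi\big(T_z f(z)\big)$ is analytic by the bounded-analytic-family hypothesis on $\{T_z\}$ and the analyticity of $f$, and weak analyticity together with local norm-boundedness upgrades to genuine norm analyticity. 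Equivalently, one first proves the whole estimate on the dense subclass of $f$ of the elementary form $\sum_j e^{\delta z^2}\varphi_j(z)\,a_j$ with $\varphi_j$ scalar analytic and $a_j\in X_0\cap X_1$ (for which analyticity of $g$ is transparent), and then passes to the limit. This approximation/weak-analyticity step, rather than the arithmetic of the weight, is where the real work sits.
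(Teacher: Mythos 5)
Your proposal is correct, but there is nothing in the paper to compare it against: the paper does not prove this statement at all. Theorem 4.1 is Stein's classical interpolation theorem, quoted with the citation (cf.\ \cite{BL}) and used as a black box in the later sections. What you have written is essentially the standard textbook argument from that reference: the weight $M_0^{z-1}M_1^{-z}$ has modulus $M_0^{-1}$ on $\{\mathrm{Re}\,z=0\}$ and $M_1^{-1}$ on $\{\mathrm{Re}\,z=1\}$, so $g(z)=M_0^{z-1}M_1^{-z}T_zf(z)$ satisfies $\norm{g}{\F}\le\norm{f}{\F}$, and evaluating at $z=\theta$ against the extremal definition of $\norm{\cdot}{\theta}$ gives the bound after $\eps\nach 0$. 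You also correctly isolate the one genuinely delicate point, namely analyticity of $z\mapsto T_zf(z)$, and both remedies you sketch are the standard ones: weak analyticity upgraded to norm analyticity (Dunford's theorem), or reduction to the dense subclass of functions $\sum_j e^{\delta z^2}\varphi_j(z)a_j$ with $a_j\in X_0\cap X_1$, for which everything is transparent. Two small points would need tightening in a fully written version: the scalar analyticity of $\psi(T_zf(z))$ is itself not purely formal --- one expands $f$ locally in a norm-convergent power series and uses the uniform bound on $\norm{T_z}{B(X_0+X_1,Y_0+Y_1)}$ to pass $\psi$ through the limit; and membership $g\in\F(Y_0,Y_1)$ also requires continuity of $g$ on the closed strip, which should be extracted from (or built into) the definition of a bounded analytic family. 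Neither point affects correctness; your proof is a sound, self-contained substitute for the citation.
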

\noindent In particular, when $T$ is a constant map, the above theorem implies
\begin{align}
\norm{T}{B((X_0,X_1)_\theta ,(Y_0,Y_1)_\theta)} \le \norm{T}{B(X_0 ,Y_0)}^{1-\theta}\norm{T}{B(X_1 ,Y_1)}^{\theta} \pl . \label{interpolation}
\end{align}
\subsection{Noncommutative $L_p$-spaces}
Noncommutative $L_p$-spaces may be obtained by complex interpolation. Indeed, (for finite dimension $H$) we have
 \begin{align*} S_p(H) \lel (B(H),S_1(H))_{\frac{1}{p}}
 \lel (H^c\ten_h H^r, H^r\ten_h H^c)_{\frac{1}{p}}
 \lel (H^c, H^r)_{\frac1p}\ten_h (H^r, H^c)_{\frac{1}{p}} \pl.  \end{align*}
 The second equality is an instance of Kouba's interpolation formula for the Haagerup tensor product (see \cite{BL,pvp,Psbook} for more details),
 \[(X_0, X_1)_{\theta}\ten_h (Y_0, Y_1)_{\theta}=(X_0\ten_h Y_0, X_1\ten_h Y_1)_{\theta}\pl.\]
 We will adapt the notation $H^{c_p}=(H^c, H^r)_{\frac{1}{p}}$ and $H^{r_p}=(H^r, H^c)_{\frac{1}{p}}$ for the columns and row in $S_p(H)$ respectively. This definition leads to 
 the ``little Fubini theorem''
 \begin{equation} \label{hcp}
 H^{c_p}\ten_h K^{c_p}\lel (H\ten K)^{c_p}\pl,\pl
 H^{r_p}\ten_h K^{r_p} \lel (H\ten K)^{r_p} \pl ,
 \end{equation}
 for two Hilbert spaces $H$ and $K$.
In some instance we will make use of vector-valued $L_p$ spaces. For an operator space $X$, we recall Pisier's definition
 \begin{align*} S_p(H,X) \lel H^{c_p} \ten_h X\ten_h H^{r_p}\pl .\end{align*}
An important special case is given by
 \begin{equation}\label{QPP}
  \|\xi\|_{S_p(H_A,S_q(H_B))} \lel
  \sup_{\|a\|_{2r}\|b\|_{2r}\le 1} \|(a\ten 1_B)\xi(b\ten 1_B)\|_{S_q(H_A\ten H_B)}
  \end{equation}
where $q\le p$, $1/p+1/r=1/q$ and
 \begin{equation*}\label{PQQ}
  \|\xi\|_{S_p(H_A,S_q(H_B))} \lel
  \inf_{\xi=(a\ten 1_B)\eta(b\ten 1_B)} \|a\|_{2r}\|\eta\|_{S_q(H_A\ten H_B)}\|b\|_{2r}
   \end{equation*}
where $q\ge p$, $1/q+1/r=1/p$. It is not difficult to show that for $\xi\gl 0$ it suffices to consider $a=b^*\ge 0\in B(H_A)$ in both cases.

\section{Stinespring space and its Operator Space structures}
\noindent Suppose a channel $\Phi: S_1(H_{A'})\to S_1(H_{B})$ from Alice to Bob has a Stinespring dilation \[\Phi(\rho)=id_B\otimes tr_E (V \rho V^*)\pl ,\]
where $V: H_{A'}\to H_{B}\otimes H_E$ is a partial isometry such that $V^*V=1_{A'}$. Then the \emph{Stinespring space} of $\Phi$ is defined to be the range of partial isometry $V$:
\[ \st(\Phi)  \equiv \text{Im} (V)= \{V(h)  | \pl h\in H_A \} \subset H_B\ten H_E \pl.\]
Although the partial isometry $V$ is not unique, different dilations only differ by unitary transformations on $H_E$, and hence will not affect the operator space structure of $\st(\Phi)$. The Stinespring space is well-known and has been used instrumentally in disproving the additivity conjecture for the minimal entropy (see \cite{hayden}). It has become clear that the family of Schatten $p$-norms on $H_B\ten H_E$ are related to entropy. In this paper we will go one step further and consider the operator space structure of the Stinespring space. For $ 1\le p\le \infty$, let us denote $\st_p(\Phi)$ as the operator subspace $\st(\Phi)$ induced by the following inclusion
\[\st_p(\Phi) \subset H_B^{c_p}\ten_{h} H_E^{r}  \pl .\]
Let us recall that for two Hilbert space $H$ and $K$,
 \[ H^{c_p}\ten_h K^r \lel  [H^c\ten_h K^r, H^r\ten_h K^r]_{\frac1p} \lel S_{2p}(K, H) \pl .\]
Here $S_{p}(H, K)$ stands for Schatten-$p$ class of operators from $K$ to $H$. Note that the operator space structure here is not usual one (i.e. $H^{c_{2p}}\ten_h K^{r_{2p}}$), see \cite{JP} for more details on asymmetric $L_p$-spaces.

\begin{lemma}\label{shuffle}  Let $\Phi:S_1(H_{A'}) \to S_1(H_B))$ be a channel with Stinespring dilation isometry $V$. Let $\xi^{AA'}$ and $\rho^{AA'}=\xi\xi^*$ be operators in $B(H_{A}\ten H_{A'})$. Denote ${\eta=(1_A\ten V)\xi }$, then
 \begin{enumerate}
\item[i)] $\|(id_A\ten\Phi)(\rho^{AA'})\|_{S_p(H_A\ten H_B)} \lel \|\eta\|^2_{H_A^{c_p}\ten_h \st_p(\Phi)\ten_h (H_{A'}\ten H_A)^{r}}$;
 \item[ii)] $\|\Phi(\rho^{A'})\|_{S_p(H_B)} \lel \|\eta\|^2_{\st_p(\Phi)\ten_h (H_A\ten H_{A'}\ten H_A)^{r}}$.
       \end{enumerate}
In particular, if $\rho= |\xi\ran \lan \xi|$ is given by a pure state then $\eta$ belongs to $H_A^{c_p}\ten_h \st_p(\Phi)$ for i) and respectively  $\st_p(\Phi)\ten_h H_A^{r}$ for ii).
\end{lemma}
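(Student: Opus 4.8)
The statement relates Schatten $p$-norms of channel outputs to Haagerup-tensor norms of the lifted vector $\eta = (1_A \ten V)\xi$. My strategy is to unwind both sides through the operator-space identities collected in Section 4, reducing everything to the ``little Fubini theorem'' \eqref{hcp} and the asymmetric $L_p$-identity $H^{c_p}\ten_h K^r = S_{2p}(K,H)$. The factor-of-$2$ in the exponent (``$\|\eta\|^2$'') is the signature of the fact that $\rho = \xi\xi^*$ is quadratic in $\xi$, so I expect a clean appearance of squares once the norms are matched correctly.

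\textbf{Step 1 (reduce the output to the dilation).} By the Stinespring formula \eqref{dilation}, $(id_A\ten\Phi)(\rho^{AA'}) = id_{AB}\ten tr_E\big((1_A\ten V)\rho(1_A\ten V^*)\big) = tr_E(\eta\eta^*)$, where the partial trace is over $H_E$ and $\eta\eta^* = (1_A\ten V)\xi\xi^*(1_A\ten V^*)$. So I must compute $\|tr_E(\eta\eta^*)\|_{S_p(H_A\ten H_B)}$ in terms of an operator-space norm of $\eta$. The key observation is that $\eta$ naturally lives in $H_A\ten H_B\ten H_E$, and the partial trace over $E$ together with the Schatten-$p$ norm over $AB$ is exactly what the mixed Haagerup norm $H_A^{c_p}\ten_h \st_p(\Phi)\ten_h (\cdots)^r$ encodes, with the row factor carrying the $H_E$ index that gets traced out.

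\textbf{Step 2 (identify the norms).} Here I would use that for a positive operator the $S_p$-norm can be written as a supremum/infimum via \eqref{QPP}, and that $\|tr_E(\eta\eta^*)\|_p = \|\eta\|^2$ in the appropriate asymmetric mixed-norm space, since $tr_E(\eta\eta^*)$ is the ``column part'' of $\eta$ viewed as an element of an $S_{2p}$-type space over the bipartition $(AB)\,|\,E$. Concretely, writing $\eta$ as an element of $(H_A\ten H_B)^{c_p}\ten_h H_E^r$ and then factoring the $(H_A\ten H_B)^{c_p}$ part through $H_A^{c_p}\ten_h H_B^{c_p}$ by the little Fubini theorem \eqref{hcp}, and recalling that $\st_p(\Phi)\subset H_B^{c_p}\ten_h H_E^r$ by definition, I can regroup the tensor legs so that $\eta$ sits in $H_A^{c_p}\ten_h \st_p(\Phi)\ten_h (H_{A'}\ten H_A)^r$ — the extra row factors $(H_{A'}\ten H_A)^r$ being genuinely ``free'' row indices (of dimension $1$ for a vector $\xi$, or the reference legs when $\xi$ is a general operator). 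The identity $\|\eta\|^2 = \|tr_E(\eta\eta^*)\|_p$ is precisely the statement that the Haagerup/column norm squared equals the trace-$p$ norm of the Gram-type operator, which is the defining feature of the $c_p$ column structure.

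\textbf{Step 3 (part ii) and the pure-state case).} For ii) I repeat the argument with no surviving reference leg on the output side: $\Phi(\rho^{A'}) = tr_{ABE\text{-complement}}$ collapses the $A$-systems into row indices, giving $\st_p(\Phi)\ten_h (H_A\ten H_{A'}\ten H_A)^r$. The pure-state remark is immediate: if $\rho = \ketbra{\xi}$ then $\xi$ is a vector (rank-one), the trailing row factors become one-dimensional, and $\eta$ lands in $H_A^{c_p}\ten_h \st_p(\Phi)$ respectively $\st_p(\Phi)\ten_h H_A^r$. \textbf{The main obstacle} I anticipate is bookkeeping the tensor-leg regrouping in Step 2 while keeping track of which legs carry $c_p$, $r$, or $r_p$ structure — in particular verifying that the asymmetric identity $H^{c_p}\ten_h K^r = S_{2p}(K,H)$ applies with the right Hilbert spaces after the little-Fubini splitting, so that the square and the shift from $p$ to $2p$ come out consistently. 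Once the leg-matching is pinned down, both identities follow from the completely isometric equalities in Section 4 with no further analysis.
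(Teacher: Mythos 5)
Your proposal is correct and follows essentially the same route as the paper's proof: reduce to the dilation via $(id_A\ten\Phi)(\rho)=tr_E(\eta\eta^*)$, regroup the tensor legs (the paper's ``shuffle'') so that $\eta$ sits in $H_A^{c_p}\ten_h \st_p(\Phi)\ten_h (H_A\ten H_{A'})^r$ inside $(H_A\ten H_B)^{c_p}\ten_h (H_E\ten H_A\ten H_{A'})^r$ via the little Fubini theorem and injectivity of the Haagerup tensor product, and conclude from $\|a\|_{S_{2p}(K,H)}^2=\|aa^*\|_{S_p(H)}$. The only difference is presentational: the paper carries out the leg-bookkeeping you defer in Step 2 explicitly, by expanding $\xi$ into rank-one terms and exhibiting the shuffled element $\hat\eta$ with $\hat\eta\hat\eta^*=(id_A\ten\Phi)(\rho)$ in coordinates.
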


\begin{proof} In this proof, it is important to track the position of vectors and covectors (column vectors and row vectors) in the tensor components. We may assume that $\Phi$ has Kraus operators $\Phi(\rho)=\sum_i x_i\rho x_i^*$, and $V=\sum_i x_i\ten e_{i,1}$. To specify the tensor components, we denote ${\xi=\sum_{j} |a_j^A\ran |h_j^{A'}\rangle \langle b_j^{A}|\lan k_j^{A'}|}$ where $|a_j^A\ran, |b_j^A\ran$ are vectors of $ H_{A}$ and $|h_j^{A'}\rangle, |b_j^{A'}\rangle$ vectors of $H_{A'}$. We use the ``little Fubini theorem'' \eqref{hcp}
 \begin{align*}
 & \eta \lel \sum_{i=1}^d (1_A\ten x_i)\xi^{AA'}\otimes e_{i,1}
  =\sum_{j,i} |a_j^A\rangle|x_i(h_j)^B\rangle  \ten \langle b_j^A,k_j^{A'}|\ten\ket{i^E}\\
 &\cong \hat{\eta}\equiv\sum_j |a_j^A\rangle \ten \bigg(\sum_i |x_i(h_j)^B\ran \ten \langle i^E|\bigg) \ten \langle k_j^A,b_j^{A'}|\ \ \ \ \ \ \ \text{ (shuffle)}\\
 & \in H_A^{c_p}\ten_h \st_p(\Phi)\ten_h (H_{A} \ten H_{A'})^{r}
 \subset(H_A\ten H_B)^{c_p}\ten_h (H_E\ten H_{A}\ten H_{A'})^{r} \pl ,
 \end{align*}
where in the second line above, we first change the role of $E$ system from column to row, and then switch between row vectors $\bra{i^E}$ and $\langle k_j^A,b_j^{A'}|$. This action is an identification and we get $\hat{\eta}\hat{\eta}^*=(id_A\ten \Phi)(\rho_{AA'})$. Now the first assertion follows from the fact ${\norm{a}{S_{2p}(K, H)}^2=\norm{aa^*}{S_p(H)}}$. For ii), we first note that
  \begin{align*} \|\Phi(\rho_{A'})\|_p
 \lel \|(tr_A\ten id_B)\circ\Phi(\rho^{AA'})\|_p
 \lel \|tr_A\ten id_B(\hat{\eta}\hat{\eta}^*)\|_p \pl.  \end{align*}
 The trace on $A$ make $H_A$ row vector to the right of $\st_p(\Phi)$. Namely,
 \begin{align*}
 & \eta \cong \tilde{\eta}\equiv\sum_j  \bigg(\sum_i |x_i(h_j)^B\ran \ten \langle i^E|\bigg) \ten \lan a_j^A|\ten \langle k_j^A,b_j^{A'}|\ \ \ \ \ \ \ \text{ (shuffle)}\\
 & \in  \st_p(\Phi)\ten_h (H_A \ten H_{A} \ten H_{A'})^{r}
 \subset H_B^{c_p}\ten_h (H_E\ten H_A\ten  H_{A}\ten H_{A'})^{r} \pl ,
 \end{align*}
When $\rho\in S_1(H_{A'})$ is a pure state, the right part $(H_{A}\ten H_{A'})^{r}$ become trivial, which yields the last assertion.\qd

Let us recall another definition from the theory of noncommutative vector-valued $L_p$ space. For an operator space $X$ we use
 \[ C_p^n(X) \lel C_p^n\ten_h X \quad, \quad R_p^n(X) \lel X\ten_{h} R_p^n \pl .\]
In particular, $R_n(X)=X\ten_h R_n$ are the rows for $X$. The space $\pl C_p^n(X)$ may be understood as the columns in the the vector-valued space $S_p^n(X)=C_p^n\ten_h X\ten_h R_p^n$. We define the row-column $p$-concavity for $X$ by
 \[ \crp_p(X) \lel \sup_n \|id_n\ten id_{X}:R_n(X)\to C_p^n(X) \| \pl .\]
 The next proposition provides the link between operator spaces structures and the ``one-shot'' expression $Q^{(1)}$.
\begin{prop}\label{rcq1}
For a channel $\Phi$, $Q_p^{(1)}(\Phi)=rc_p(st_p(\Phi))^2 $. \label{RC}
\end{prop}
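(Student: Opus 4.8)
The plan is to translate the ratio defining $Q_p^{(1)}(\Phi)$ into the norm of the reshuffling map appearing in the definition of $\crp_p$, using Lemma \ref{shuffle} as the dictionary. Since $Q_p^{(1)}(\Phi)$ is by definition a supremum over pure inputs $\rho^{AA'}=\ketbra{\xi}$, I would fix such a unit vector $\xi\in H_A\ten H_{A'}$, put $\eta=(1_A\ten V)\xi$ for the Stinespring isometry $V$, and apply the pure-state cases of Lemma \ref{shuffle} to obtain
\begin{align*}
\|(id_A\ten\Phi)(\rho^{AA'})\|_p=\|\eta\|^2_{H_A^{c_p}\ten_h \st_p(\Phi)}\ ,\qquad \|\Phi(\rho^{A'})\|_p=\|\eta\|^2_{\st_p(\Phi)\ten_h H_A^{r}}\ .
\end{align*}
Thus each term of the supremum becomes the square of the ratio of a column-type norm to a row-type norm of the single vector $\eta$, and the whole expression is scale invariant in $\eta$.

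Next I would pin down the range of $\eta$. As $V$ is an isometry of $H_{A'}$ onto $\st(\Phi)$, the map $1_A\ten V$ is a unitary of $H_A\ten H_{A'}$ onto $H_A\ten \st(\Phi)$; hence as $\xi$ runs over $H_A\ten H_{A'}$ the vector $\eta$ runs over all of $H_A\ten \st(\Phi)$, and as the reference system varies $\dim H_A=n$ runs over $\nz$. Writing $H_A\cong \ell_2^n$ gives the complete isometries $H_A^{c_p}=C_p^n$ and $H_A^{r}=R_n$, so the two operator space norms above are exactly $\|\eta\|_{C_p^n(\st_p(\Phi))}$ and $\|\eta\|_{R_n(\st_p(\Phi))}$.

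Finally I would recognize the assembled expression as the defining quantity of $\crp_p$. For each fixed $n$, taking the supremum over nonzero $\eta\in\ell_2^n\ten \st(\Phi)$ of $\|\eta\|_{C_p^n(\st_p(\Phi))}/\|\eta\|_{R_n(\st_p(\Phi))}$ is precisely the operator norm of $id_n\ten id_{\st_p(\Phi)}:R_n(\st_p(\Phi))\to C_p^n(\st_p(\Phi))$; taking the supremum over $n$ gives $\crp_p(\st_p(\Phi))$, and squaring then yields $Q_p^{(1)}(\Phi)=\crp_p(\st_p(\Phi))^2$.

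I expect the one genuinely delicate point to be the bookkeeping of the tensor legs. In the numerator the reference $H_A$ enters as a column on the left, while in the denominator it enters as a row on the right, so the map realizing the ratio is the identity on the $\st_p(\Phi)$ leg composed with the canonical identification $R_n\cong C_p^n$ (the passage $\langle a|\mapsto|a\rangle$ implicit in the ``shuffle'' of Lemma \ref{shuffle}) together with the flip $\ell_2^n\ten\st(\Phi)\cong\st(\Phi)\ten\ell_2^n$. Verifying that this composite is exactly the map $id_n\ten id_{\st_p(\Phi)}$ whose norm defines $\crp_p$, with no conjugation slipping in, is the crux; the restriction to pure inputs costs nothing, since every $\eta\in H_A\ten\st(\Phi)$ arises from a pure $\xi$ and the ratio is homogeneous.
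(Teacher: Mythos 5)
Your proposal is correct and follows the paper's own proof essentially verbatim: the paper likewise invokes the pure-state case of Lemma \ref{shuffle} to rewrite numerator and denominator as $\|\hat{\eta}\|^2_{H_A^{c_p}\ten_h\st_p(\Phi)}$ and $\|\tilde{\eta}\|^2_{\st_p(\Phi)\ten_h H_A^r}$, notes that $\eta$ ranges over all of $H_A\ten\st(\Phi)$, and identifies the resulting supremum with $\crp_p(\st_p(\Phi))^2$. Your extra care about the column/row leg bookkeeping (and that conjugation in the row entries is harmless since it permutes the set over which the supremum is taken) only makes explicit what the paper leaves implicit in the ``shuffle''.
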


\begin{proof} Use the definition, we have
\[Q_p^{(1)}(\Phi) = \sup_{\rho \mbox{ \tiny pure}} \frac{\|(id_A\ten \Phi)(\rho^{AA'})\|_p}{\| \Phi(\rho^{A'})\|_p}\pl = \sup_{\eta } \frac{\|\hat{\eta}\|_{H_A^{c_p}\ten_h\st_p(\Phi)}^2}{\|\tilde{\eta}\|_{\st_p(\Phi)\ten_h H_A^r}^2}= rc_p(st_p(\Phi))^2 \pl ,
\]
where the supremum runs over $\eta\in H_A\ten st(\Phi)$.
According to Lemma \ref{shuffle}, we know that a pure state $\rho$ corresponds to an element $\eta \in H_A\ten\st_p(\Phi)$.
\end{proof}
\begin{rem}{\rm  For a subspace $X\subset H^{c_p}\ten_h K^{r}$, it is easy to see that $\crp_p^2(X)$ is the smallest constant $C$ such that
\[\|\sum_k x_k^*x_k\|_p \kl C \pl \|\sum_k x_kx_k^*\|_p \]
holds for all finite sequences $(x_k)\in X$. Clearly, this is a measure of non-commutativity. }
\end{rem}
For the rest of this section, let us fix the notation $1\le p \le \infty ,\pl 1/p+1/p'=1$. We illustrate the row-column $p$-concavity on some  elementary examples.
\begin{exam}{\rm \label{ex1}Let $M_{m,d}=C_m\ten_h R_d$ be the $m\times d$ matrix space and $S_{2p}^{m,d}=C^m_p\ten_h R_d$. Then $\crp_{p}(S_{2p}^{m,d}) \lel m^{1/2p'}$. This implies that for the partial trace map $id_m\ten tr_d: M_m\ten M_d \to M_m$, $Q_p^{(1)}(id_m\ten tr_d)=m^{1/p'}$.}
\end{exam}
\begin{proof} We know the case $p=1$ is trivial, $\crp_{1}(X)=1$ for any operator space $X$. For $p=\infty$, we may consider
  \[ \xi \lel \sum_{1\le j\le n, \pl 1\le l \le m} e_{l,1}\ten   \xi_{l,j} \ten e_{1,j} \in C_{m}\ten_{h} R_{d}\ten_h R_n =M_{m,d}\otimes_h R_n\pl. \]
Then since $M_{m,d}\otimes_h R_n=M_{m,dn}$, we deduce that
  \begin{align*}\sup_{1\le l\le m} (\sum_{1\le j\le n} \|\xi_{l,j}\|_2^2)^{1/2}\kl \norm{\sum_{1\le l,l'\le m}(\sum_{1\le j\le n} (\xi_{l,j}^*\xi_{l,j'}))e_{l,l'}}{M_{m}}^{\frac{1}{2}} =\|\xi\|_{C_{m}\ten_h R_{dn}} \pl . \end{align*}
This implies
 \begin{align*}
 \|\sum_{j,\pl l} e_{j,1}\ten e_{l,1}\ten \xi_{l,j}\|^2_{C_n\ten_h C_m\ten _h R_d}
 = \|\sum_{j,\pl  l} \xi_{l,j}^*\xi_{l,j}\|
 \kl m\sup_{1\le l\le m}  \sum_{j} \|\xi_{l,j}\|^2
 \le m  \|\xi \|_{C_m\ten_h R_{dn}}^2 \pl .
 \end{align*}
Equality is obtained by looking at $n=m$,\
 $\xi=\sum_{l} e_{l,1}\ten e_{1,1}\ten e_{1,l}\in C_{m}\ten_h R_{d}\ten R_{n}$ which has norm $1$ and
 \begin{align}\label{nnorm}
  \|\sum_l e_{l,1}\ten e_{l,1}\ten e_{1,1}\|_{C_m\ten_h C_{m}\ten_h R_{d}}\lel \sqrt{m} \pl . \end{align}
Thus we have shown that $\crp_{\infty}(M_{m,d})=\sqrt{m}$. Since the subspace $C_p^m\ten_h R_d$ is complemented in $C_p^n\ten_h R_n$ ($m,d\le n$) with the same projection for all $1\le p\le \infty$, we apply interpolation \eqref{interpolation} and deduce $\crp_p(C_p^m\ten_h R_d)\le m^{1/2p'}$. The equality is obtained by same element as in \eqref{nnorm}. The last assertion follows from that $st_p(id_m\ten tr_d)=C_p^m\ten_h R_d$.
\end{proof}

\begin{exam}{\rm \label{ex2} Let $X_i\subset H_i^{c_p}\ten_h K_i^{r},\  1\le i\le m$ be a sequence of subspaces. Then the space
  \begin{align} \ell_{2p}\{X_i\} \lel \{\sum_{i=1}^m e_{i,1}\ten x_i\ten e_{1,i}| x_i\in X_i\} \subset (\ell_2\{H_i\})^{c_p}\ten_h (\ell_2\{K_i\})^{r} \end{align}
satisfies $\crp_p(\ell_{2p}^m\{X_i\})=\sup_ {i\le m}\crp_p(X_i)$. Moreover, given a finite sequence of quantum channel $\Phi_i: S_1(H_i)\to S_1(K_i)$, the direct sum channel $\oplus_i\Phi_i: S_1(\oplus_i H_i)\to S_1(\oplus_i K_i)$ satisfies $Q_p^{(1)}(\oplus_i\Phi_i)=\max_i Q_p^{(1)}(\Phi_i)$. By taking derivatives, we reproves the observation $${Q^{(1)}(\oplus_i\Phi_i)=\max_i Q^{(1)}(\Phi_i)}$$ in \cite{Wolf} via a different approach.}
\end{exam}
\begin{proof} Here we regard   $\ell_{2p}^m\{X_i\} \subset (\ell_2\{H_i\})^{c_p}\ten_h (\ell_2^m\{K_i\})^{r}$ as a block diagonal subspace. Thus $\crp_p(\ell_{2p}^m\{X_i\})\ge\sup_{i\le m}\crp_p(X_i)$ trivially holds. For the inverse inequality, let us first observe that
  \begin{align*} \|\sum_i e_{i,1}\ten x_i\ten e_{1,i}\|_{2p} \lel (\sum_j \|x_i\|_{2p}^{2p})^{1/2p} \pl . \end{align*}
This is obvious for $p=1$ and $p=\infty$ and then follows by interpolation (see also \cite{pvp,JP} for very similar/more general arguments). Now let $x=\sum_{i,l} e_{l,1}\ten e_{i,1}\ten x_{i,l} \ten e_{1,i}$, we find that
 \begin{align*}
 \| x\|_{R_n(\ell_{2p}\{X_i\})}
 &= \|\sum_{i\le m,l\le n} e_{i,1}\ten x_{i,l} \ten e_{1,i}\ten e_{1,l}\|
 \lel (\sum_{i=1}^{m} \| \sum_{l\le n} x_{i,l}\ten e_{1,l}\|^{2p}_{R_n(X)})^{1/2p} \\
 &\kl (\sum_{i=1}^{m} \crp_p(X_i) \| \sum_{l\le n}e_{l,1}\ten x_{i,l} \|^{2p}_{C_p^n(X)})^{1/2p} \\
 &\kl \sup_{1\le i\le m}\pl \crp_p(X_i) \|x\|_{C_p^n(\ell_{2p}^m\{X_i\})} \pl .
 \end{align*}
Here we used \eqref{hcp} $(\ell_2^m)^{c_p}\ten_h H^{c_p}\lel H^{c_p}\ten_h (\ell_2^m)^{c_p}$ and ${K^{r}\ten_h (\ell_2^m)^{r}= (\ell_2^m)^{r}\ten_h K^{r}}$. The last assertion follows from that the Stinespring space of direct sum channel is the direct sum of each Stinespring space. \end{proof}
\begin{exam} {\rm \label{ex3} Let $\Phi$ be channel and $\nen$. Then
 \[ \crp_p(\st_p(id_n\ten \Phi)) \lel n^{1/2p'}\st_p(\Phi) \pl .\]
 In particular, $Q_p^{(1)}(id_n\ten \Phi)=n^{1/2p'} Q_p^{(1)}(\Phi)$.}
\end{exam}

\begin{proof} Let $\Phi(\rho)=\sum_{k=1}^m x_k\rho x_k^*$ be a channel from $S_1(H_{A'})$ to $S_1(H_B)$. Then we see that
 \[ \st_p(id_n\ten \Phi)\lel \{\sum_{j=1}^n  e_j \ten x_k(h_j)\ten e_k | h_j\in H_{A'}\} \lel C_p^n\ten_h \st_p(\Phi) \pl .\]
Let us define $X=\st_p(\Phi)\ten_h R$, $Y=C_p\ten_h\st_p(\Phi)$ and the tensor flip map
 \[T: X\rra Y\ \ , \ T(\xi\ten h)=h\ten \xi \pl  \ \ \text{for}\pl  \xi\in \st_p(\Phi),\pl h\in l_2^n\pl .\]
 According to \cite{pvp}, we know  that
 \[ \|id_{C_p^n}\ten T: C_p^n\ten_h X\to C_p^n\ten_h Y\|
 \lel \|id_{R_n}\ten T: R_n \ten_h X\to R_n\ten_h Y\| \pl .\]
Moreover, using the little Fubini theorem \eqref{hcp} $C^n_p\ten_h C_p=C_p\ten_h C_p^n$ we see that
 \begin{align*} \crp_p(\st_p(id_n\ten \Phi))
 &\lel \|id_{R\to C^p}\ten id_{\st_p(id_n\ten \Phi)}: [C_p^n\ten_h \st_p(\Phi)]\ten_h R\to
 C_p\ten_h [C_p^n\ten_h \st_p(\Phi)]\|\\
 &\lel \|id_{C^n_p}\ten T: C_p^n\ten_h X\to C_p^n\ten_h Y\| \pl .\end{align*}
Then the first step we recall that the tensor flip map from
$R_n\ten_h X\to X\ten_h R_n$ is a contraction. Indeed, we have
 \[ R_n\ten_h X \subset R_n\ten_{\min}X \cong X\ten_{\min}R_n \lel X\ten_{h} R_n \pl.\]
The inclusion is completely contractive since the minimal tensor product is the smallest operator space tensor product norm \cite{Psbook}. Then we see that
 \[ \|id\ten T: R_n \ten_h \st_p(\Phi)\ten_h R\to C_p^n\ten_h C_p \ten_h \st_p(\Phi)\|\le  \crp_p(\Phi) \pl .\]
Finally, we have to replace $C_p^n$ by $R_n$ and use the fact that $\|id:C_p^n\to R_n\|_{cb}=n^{1/2p'}$,  which can be easily proved by interpolation. This implies
 \[ \|id_{R_n}\ten id_{X\to Y}: R_n \ten_h X\to R_n\ten_h Y\|\kl n^{1/2p'}\crp_p(\st_p(\Phi)) \]
and concludes the proof of the upper bound. The equality follows from tensor norm property
 \[ \|\sum_j x_j\ten x \ten y_j\|_{C_p^n\ten_h X\ten_h  R} \lel \|\sum_j x_j\ten y_j\|_{C_p^n\ten_h R} \|x\|_{X}  \pl,\]
which could be easily verified using the definition of Haagerup tensor product. \qd

The center of our analysis is a special class of completely positive and trace preserving maps, which in operator algebra literature are called conditional expectations. Let us recall the definition and some basic properties. (See again \cite{Tak} for a reference). For an inclusion $M\subset (N,tr)$ of semi-finite von Neumann algebras such that $tr|_{M}$ is still a semi-finite trace ($M$ admits enough positive elements with ${tr(x)<\infty}$),  the \emph{conditional expectation} from $M$ to $N$ is the unique completely positive unital and trace preserving map
$\E_M:N\to M$ such that
\begin{align}\label{proj} tr(\E(x)y) \lel tr(xy) \quad \mbox{for}\quad  x\in N\pl,\pl  y\in M \pl.\end{align}
In finite dimension we encounter several equivalent descriptions. We will assume that $M\subset M_m$ and $M'\subset M_m$ is the commutator. Then the unitary group $U(M')$ of $M'$ is a compact group and admits a Haar measure $\mu$. Let us consider the averaging map of unitary conjugation
\begin{align} \Phi(x)\lel \int_{U(M')} u^*xu d\mu(u) \pl\pl \text{for} \pl x\in M_m\pl. \label{phie}\end{align}
Certainly for all $y\in M$, $\Phi(y)=y$ and
  \[ tr(\Phi(x)y) \lel \int_{U(M')} tr(u^*xuy) d\mu(u)
  \lel \int_{U(M')} tr(xuyu^*) d\mu(u) \lel tr(xy) \pl.\]
Then by the definition \eqref{proj}, $E_M=\Phi$. Moreover, we see that $\E$ also defines a contraction on the space $L_2(M_m,tr)=S_2^m$, the matrix space equipped with Hilbert-Schmidt norm. Actually $\E$ is the unique orthogonal projection from $L_2(M_m,tr)$ to the subspace $L_2(M,tr)$ equipped with the induced trace. Recall that finite dimensional $C^*$-algebras are semi-simple and hence  we may assume that $M=\oplus_k M_{n_k}$ is a direct sum of matrix algebras. The projection $P_k\in M_m$ onto the each blocks $M_{n_k}$ are mutually orthogonal and form a von Neumann measurement. Moreover, the embedding of ${M_{n_k}\subset P_kM_mP_k=M_{n_km_k}}$ has a certain multiplicity $m_k$. This means the inclusion $M\subset M_m$ is given by
 \[ M\cong \oplus_k( M_{n_k}\ten 1_{m_k}) \subset M_m \pl .\]
The induced trace has to be given by $tr((x_k)_k)=tr(\oplus_k (x_k\ten 1_{M_{m_k}}))=\sum_k m_ktr(x_k)$. Then the conditional expectation has a concrete expression $\E_M=\oplus_k (id_{n_k}\ten tr_{m_k})$. In other words, the conditional expectation is always a direct sum of partial traces, depending on the matrix block and multiplicity of $M$. Let us introduce the following notation: for a finite dimensional von Neumann algebra $M\cong \oplus_k M_{n_k}$, we denote
\[d_M=\text{the size of the largest diagonal block}=\max_k n_k\pl.\]
By the $Q^{(1)}$ formula of direct sum channels in \cite{Wolf}, it is immediate to see that for any conditional expectation $\E_M: M_m \to M$,
\[Q^{(1)}(\E_M)=Q(\E_M)=Q^{(p)}(M)=\ln d_M\pl.\]
Here we reprove the above statement by calculating the row-column $p$-concavity.

\begin{prop}\label{apropiori}  Let $M=\oplus_k (M_{n_k}\otimes 1_{M_{m_k}})\subset M_m$ be a von Neumann subalgebra, and ${\E_M:M_m\to M}$ be the conditional expectation. Then
\[Q^{(1)}_p(\E_M)=d_M^{1/p'}\pl, \pl Q^{(1)}_p(\E_M\ten \Psi)\lel \pl d_M^{1/p'}Q_p^{(1)}(\Psi)\pl ,\]
for any channel $\Psi$. This implies $Q^{(1)}(\E_M)=Q(\E_M)= Q^{(p)}(\E_M) \lel \ln d_M \pl .$
 \end{prop}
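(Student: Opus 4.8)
The plan is to reduce the computation of $Q_p^{(1)}(\E_M)$ to the elementary examples already established, using the structural facts that a conditional expectation onto $M=\oplus_k(M_{n_k}\otimes 1_{m_k})$ is a direct sum of partial traces, $\E_M=\oplus_k(id_{n_k}\ten tr_{m_k})$, and that $Q_p^{(1)}$ of a channel equals $\crp_p$ of its Stinespring space squared (Proposition \ref{rcq1}). First I would identify the Stinespring space of $\E_M$. Since $\E_M$ is a direct sum over $k$ of the partial trace maps $id_{n_k}\ten tr_{m_k}$, Example \ref{ex2} tells us that $\st_p(\E_M)$ is the $\ell_{2p}$-direct sum $\ell_{2p}\{\st_p(id_{n_k}\ten tr_{m_k})\}$, and hence
\[\crp_p(\st_p(\E_M))=\max_k \crp_p(\st_p(id_{n_k}\ten tr_{m_k}))\pl.\]

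Next I would compute each block. By Example \ref{ex1}, the partial trace $id_{n_k}\ten tr_{m_k}:M_{n_k}\ten M_{m_k}\to M_{n_k}$ has Stinespring space $\st_p(id_{n_k}\ten tr_{m_k})=C_p^{n_k}\ten_h R_{m_k}=S_{2p}^{n_k,m_k}$, whose row-column $p$-concavity is $\crp_p(S_{2p}^{n_k,m_k})=n_k^{1/2p'}$, independent of the multiplicity $m_k$. Taking the maximum over $k$ gives $\crp_p(\st_p(\E_M))=\max_k n_k^{1/2p'}=d_M^{1/2p'}$, and squaring via Proposition \ref{rcq1} yields $Q_p^{(1)}(\E_M)=d_M^{1/p'}$. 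For the tensor statement, I would combine the tensorization behavior from Example \ref{ex3}, which shows $\crp_p$ picks up a factor $n^{1/2p'}$ under tensoring with $id_n$, with the direct-sum structure: writing $\E_M\ten\Psi=\oplus_k(id_{n_k}\ten tr_{m_k}\ten\Psi)$ and applying Example \ref{ex2} again reduces matters to the single block $id_{n_k}\ten(tr_{m_k}\ten\Psi)$. The key point is that tensoring the partial trace factor through on the reference side yields the multiplicative identity $Q_p^{(1)}(\E_M\ten\Psi)=d_M^{1/p'}Q_p^{(1)}(\Psi)$, exactly the multiplicativity already seen for a single partial trace in Examples \ref{ex1} and \ref{ex3}.

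Finally, to obtain the entropy-level statement I would differentiate at $p=1$. Since $Q^{(1)}(\Phi)=\lim_{p\to1^+}\frac{1}{p-1}(Q_p^{(1)}(\Phi)-1)$ by Lemma \ref{differ}, and $\frac{d}{dp}d_M^{1/p'}\big|_{p=1}=\ln d_M$ (because $1/p'=1-1/p$ has derivative $1/p^2=1$ at $p=1$), differentiating $Q_p^{(1)}(\E_M)=d_M^{1/p'}$ gives $Q^{(1)}(\E_M)=\ln d_M$. The multiplicativity $Q_p^{(1)}(\E_M\ten\Psi)=d_M^{1/p'}Q_p^{(1)}(\Psi)$ forces additivity of $Q^{(1)}$ under tensor products with $\E_M$, which upgrades the one-shot value to the regularized value $Q(\E_M)=\ln d_M$ and to the potential capacity $Q^{(p)}(\E_M)=\ln d_M$.

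The main obstacle I anticipate is the tensor formula $Q_p^{(1)}(\E_M\ten\Psi)=d_M^{1/p'}Q_p^{(1)}(\Psi)$: one must track carefully how the row-column flip in Example \ref{ex3} interacts with the direct-sum decomposition and with the arbitrary channel $\Psi$, ensuring that the extra tensor factor $\Psi$ does not disturb the block-diagonal structure that lets Example \ref{ex2} apply. Establishing that $\st_p(\E_M\ten\Psi)$ still decomposes as an $\ell_{2p}$-sum of spaces of the form $C_p^{n_k}\ten_h\st_p(tr_{m_k}\ten\Psi)$, and that the concavity constant of each such space factors as $n_k^{1/2p'}\crp_p(\st_p(\Psi))$, is the delicate step; the remaining derivative computations and the passage to $Q$ and $Q^{(p)}$ are routine once the $p$-level multiplicativity is in hand.
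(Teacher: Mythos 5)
Your proposal is correct and follows essentially the same route as the paper's proof: it computes $Q^{(1)}_p(\E_M)$ by combining Proposition \ref{rcq1} with Example \ref{ex1} (partial traces) and Example \ref{ex2} (direct sums), handles $\E_M\ten\Psi$ via the block-diagonal structure together with Example \ref{ex3}, and then differentiates at $p=1$ via Lemma \ref{differ} to pass from the $p$-level multiplicativity to the additivity statements for $Q^{(1)}$, $Q$, and $Q^{(p)}$. The ``delicate step'' you flag — absorbing the trace factor $tr_{m_k}$ so that $\crp_p(\st_p(id_{n_k}\ten tr_{m_k}\ten\Psi))=n_k^{1/2p'}\crp_p(\st_p(\Psi))$ — is exactly the point the paper also treats briefly, by noting that the trace factor only contributes a row component to the Stinespring space, which leaves the row-column concavity unchanged.
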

\begin{proof} The first equality follows easily from Example \ref{ex1} and \ref{ex2}. Now we consider an additional channel $\Psi:S_1(H_{A''})\to S_1(H_{B''})$. Then $\E_M\ten \Psi$ is still block-diagonal, and hence we can combine Example \ref{ex2} and \ref{ex3} to deduce that
 \begin{align*} \crp_p(\st_p(\E_M\ten \Psi)) &\lel \max_k\pl \crp_p(\st_p(id_{n_k}\ten \tilde{tr}_{m_k}\ten \Psi))\\
 &\lel \max_k \pl\crp_p(\st_p(id_{n_k}\ten \Psi))
 \lel d_M^{1/2p'}\crp_p(\st_p(\Psi) )\pl .\end{align*}
Here we used that the output state can be changed via an isometry in the  Stinespring space. By Proposition \ref{rcq1}, we have  \[Q_p^{(1)}(\E_M\ten \Psi)=d_M^{1/p'}Q_p^{(1)}(\Psi)\ ,\  Q^{(1)}(\E_M\ten \Psi)= \ln d_M+Q^{(1)}(\Psi)\pl,\]
which completes the proof.\qd

\section{The Comparison Theorem}
\subsection{The standard form of a von Neumann algebra}
Let $M$ be a von Neumann algebra equipped with a normal faithful trace $tr$,
the GNS construction with respect to the trace $tr$ consists of the Hilbert space $L_2(M,tr)$ obtained of the completion of $M$
with respect to the norm
$\|x\|_2=tr(x^*x)^{1/2}$. The symbol  ``$tr$'' in $L_2(M,tr)$ will be frequently omitted if it is clear from the context.
We will always distinguish operators $x\in M$ from their corresponding vectors $|x\ran\in L_2(M,tr)$. If $tr$ is faithful and $M$ is finite dimensional, then  $L_2(M)$ and $M$ are really the same set.  The distinction is nevertheless meaningful, and necessary in infinite dimension. We will denote the GNS representation of a normal faithful trace by $\la$, namely
\begin{align*}\la : M\rra B(L_2(M, tr))\quad, \quad
 \la(x)|y\ran \lel x|y\ran \lel |xy\ran \pl .\end{align*}
Note that $\la$ is injective since $tr$ is faithful. We will also frequently omit ``$\la$'' and simply write $``x|y\ran''$. A key part of the GNS-construction is the anti-linear isometric involution $J_M(|x\ran)=|x^*\ran$ which relates $M$ and its commutant $M'$ in $B(L_2(M))$
 \begin{align*} J_M\la(M)J_M \lel \la(M)'=\{T \in B(L_2(M,tr)\pl|\pl \forall {x\in M}\ ,\pl T\la(x)=\la(x)T \} \pl .\end{align*}
Indeed, let us observe that
 \begin{align} J_Mx^*J_My|z\ran \lel J_Mx^*J_M|yz\ran
 \lel J_M|x^*(z^*y^*)\ran
 \lel |yzx\ran \lel yJ_Mx^*J_M|z\ran \pl .\label{commu}\end{align}
In other words the inclusion $J_MMJ_M\subset M'$ is trivial. The converse inclusion can be found in any standard reference on operator algebra (e.g. \cite{Tak}). The formula
 \begin{equation} \label{xy}
  J_My^*J_M|x\ran \lel |xy\ran \lel x|y\ran
  \end{equation}
will be frequently used. We extend the bracket notation from $M$ to $B(L_2(M))$ as follows
\begin{align*}\iota : B(L_2(M))\to L_2(M)\pl,\ \ \iota(x)=x\ket{1}:=|x\ran \pl ,\end{align*}
and also its dual version
\begin{align*}\bar{\iota} : B(L_2(M))\rra L_2(M)^*,\ \ \bar{\iota}(x)=\bra{1}x:=\bra{x^*} \pl .\end{align*}
In particular, for $x'=J_Mx^*J_M\in M'$ we obtain
$\displaystyle |x'\ran \lel J_Mx^*J_M|1\ran \lel |x\ran\pl.$
\begin{exam}\label{GNS}{\rm The most elementary example is $(M_n,tr)$, the matrix algebra and its full trace $tr(1)=n$. Its GNS construction gives a natural embedding of $M_n$ into $M_n\otimes M_n$ satisfying
\begin{align*}
L_2(M_n, tr)\cong l_2^n\otimes_2 l_2^n=S_2^n &\ \ \ \ \lambda: M_n \ \rra B(l_2^n\otimes l_2^n)\cong M_n\otimes M_n\\
\ket{e_{ij}} \rra e_i\otimes e_j\ ,\ \ \ \ \ \ \ \ \ \ \ \ \ &\ \ \ \ \ \ \lambda(a)=a\otimes 1.
\end{align*}Here $S_2^n$ is the matrix space equipped with the Hilbert-Schmidt norm.
The operator $J$ in this case is
\[J(e_i\ten e_j)=J\ket{e_{ij}}=\ket{e_{ji}}=e_j\ten e_i,\ \ \ J(a\ten 1)J=1\ten \bar{a}\pl,\]
where $\bar{a}$ is the entry-wise complex conjugation of matrix $a$.
}\end{exam}
Let us recall Haagerup's definition of the standard form of a von Neumann algebra.
\begin{defi} Given a von Neumann algebra $M\subset B(H)$, a quadruple $\{M,H,J,H_+\}$ given by a unitary involution $J$, a self-dual cone $H_+$ in $H$ is said to be a standard form for $M$ if $\text{\rm i)}\ JMJ=M';\ \text{\rm ii)}\ JaJ=a^*,\ {a\in M\cap M'} ;\ \text{\rm iii)}{\ Jh=h, \pl h \in H_+;}$  {\rm iv)} ${aJaJH_+\subset H_+,\  a\in M.}$
 \end{defi}

For finite dimensional $M$ with a faithful trace $tr$, $(M, L_2(M, tr),J_M,L_2(M_+))$ is \emph{the} canonical standard form of $M$,  since all standard forms of $M$ are unitarily equivalent. We say that an inclusion $M\subset M_m$ is \emph{standard} if it is unitarily equivalent to
GNS representation of the induced trace $tr$. We refer to \cite{Haastandard} and \cite{Tak} for more information about standard forms.

Let $U\in M_m\ten N$ be an unitary and $\theta_f: S_1^m \to S_1^m$ be an VN-channel via
\begin{align}\theta_f(\rho)= id\ten\tau(U(\rho\ten f)U^*)\pl .\label{channel}\end{align}
We consider the following conditions on $N$ and $U$:
\begin{enumerate}
\item[C1)] There exists a standard inclusion $M\subset M_m$ of a $*$-subalgebra $M$;
\item[C2)] $U$ admits a tensor representation $U=\sum_i x_i\ten y_i$ with $x_i\in M'$, $y_i\in N$;
\item[C3)] The operator $B=\sum_i |x_i\ran\ten \lan y_i^*|\in B(L_2(N,\tau), L_2(M,tr))$ satisfies $BB^*= id_{L_2(M)}$;
\item[C4)] There exists a scalar $\mu>0$ such that $B^*B=\mu id_{L_2(N)}$.
\end{enumerate}

Choosing a basis in $M'\cong M$, we may then always write every element $U\in M'\ten N$ as $U=\sum_i x_i\ten y_i$ with $x_i\in M'$, $y_i\in N$. Hence the operator $B$ is uniquely determined by $U$. Using these operators we find an even more explicit form of a VN-channel
 \begin{align} \theta_f(\rho)\lel \sum_{i,j} \tau(y_ify_j^*) x_i\rho x_j^* \pl \label{channelmap2}.\end{align}
By unitary equivalence of standard forms, we may and will assume that $\theta_f$ is from $S_1(L_2(M))$ to itself, namely $H_{A'}=H_B=L_2(M)$. The following lemma characterizes the Stinespring space of $\theta_f$.

\begin{lemma}\label{eleprop} Assume {\rm C1), C2)} and {\rm C3)}.
Let $f$ be a density and $\theta_f$ be the corresponding VN-channel. Let $V_f\in B(L_2(M),L_2(M)\otimes L_2(N))$ be defined by  $V_f(h)=\sum_i |x_i(h)\ran \ten \ket{ y_i\sqrt{f}}$. Then
 \begin{enumerate}
 \item[i)] $V_f$ is the partial isometry of $\theta_f$ such that $V_f^*V_f=id_{L_2(M)}$ and
 \begin{align*}\theta_f(\rho)=id\otimes tr (V_f\rho V_f^*)\pl ;\end{align*}
\item[ii)]The Stinespring space of $\theta_f$ is given by
   \begin{align*}\st(\theta_f)=\{V_f(h)| h\in L_2(M)\}=(M\ten J_N\sqrt{f}J_N)(\sum_i |x_i\ran\ten |y_i\ran)\pl;\end{align*}
\item[iii)] Let $\sigma: L_2(N)\to L_2(N)^*$ be the isometry  given by $\sigma(\ket{a})=\bra{a^*}$. Then \[(id\ten \sigma)\st(\theta_f)=MB\sqrt{f}\pl .\]
\end{enumerate}
\end{lemma}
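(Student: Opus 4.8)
The plan is to verify all three assertions by direct bra--ket computation, reducing everything to the trace identities encoded in the unitarity of $U$ and in the relation \eqref{xy} (applied to $N$).

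For i), I would first expand $V_f^*V_f$. Writing $V_f=\sum_i x_i\ten|y_i\sqrt{f}\ran$, with $|y_i\sqrt{f}\ran$ regarded as a ket $\cz\to L_2(N)$, one gets $V_f^*V_f=\sum_{i,j}\lan y_j\sqrt{f}|y_i\sqrt{f}\ran\,x_j^*x_i=\sum_{i,j}\tau(y_j^*y_if)\,x_j^*x_i$, using $\lan y_j\sqrt{f}|y_i\sqrt{f}\ran=\tau(\sqrt{f}\,y_j^*y_i\sqrt{f})=\tau(y_j^*y_if)$ by trace cyclicity. The key observation is that this is precisely the slice of the unitarity relation $U^*U=\sum_{i,j}x_j^*x_i\ten y_j^*y_i=1\ten1$ under the normal state $\omega_f(z)=\tau(zf)$ on $N$; since $f$ is a density, $\omega_f(1)=\tau(f)=1$, so $V_f^*V_f=(id\ten\omega_f)(1\ten1)=1_{L_2(M)}$. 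The channel formula follows by the same expansion, $id\ten tr(V_f\rho V_f^*)=\sum_{i,j}\tau(y_j^*y_if)\,x_i\rho x_j^*$, which matches \eqref{channelmap2} after applying $\tau(y_ify_j^*)=\tau(y_j^*y_if)$.

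For ii), the Stinespring space is by definition the range of $V_f$. I would rewrite the second tensor leg using \eqref{xy}: since $\sqrt{f}$ is self-adjoint, $J_N\sqrt{f}J_N|y_i\ran=|y_i\sqrt{f}\ran$, so the fixed operator $J_N\sqrt{f}J_N\in N'$ pulls the $\sqrt{f}$ out of each ket. Then, using the finite-dimensional identification $L_2(M)=M$, every $h\in L_2(M)$ is $h=a|1\ran$ for a unique $a\in M$; because $x_i\in M'$ commutes with $\la(a)$, we have $x_ih=a|x_i\ran$. Combining these two moves gives $V_f(h)=(a\ten J_N\sqrt{f}J_N)\sum_i|x_i\ran\ten|y_i\ran$, and letting $a$ range over $M$ yields exactly $(M\ten J_N\sqrt{f}J_N)(\sum_i|x_i\ran\ten|y_i\ran)$.

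For iii), I would apply $id\ten\si$ to the description from ii), identifying $L_2(M)\ten L_2(N)^*$ with $B(L_2(N),L_2(M))$ via $|\psi\ran\ten\lan\phi|\mapsto|\psi\ran\lan\phi|$. Since $\si|y_i\sqrt{f}\ran=\lan(y_i\sqrt{f})^*|=\lan\sqrt{f}\,y_i^*|$, a typical element becomes $\sum_i a|x_i\ran\lan\sqrt{f}\,y_i^*|$. The last step is to recognize this as $aB\sqrt{f}$: using $\lan\sqrt{f}\,y_i^*|=\lan y_i^*|\la_N(\sqrt{f})$ (again trace cyclicity) together with $B=\sum_i|x_i\ran\ten\lan y_i^*|$, the sum collapses to $a\big(\sum_i|x_i\ran\lan y_i^*|\big)\la_N(\sqrt{f})=aB\sqrt{f}$, and taking the union over $a\in M$ gives $MB\sqrt{f}$. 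The routine content throughout is trace bookkeeping; the step I expect to be the main obstacle is keeping the column/row (ket/bra) roles and the conjugation $J_N$ straight in ii) and iii), so that $\sqrt{f}$ lands on the correct side and the image under $id\ten\si$ is identified with the concrete operator subspace $MB\sqrt{f}\subset B(L_2(N),L_2(M))$ rather than its transpose or conjugate. This is the same ``shuffle'' bookkeeping encountered in Lemma \ref{shuffle}, and getting it right is what validates the later passage to the operator-space structure $\st_p(\theta_f)$.
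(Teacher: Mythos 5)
Your proposal is correct and follows essentially the same route as the paper's proof: a direct bra--ket computation of $V_f$, the identity $J_N\sqrt{f}J_N|y_i\ran=|y_i\sqrt{f}\ran$ from \eqref{xy}, the commutation of $x_i\in M'$ with the left action of $M$ to pull $h=a|1\ran$ out in ii), and the identification $\lan y_i^*|\sqrt{f}=\lan \sqrt{f}y_i^*|$ in iii). The only cosmetic difference is in i), where you get $V_f^*V_f=id_{L_2(M)}$ by slicing $U^*U=1\ten 1$ with the state $\tau(\pl\cdot f)$, whereas the paper first establishes the dilation formula on rank-one matrices and then deduces the isometry property from trace preservation of $\theta_f$ --- both arguments rest on exactly the unitarity of $U$.
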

\begin{proof} We will denote full traces of $B(L_2(M))$ and $B(L_2(N))$ as ``$tr$''. For i), we start with the second identity. Indeed, using the fact that $\tau$ is a trace we find for $h,k\in M$
 \begin{align*}
   \theta_f(|h\ran \lan k|)&= \sum_{i,j}\tau(\sqrt{f}y_j^*y_i\sqrt{f}) |x_ih\ran \lan x_jk|  \lel
   \sum_{i,j} tr(|y_i\sqrt{f}\ran \lan y_j\sqrt{f}|) |x_ih\ran \lan x_jk|\\
   &=  id\ten tr(|V_f(h)\ran \lan V_f(k)|) \pl .
   \end{align*}
Since $\theta_f$ is obviously trace preserving, we deduce that $V_f$ is a partial isometry by taking traces. Indeed,
 \[ \lan V_f(h)|V_f(k)\ran \lel tr\ten tr(|V_f(h)\ran \lan V_f(k)|) \lel tr(\theta_f(|h\ran \lan k|))=\lan h|k\ran \pl.\]
The first equality of ii) follows from i). Now choose $x_i'\in M$ such that $x_i=J(x_i')^*J\in M'$,
 \begin{align*} x_i|h\ran=J(x_i')^*J|h\ran=|hx_i'\ran=h|x_i'\ran=h|x_i\ran \pl .\end{align*}
Together with $J_N\sqrt{f}J_N(|y_i\ran)=|y_i\sqrt{f}\ran$ this proves ii). Moreover, iii) follows from that for $\ket{h}\in L_2(M)$
 \begin{align*}(id\ten \sigma)(V_f\ket{h})=(id\ten \sigma) (\sum_i |x_i(h)\ran \ket{y_i\sqrt{f}})=\sum_i |x_i(h)\ran \lan \sqrt{f}y_i^*|= hB\sqrt{f}\pl. \pll\pll\pll\pll\pll 
 \qedhere   \end{align*}
\end{proof}

\subsection{Proof of Theorem \ref{comp}} The proof of the Comparison Theorem is divided into several pieces. Our first observation is based on the different descriptions of conditional expectations.
\begin{lemma}\label{oproj} \label{ortoproj} Let $H,K$ be finite dimensional Hilbert spaces. Let $M\subset B(H)$ be a $*$-subalgebra. Then
\begin{enumerate}
\item[i)] the conditional expectation $\E_M$ is completely contractive from $H^{c_p}\ten_h H^r$ onto $M$ for all $1\le p\le \infty$;
\item[ii)] let $B\in B(H,K)$ be a partial isometry such that $BB^*=id_K$. Then the orthogonal projection from $H\ten_2 K$ onto $MB$ is a complete contraction on $H^{c_p}\ten_h K^r$ for all ${1\le p\le \infty}$.
\end{enumerate}
\end{lemma}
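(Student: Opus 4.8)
The plan is to establish i) at the two endpoints $p=\infty$ and $p=1$ and then interpolate to get the full range $1\le p\le\infty$, after which ii) follows formally by factoring the orthogonal projection onto $MB$ through $\E_M$. For i), I would start from Kouba's formula, which gives $H^{c_p}\ten_h H^r=(H^c\ten_h H^r,\,H^r\ten_h H^r)_{1/p}$, since $(H^c,H^r)_{1/p}=H^{c_p}$ while the row factor is left fixed. At $p=\infty$ the space $H^c\ten_h H^r=B(H)$ carries its usual operator space structure, and $\E_M$ is a unital completely positive map, hence a complete contraction onto $M$. At $p=1$ we have $H^{c_1}\ten_h H^r=H^r\ten_h H^r=(H\ten H)^r$, a row Hilbert space that as a Banach space is just $S_2(H)=L_2(B(H),tr)$; on it $\E_M$ acts as the orthogonal projection onto $L_2(M)$ from Section 5, so its operator norm is one, and since for row spaces the completely bounded norm coincides with the ordinary operator norm, $\E_M$ is again a complete contraction.

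To pass to general $p$ I would exploit that $\E_M$ is the \emph{same} linear idempotent at both endpoints. Together with the (completely isometric) inclusion $M\hookrightarrow H^{c\,\cdot}\ten_h H^r$ it provides a retraction/coretraction pair on the compatible couple $(B(H),\,H^r\ten_h H^r)$, so $M$ is a $1$-complemented subspace at both ends. Complemented subspaces are stable under complex interpolation, hence the interpolated range is exactly $M$ equipped with the structure \emph{induced} from $H^{c_p}\ten_h H^r$, and Stein's theorem (Theorem \ref{stein}) applied at every matrix level yields $\|\E_M\|_{cb}\le 1^{1-1/p}1^{1/p}=1$ from $H^{c_p}\ten_h H^r$ onto $M$. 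This is i).

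For ii), the coisometry relation $BB^*=\mathrm{id}$ lets me write the orthogonal projection onto $MB$ as the composition $P=R_B\circ\E_M\circ R_{B^*}$, where $R_{B^*}(\xi)=\xi B^*$ and $R_B(m)=mB$ are right multiplications. Then $P(mB)=\E_M(mBB^*)B=\E_M(m)B=mB$, so $P$ fixes $MB$ and maps into $MB$, while a short computation using the $L_2$-self-adjointness of $\E_M$, i.e.\ $tr((\E_M a)^*c)=tr(a^*\E_M c)$, shows $P=P^*$; hence $P$ is the orthogonal projection. Each factor is a complete contraction: $R_{B^*}$ and $R_B$ act only on the row tensor factor, by the contraction $B$ respectively $B^*$, so they are completely contractive by functoriality of the Haagerup tensor product together with the fact that $CB$ of a row space is the ordinary operator norm, and $\E_M$ is a complete contraction by i). Composing gives $\|P\|_{cb}\le 1$ on $H^{c_p}\ten_h K^r$.

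The main obstacle is the $p=1$ endpoint of i) together with the bookkeeping for the target structure under interpolation: one must identify $H^r\ten_h H^r$ as a row Hilbert space on which $\E_M$ is literally the orthogonal $L_2$-projection of norm one, and then verify that interpolating the \emph{range} reproduces $M$ with its induced $c_p$-structure rather than some a priori different structure; this is precisely what the complementation/retraction argument secures, and it is what makes the factorization in ii) typecheck (the column factor $H^{c_p}$ must be preserved, so $M$ cannot carry the naive $p=\infty$ structure). Once i) is in place, ii) is essentially formal, the only genuine verification being the self-adjointness of $P$.
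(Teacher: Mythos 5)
Your proof is correct, and for part i) it coincides with the paper's argument: complete positivity gives the $p=\infty$ endpoint on $H^c\ten_h H^r=B(H)$, homogeneity of the row space $H^r\ten_h H^r=(H\ten_2 H)^r$ (on which $\E_M$ is the orthogonal $L_2$-projection, hence a contraction, hence a complete contraction) gives the $p=1$ endpoint, and Kouba's formula plus matrix-level interpolation covers all $1\le p\le\infty$. Your extra retraction/complementation digression is harmless but unnecessary: the lemma only requires viewing $\E_M$ as an idempotent map of $H^{c_p}\ten_h H^r$ into itself whose range is the subset $M$, so no identification of an ``interpolated range'' is needed. Where you genuinely diverge is part ii). The paper runs the endpoint-plus-interpolation scheme a second time: at $p=1$ it uses that $P_{MB}$ is an orthogonal projection, hence completely contractive on the row space $H^r\ten_h K^r$, at $p=\infty$ it uses the factorization $P_{MB}=R_B\circ\E_M\circ R_{B^*}$ with right multiplications completely contractive on $H^c\ten_h K^r$, and then interpolates again. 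You instead apply that same factorization at every fixed $p$, feeding in part i) at that $p$ together with the observation that $R_{B^*}$ and $R_B$ act as $id\ten r$ on the row leg, where $r$ is a contraction between row Hilbert spaces and hence completely contractive by homogeneity; functoriality of $\ten_h$ then finishes. This makes ii) a formal corollary of i) with no second interpolation, which is a slight streamlining; the price is the (correct, but needed) identification of the right multiplications as maps of the form $id\ten r$ on the row factor for general $p$, which the paper only needs at $p=\infty$. Finally, you identify $P_{MB}$ with $\E_M(\cdot\, B^*)B$ by checking idempotence, range, and self-adjointness via the trace property of $\E_M$, whereas the paper invokes uniqueness of the orthogonal projection given a contractive idempotent onto $MB$; both verifications are sound.
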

\begin{proof} The conditional expectation  $\E_M:B(H)\to M$ is completely positive and unital, and  hence completely contractive on $B(H)=H^c\ten_h H^r$. According to \eqref{phie}, we know that $\E_M$ is also a contraction, and by homogeneity of $(H^r\ten_h H^r)=(H\ten_2H)^r$ even a complete contraction for $p=1$.  Then the first assertion follows from interpolation \[{H^{c_p}\ten_h H^r=[H^c\ten_h H^r,H^r\ten_h H^r]_{1/p}}\pl .\] For the second assertion we observe that
the orthogonal projection $P_{MB}$ from $H\ten_2 K$ onto $MB$ can be factorized as $P_{MB}(T)=\E_M(TB^*)B$. Indeed, $T\mapsto \E_M(TB^*)B$ is contractive and satisfies  $\E_M(yBB^*)B=yB$ for $y\in M$. By uniqueness of the orthogonal projection we get $P_{MB}(\cdot)=\E_M(\cdot \pl B^*)B$. Since $P_{MB}$ is an orthogonal projection, it is completely contractive on $H^r\ten_h K^r$ (when $p=1$). For $p=\infty$ we note that right multiplication $R_a(x)=xa$
is completely contractive for any contraction $a$. In particular, ${P_{MB}=R_B\circ\mathcal{E}\circ R_{B^*}}$ is completely contractive on $H^{c}\ten_h K^r$. Again interpolation yields the assertion.
\qd

In Lemma \ref{eleprop}, we calculated the Stinespring spaces of $\theta_f$ for a given density $f$. We may formally extend the definition for arbitrary  $a\in N$ as follows
 \begin{align*} \st(a)=U(L_2(M)\ten\ket{a}) \lel \{\sum_i |x_i(h)\rangle | y_ia\ran \pl|\pl h\in M \}
  \subset L_2(M)\ten L_2(N) \pl.
\end{align*}
If we want to emphasize the operator space structure, we denote
 \begin{align*}\st_p(a)\lel MBa \lel \{\sum_i |x_i(h)\rangle \lan a^*y_i^*|  \pl|\pl h\in M \} \subset   L_2^{c_p}(M)\ten_h L^r_2(N) \pl. \end{align*}
\begin{lemma}\label{spemap} Assume {\rm C1), C2)} and {\rm C3)}.
Let $a_1,a_2$ be unitaries in $N$. Then the map
 \[\Phi_{a_1,a_2} \lel U(id\ten |a_1\ran \lan a_2|)U^*\pl .\]
is a complete contraction on $L_2^{c_p}(M)\ten_h L_2^{r}(N)$ for all $1\le p\le \infty$.
\end{lemma}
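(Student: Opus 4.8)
The plan is to pass to the operator picture via the isometry $id\ten\sigma$ of Lemma~\ref{eleprop}~iii), under which $L_2(M)\ten L_2(N)$ becomes $L_2^{c_p}(M)\ten_h L_2^r(N)=S_{2p}(L_2(N),L_2(M))$ and each Stinespring space $\st(a)$ becomes $MBa$. The key observation is that the \emph{diagonal} case $a_1=a_2=a$ is already understood: since $a$ is unitary, $\lan a,a\ran=\tau(a^*a)=1$, so $|a\ran\lan a|$ is a rank-one orthogonal projection on $L_2(N)$ and $\Phi_{a,a}=U(id\ten|a\ran\lan a|)U^*$ is the orthogonal projection onto $\st(a)$. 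Because $id\ten\sigma$ is a Hilbert-space isometry, it carries this orthogonal projection to the orthogonal projection onto $MBa$; and since $(Ba)(Ba)^*=Baa^*B^*=BB^*=id_{L_2(M)}$ (using $aa^*=1$ and C3), Lemma~\ref{oproj}~ii) shows this is a complete contraction on $L_2^{c_p}(M)\ten_h L_2^r(N)$ for all $1\le p\le\infty$.

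For general unitaries $a_1,a_2$ the rank-one operator $|a_1\ran\lan a_2|$ is no longer a projection, so I would reduce it to the diagonal case by a unitary twist. Setting $w=a_1a_2^*\in N$ (unitary), one has $\la_N(w)|a_2\ran=|a_1a_2^*a_2\ran=|a_1\ran$, hence
\[
 id\ten|a_1\ran\lan a_2| \lel (id\ten\la_N(w))(id\ten|a_2\ran\lan a_2|) \pl .
\]
This yields the factorization $\Phi_{a_1,a_2}=W\circ\Phi_{a_2,a_2}$ with $W=U(id\ten\la_N(w))U^*$ a unitary on $L_2(M)\ten L_2(N)$ that maps $\st(a_2)$ onto $\st(a_1)$, since $W\,U(h\ten a_2)=U(h\ten wa_2)=U(h\ten a_1)$.

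Transporting by $id\ten\sigma$, the factor $\Phi_{a_2,a_2}$ becomes the complete contraction $P_{MBa_2}$ from the diagonal case, with range $MBa_2$. On that range the unitary $W$ acts by $hBa_2\mapsto hBa_1=(hBa_2)(a_2^*a_1)$, i.e.\ as right multiplication by the unitary $a_2^*a_1\in N$ on the row factor $L_2^r(N)$. Right multiplication by a contraction is completely contractive on $L_2^r(N)$, hence on $L_2^{c_p}(M)\ten_h L_2^r(N)$ for every $p$ (exactly as used in the proof of Lemma~\ref{oproj}). Therefore the transported map equals $R_{a_2^*a_1}\circ P_{MBa_2}$, a composition of two complete contractions, uniformly in $1\le p\le\infty$.

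The one point requiring care is this last identification: $W$ is a genuine Hilbert-space unitary whose transport under $id\ten\sigma$ need not be a complete contraction on all of $S_{2p}$, so one must exploit that $W$ is only ever applied to the range $MBa_2$ of the preceding projection, where it coincides with the harmless right multiplication $R_{a_2^*a_1}$; thus $(\text{transport of }W)\circ P_{MBa_2}=R_{a_2^*a_1}\circ P_{MBa_2}$. The remaining work is bookkeeping, namely checking that $Ba_2$ is a partial isometry with $(Ba_2)(Ba_2)^*=id_{L_2(M)}$ so that Lemma~\ref{oproj}~ii) applies, and tracking the column/row placement of $M$ and $N$ through $id\ten\sigma$.
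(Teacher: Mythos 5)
Your proof is correct, and it reaches the conclusion by a somewhat different decomposition than the paper's. Both arguments rest on the same two ingredients: Lemma \ref{oproj} ii) (an orthogonal projection onto a subspace of the form $M\cdot(\text{co-isometry})$ is completely contractive on $L_2^{c_p}(M)\ten_h L_2^r(N)$ for all $p$), and the homogeneity of the row factor (composition with a fixed contraction on the $N$-side is completely contractive, uniformly in $p$). The difference is in how the off-diagonal case is reduced to a projection. The paper needs only the single base case $a_1=a_2=1$: it observes that $U$ commutes with $1\ten J_NaJ_N$ (since $U\in M'\ten N$ and $J_NaJ_N\in N'$), which gives the two-sided identity $\Phi_{a_1,a_2}=(1\ten J_Na_1^*J_N)\,\Phi_{1,1}\,(1\ten J_Na_2J_N)$; each outer factor is a right multiplication by a unitary, hence a complete contraction on the \emph{whole} space, so no restriction-to-a-range argument is needed. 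You instead twist by the left multiplication $\la_N(a_1a_2^*)$, which does \emph{not} commute with $U$, so you must keep it conjugated as $W=U(1\ten\la_N(a_1a_2^*))U^*$ and then argue that the transported $W$, composed after the projection $P_{MBa_2}$, may be replaced by the right multiplication $R_{a_2^*a_1}$ because the two agree on the range $MBa_2$ (via $hBa_1=(hBa_2)(a_2^*a_1)$, using $a_2a_2^*=1$). That is exactly the delicate point you flag, and you handle it correctly. What each route buys: the paper's commutant trick yields a clean, symmetric factorization into maps that are globally complete contractions, and it only invokes the projection lemma for $MB$ itself; your route avoids having to spot the commutation relation, and as a by-product establishes the slightly more general fact that every diagonal map $\Phi_{a,a}$ is a completely contractive projection onto $\st_p(a)=MBa$ (via $(Ba)(Ba)^*=BB^*=id_{L_2(M)}$), i.e., all the spaces $\st_p(a)$ are uniformly completely complemented --- at the cost of the extra care at the composition step.
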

\begin{proof} Let us start with $a_1=a_2=1$. Recall that Lemma \ref{eleprop} implies
 \begin{align*} \st(1) =\st(\theta_1)\lel U(L_2(M)\ten |1\ran)  \end{align*}
and hence $\Phi_{1,1}$ is the unique orthogonal projection from the Hilbert space $L_2(M)\ten_2 L_2(N)$ onto $\st(1)$. Moreover, we also know that $\st_p(1) \lel MB$. By Lemma \ref{ortoproj}, $\Phi_{1,1}$ is a complete contraction for all $1\le p\le \infty$. For general $a_1,a_2\in N$ we note that $U(1\ten J_NaJ_N)=(1\ten J_NaJ_N)U$ commutes because $U\in M'\ten N$. This implies
\begin{align*}
  U(1\ten |a_1\ran \lan a_2|)U^*
 &=  U(1\ten J_Na_1^*J_N)(1\ten |1\ran \lan 1|)(1\ten J_Na_2J_N)U^* \\
 &=  (1\ten J_Na_1^*J_N) \Phi_{1,1} (1\ten J_Na_2J_N) \pl .
\end{align*}
By the properties of the Haagerup tensor product (see \cite{Psbook}) we know that the first and the third terms are complete contractions for unitaries $a_1, a_2$. Clearly the composition of three complete contractions is again a complete contraction. \end{proof}

\begin{theorem}\label{compair} Assume {\rm C1), C2)} and {\rm C3)}. Let ${\rho \in S_1(H_A\ten L_2(M))}$ be a bipartite state for some Hilbert space $H_A$ and $f_1,f_2\in L_1(N,\tau)$ be densities . Then for all $1\le p\le \infty$,
\[ \|id_A\ten \theta_{f_1}(\rho)\|_{p}
  \kl \|f_1\|_p\|f_2\|_p \|id_A \ten \theta_{f_2}(\rho)\|_{p} \pl.\]
\end{theorem}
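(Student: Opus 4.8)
The plan is to reduce the statement to a single Schatten-norm inequality for one fixed Stinespring vector and then prove that inequality by Stein interpolation (Theorem \ref{stein}), exploiting the fact that the maps $\Phi_{a_1,a_2}$ of Lemma \ref{spemap} are complete contractions precisely when $a_1,a_2$ are unitaries, which is exactly the situation on the imaginary boundary line, where the analytic multiplier family $f_i^{it}$ is unitary.

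\textbf{Step 1 (reduction to a fixed Stinespring vector).} First I would purify: a general bipartite state $\rho\in S_1(H_A\ten L_2(M))$ may be written $\rho=\xi\xi^*$, and after enlarging the reference system $H_A$ there is no loss in treating $\xi$ as coming from a pure vector. Applying Lemma \ref{shuffle} to $\Phi=\theta_f$ turns both sides of the claimed inequality into squared norms of Stinespring vectors. Using the description $\st_p(\theta_f)=MB\sqrt f$ from Lemma \ref{eleprop} iii), together with the factorization $V_f=(1\ten R_{\sqrt f})V_1$ (right multiplication by $\sqrt f$ on the $N$-leg $L_2(N)$), this identifies the relevant vector as $\eta_f=\eta_1\lambda(\sqrt f)$, where $\eta_1$ is \emph{independent of} $f$ and lies in the $\theta_1$-Stinespring space $MB$. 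Since $\|f_i\|_p=\|\sqrt{f_i}\|_{2p}^2$ and $\|\eta_f\|_{S_{2p}}^2=\|id_A\ten\theta_f(\rho)\|_p$, the theorem becomes equivalent to the estimate
\[ \|\eta_1\lambda(\sqrt{f_1})\|_{S_{2p}} \kl \|\sqrt{f_1}\|_{2p}\,\|\sqrt{f_2}\|_{2p}\,\|\eta_1\lambda(\sqrt{f_2})\|_{S_{2p}} \pl . \]

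\textbf{Step 2 (Stein interpolation with a unitary boundary).} To prove the displayed inequality I would set up an analytic family $\{T_z\}_{z\in S}$ of maps on the relevant operator spaces, assembled from the completely contractive maps of Lemma \ref{spemap} (equivalently the completely contractive projection onto $MB$ from Lemma \ref{ortoproj} ii)) sandwiched between two analytic multiplier families $z\mapsto\lambda(f_1^{\,\cdot})$ and $z\mapsto\lambda(f_2^{\,\cdot})$, with exponents chosen so that $T_{1/p}$ realizes the passage $\eta_1\lambda(\sqrt{f_2})\mapsto\eta_1\lambda(\sqrt{f_1})$. On the boundary line $\mathrm{Re}\,z=0$ the multipliers specialize to the \emph{unitaries} $f_1^{it},f_2^{it}\in N$, so Lemma \ref{spemap} forces the building block to be a complete contraction and the boundary constant $M_0$ is governed by $\|f_i\|_1=\tau(f_i)=1$. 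On the line $\mathrm{Re}\,z=1$ I would bound $T_z$ crudely by operator norms, producing a constant $M_1$ governed by $\|f_i\|_\infty$. Theorem \ref{stein}, evaluated at $\theta=1/p$ through the geometric-mean bound $M_0^{1-1/p}M_1^{1/p}$, then interpolates these endpoints into exactly the factor $\|f_1\|_p\|f_2\|_p$ via $\|\sqrt{f_i}\|_{2p}=\|f_i\|_p^{1/2}$.

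\textbf{Main obstacle and loose ends.} The hard part is the bookkeeping of Step 2: choosing the exponents in the two analytic multiplier families so that (a) the family is bounded and analytic on the strip with the correct boundary target spaces, (b) $T_{1/p}$ is precisely the comparison map between the two Stinespring realizations, and (c) the two endpoint constants multiply to $\|f_1\|_p\|f_2\|_p$ rather than to something larger. A second technical point is that $f_i^{z}$ is only defined on the support of $f_i$, so I would first prove the estimate for the strictly positive densities $f_i+\eps 1$ (suitably renormalized), for which $f_i^{it}$ are genuine unitaries of $N$ as required by Lemma \ref{spemap}, and then pass to the limit $\eps\to 0^+$ by continuity of both sides in $f_i$. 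The endpoints $p=1$ (where the inequality reads $1\le 1$, since $\theta_f$ is trace preserving and $\|f_i\|_1=1$) and $p=\infty$ serve as consistency checks on the construction.
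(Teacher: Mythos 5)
Your overall architecture is indeed the paper's: reduce via Lemma \ref{shuffle} to a completely bounded estimate for the transition map between Stinespring spaces $\st_p(\theta_{f_2})\to\st_p(\theta_{f_1})$, prove that estimate by Stein interpolation with Lemma \ref{spemap} handling the left edge (unitaries), and regularize $f_k\mapsto f_k+\eps 1$ at the end. But the step you yourself single out as the main obstacle is resolved incorrectly, and as written the argument cannot yield the stated constant. If on the edge $\mathrm{Re}\,z=1$ you bound the analytic family ``crudely by operator norms,'' then $M_1$ is a function of $\|f_1\|_\infty$ and $\|f_2\|_\infty$ only, so Stein's bound $M_0^{1-1/p}M_1^{1/p}$ with $M_0=1$ depends on the densities only through their sup-norms. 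No such expression can be ``exactly the factor $\|f_1\|_p\|f_2\|_p$'': two densities with the same $L_\infty$-norm generically have different $L_p$-norms (spread the mass off the largest atom in two different ways), so a constant seen only through $\|f_i\|_\infty$ and $\|f_i\|_1=1$ cannot reproduce $\|f_i\|_p$. Since $\|f\|_\infty\ge\|f\|_p$, what your scheme actually proves is the strictly weaker inequality with $\|f_i\|_p$ replaced by powers of $\|f_i\|_\infty$, not the theorem.

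The correct right-edge estimate is Hilbertian, not uniform. In the scale $L_2^{c_p}(M)\ten_h L_2^{r}(N)=[L_2^{c}(M)\ten_h L_2^{r}(N),\,L_2^{r}(M)\ten_h L_2^{r}(N)]_{1/p}$ the edge $\mathrm{Re}\,z=1$ is a row Hilbert space, on which $U(id\ten|v\ran\lan w|)U^*$ has (completely bounded) norm $\|v\|_2\|w\|_2$, i.e.\ the $L_2$-norms of the boundary vectors, never their operator norms. The paper's device makes this edge constant exactly $1$: normalize $a_k=\sqrt{f_k}/\|\sqrt{f_k}\|_{2p}$ and take $\Phi(z)=U(id\ten|a_1^{pz}\ran\lan a_2^{p\bar z}|)U^*$ (the exponent $pz$ is forced if $z=1/p$ is to return the square roots). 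At $z=it$ the vectors are unitaries, so Lemma \ref{spemap} gives a complete contraction; at $z=1+it$ one has $\|a_k^{p(1+it)}\|_2=\tau(a_k^{2p})^{1/2}=1$ by the $L_{2p}$-normalization, so the edge map is a Hilbert-space contraction. Stein then makes $\Phi(1/p)$ a complete contraction on $L_2^{c_p}(M)\ten_h L_2^{r}(N)$, and the constant appears only afterwards: evaluating at $\eta=U(h\ten|\sqrt{f_2}\ran)$ and using $\lan a_2,\sqrt{f_2}\ran=\tau(f_2)/\|\sqrt{f_2}\|_{2p}=1/\|\sqrt{f_2}\|_{2p}$ shows the transition map has cb-norm at most $\|\sqrt{f_1}\|_{2p}\|\sqrt{f_2}\|_{2p}=(\|f_1\|_p\|f_2\|_p)^{1/2}$, which squares to $\|f_1\|_p\|f_2\|_p$ when passing from Stinespring vectors back to states. (Equivalently, without normalizing, the right-edge constant is $\|f_1^{p/2}\|_2\|f_2^{p/2}\|_2=(\|f_1\|_p\|f_2\|_p)^{p/2}$.) With this correction, and noting $\tau(f_k+\eps 1)\to1$ in your limiting step, your outline becomes precisely the paper's proof.
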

\begin{proof} Fix a $p\in [1,\infty]$, we introduce $\displaystyle a_k=\frac{\sqrt{f_k}}{\|\sqrt{f_k}\|_{2p}}$ for $k=1,2$. We claim that the map
\[ \Phi_{a_1,a_2}\lel U(id\ten |a_1\ran \lan a_2|)U^* \]
is a complete contraction on $L_2^{c_p}(M)\ten_h L_2^{r}(N)$. Indeed, let us first assume that $a_k$ is invertible.  Since $\|a_k\|_{2p}=1$ and $a_k> 0$, we may define the analytic functions $a_k(z) \lel a_k^{pz}$. Thus we obtain an analytic family of maps
\[\Phi(z) \lel U(id\ten |a_1(z)\ran \lan a_2(\bar{z})|)U^* \pl .\]
For $z=it$, $a_1(it)$ and $a_2(-it)$ are unitaries. Hence by Proposition \ref{spemap}, $\Phi(it)$ is a complete contraction on $L_2^{c}(M)\ten_hL_2^{r}(N)$. For $z=1+it$ we see that $\|a_k^{p(1+it)}\|_2=\tau(a_k^{2p})^{1/2}=1$ for $k=1,2$. Then $\Phi(1+it)$ is a partial isometry on $L_2^r(M)\ten_h L_2^r(N)$. By Theorem \ref{stein}
(Stein's interpolation theorem), we deduce for $z=1/p$ that
\begin{align*} \|\Phi(1/p):L_2^{c_p}(M)\ten_h L_2^{r}(N)\to
 L_2^{c_p}(M)\ten_h L_2^{r}(N)\|_{cb} \kl 1 \pl .\end{align*}
For $h\in L_2(M)$, denote $\eta \lel U(h\ten \ket{\sqrt{f_2}})$,  we have
 \begin{align*} \Phi(1/p)(\eta) \lel \lan a_2,\sqrt{f_2}\ran
 U(h\ten |a_1\ran)
  &\lel \frac{\lan a_2,\sqrt{f_2}\ran}{\|\sqrt{f}_1\|_{2p}}   U(h\ten |\sqrt{f_1}\ran) \\&\lel \frac{1}{\|\sqrt{f}_1\|_{2p}\|\sqrt{f}_2\|_{2p}}   U(h\ten |\sqrt{f_1}\ran)\pl .\end{align*}
Therefore the ``transition map'' between the Stinespring spaces
${T_{f_1,f_2}: \st_p(\theta_{f_2}) \to \st_p(\theta_{f_1})}$ defined by
 \begin{align*} T_{f_1,f_2}(\sum_i |x_ih\ran \ten \lan \sqrt{f_2}y_i^*|)
 \lel (\sum_i |x_ih\ran \ten \lan \sqrt{f_1}y_i^*|) \end{align*}
satisfies
 \begin{align*} \|T_{f_1,f_2}:\st_p(\theta_{f_2})\to  \st_p(\theta_{f_1})\|_{cb}
   \kl \|\sqrt{f_1}\|_{2p}\|\sqrt{f_2}\|_{2p} \pl .\end{align*}
Applying this to an element $\xi\in B(H_A\ten L_2(M))$, we deduce from Lemma \ref{shuffle} that
 \begin{align*}
  &\|id_A\ten \theta_{f_1} (\xi \xi^*)\|_{S_p(H_A\ten L_2(M))}
  \lel  \norm{\sum_i (1_A\ten V_{f_1})\xi}{H_A^{c_p}\ten_h \st_p(\theta_{f_1})\otimes_h L_2(M)^r\otimes_h H_A^{r}} ^2\\=&\norm{\sum_i (1_A\ten x_i)\xi\otimes \bra{\sqrt{f_1}y_i^*}}{H_A^{c_p}\ten_h \st_p(\theta_{f_1})\otimes_h L_2(M)^r\otimes_h H_A^{r}} ^2\\
  &\kl  \|\sqrt{f_1}\|_{2p}^2 \|\sqrt{f}_2\|_{2p}^2\norm{\sum_i (1_A\ten x_i)\xi\otimes \bra{\sqrt{f_2}y_i^*}}{H_A^{c_p}\ten_h \st_p(\theta_{f_2})\otimes_h L_2(M)^r\otimes_h H_A^{r}} ^2\\
  &= \|\sqrt{f_1}\|_{2p}^2 \|\sqrt{f}_2\|_{2p}^2
  \|id_A\ten \theta_{f_2}\ (\xi \xi^*)\|_{S_p(H_A\otimes L_2(M))} \end{align*}
holds for all positive $\rho=\xi\xi^*\in S_1(H_A\ten L_2(M))$. Using
$\|\sqrt{f}_k\|_{2p}^2=\tau(f_k^p)^{1/p}=\|f_k\|_p$ for $k=1,2$ implies the assertion in case of invertible densities $f_1, f_2$. For noninvertible densities we first consider $\delta>0$ and $\tilde{f_k}=f_k+\delta 1$ invertible. The same argument shows that
 \begin{align*} \|T_{\tilde{f_1},\tilde{f_2}}:\st_p(\theta_{\tilde{f_2}})\to \st_p(\theta_{\tilde{f_1}})\|_{cb} \kl  \|\sqrt{\tilde{f_1}}\|_{2p}\|\sqrt{\tilde{f_2}}\|_{2p}  \pl .\end{align*}
The assertion in general follows by sending $\delta\to 0$.   \qd
The second inequality of Theorem \ref{comp} follows from above theorem by choosing $f_2=1$. We prove the the first inequality of Theorem \ref{comp} by the following lifting property.

\begin{lemma}\label{mu}  Assume {\rm C1), C2)} and {\rm C3)}. Then $\theta_1$ is the conditional expectation $\E_M$ from $B(L_2(M))$ onto $M$. Moreover, $\theta_1\theta_f=\theta_1$ for all densities $f\in N$.
\end{lemma}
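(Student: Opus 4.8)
The plan is to prove the two assertions in turn, with the first---identifying $\theta_1$ with the conditional expectation $\E_M$---carrying essentially all the content, and the second following almost formally. Throughout I work in the standard form, so that $H_{A'}=H_B=L_2(M)$ and $\la$ denotes the GNS representation of $M$.

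For $\theta_1=\E_M$ I would push rank-one inputs through the Stinespring description of Lemma \ref{eleprop}. Recall $V_1(h)=\sum_i|x_i(h)\ran\ten|y_i\ran$, and that under the isometry $\sigma$ of Lemma \ref{eleprop} iii) the vector $V_1(h)$ is identified with the operator $hB=\la(h)B\in B(L_2(N),L_2(M))$. Using the shuffle identification from the proof of Lemma \ref{shuffle} (the output equals the Stinespring vector times its adjoint in the Hilbert--Schmidt picture), the output on a rank-one operator becomes
\[
\theta_1(|h\ran\lan k|)=(id\ten tr)(V_1|h\ran\lan k|V_1^*)\cong(\la(h)B)(\la(k)B)^*=\la(h)\,BB^*\,\la(k)^* .
\]
Here assumption C3, namely $BB^*=id_{L_2(M)}$, collapses the middle factor and gives $\theta_1(|h\ran\lan k|)=\la(hk^*)\in\la(M)$. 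This is the crux of the lemma. Since such operators span, $\theta_1$ has range in $\la(M)$; it is completely positive and trace preserving because it is a channel, and unital since $\theta_1(1)=(id\ten\tau)(UU^*)=1$. Because each $x_i\in M'$ commutes with $\la(M)$, one also reads off $\theta_1(\la(a))=\la(a)\,\theta_1(1)=\la(a)$, so $\theta_1$ fixes $\la(M)$ pointwise. A unital, trace preserving, completely positive map whose range is $\la(M)$ and which restricts to the identity there is the unique trace preserving conditional expectation, whence $\theta_1=\E_M$.

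For the identity $\theta_1\theta_f=\theta_1$ I would use that $\E_M$ is the orthogonal projection onto $\la(M)$ in $L_2$, hence self-adjoint, and show that the Heisenberg adjoint $\theta_f^*$ fixes $M$ pointwise. A short computation gives $\theta_f^*(a)=\sum_{ij}\tau(y_ify_j^*)\,x_j^*ax_i$; for $a\in M$ the commutant elements $x_i,x_j\in M'$ pass through $a$, leaving $\theta_f^*(a)=a\sum_{ij}\tau(y_ify_j^*)x_j^*x_i$. The remaining scalar operator is identified by applying the functional $c\mapsto\tau(cf)$ to the unitarity relation $U^*U=\sum_{ij}x_i^*x_j\ten y_i^*y_j=1\ten 1$, which, together with $\tau(f)=1$ since $f$ is a density and cyclicity of $\tau$, yields $\sum_{ij}\tau(y_i^*y_jf)x_i^*x_j=1$---exactly the required sum after relabelling. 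Thus $\theta_f^*|_M=id$, so $\theta_f^*\E_M=\E_M$ (as $\E_M$ has range $M$), and taking $L_2$-adjoints with $\E_M^*=\E_M$ gives $\theta_1\theta_f=\E_M\theta_f=\E_M=\theta_1$.

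The main obstacle is the first step: making the Hilbert--Schmidt/Stinespring identification precise enough that the partial trace genuinely produces $(\la(h)B)(\la(k)B)^*$ with the factor $BB^*$ sitting in the middle, so that C3 can be applied; once that algebraic collapse is secured, everything else is bookkeeping with the trace and the unitarity of $U$. As a safeguard against any awkwardness in the non-tracial normalisation, I would keep an alternative, purely computational route to the second identity: expanding directly gives $\theta_1\theta_f(|h\ran\lan k|)=\la(h)\big(\sum_{ij}\tau(y_ify_j^*)x_i'(x_j')^*\big)\la(k^*)$, and the inner sum is identified with $1$ by $V_f^*V_f=id_{L_2(M)}$ from Lemma \ref{eleprop}, i.e. by trace preservation of $\theta_f$, again recovering $\theta_1$.
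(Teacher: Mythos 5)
Your proposal is correct; the first half follows the paper's line closely, while your argument for $\theta_1\theta_f=\theta_1$ is genuinely different. For $\theta_1=\E_M$, both proofs turn on the same collapse: the middle factor $\sum_{i,j}\tau(y_iy_j^*)\ket{x_i}\bra{x_j}=BB^*$ equals $id_{L_2(M)}$ by C3), giving $\theta_1(\ket{h}\bra{k})=\la(hk^*)$. The paper computes this directly from the Kraus form \eqref{channelmap2} and then verifies the defining relation \eqref{proj} by the one-line identity $tr(\ket{k}\bra{h}a)=tr(kh^*a)$; you instead pass through the Stinespring/Hilbert--Schmidt picture and close by citing the characterization of $\E_M$ as the unique unital, trace-preserving, completely positive projection onto $\la(M)$. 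That citation is legitimate (Tomiyama plus uniqueness of the trace-preserving expectation), but it can be avoided: the bimodule identity $\theta_1(T\la(a))=\theta_1(T)\la(a)$ is immediate from $x_j^*\in M'$, and together with trace preservation it yields \eqref{proj} directly. For the second assertion the paper argues on the Schr\"odinger side: via the averaging formula \eqref{phie} it reduces the claim to $\E_M(BfB^*)=1$, proved by the norm computation $tr(xBfB^*)=\|U(\ket{\sqrt{x}}\ten\ket{\sqrt{f}})\|_2^2=tr(x)\tau(f)$. Your argument is the dual one: you show that the Heisenberg adjoint $\theta_f^*$ fixes $\la(M)$ pointwise by applying $id\ten\tau(\,\cdot\,f)$ to $U^*U=1\ten 1$, and then transport this back through the $L_2$-self-adjointness of $\E_M$ (available since the paper identifies $\E_M$ with the orthogonal projection onto $L_2(M)$). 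Both routes use the full unitarity of $U$, not merely C3), as they must. The paper's version has the side benefit of establishing $\E_M(BfB^*)=1$ explicitly, an identity it reuses in the remark following Proposition \ref{lowerbound}; your version is shorter and isolates the entire content in a single algebraic identity. Your fallback computation via $V_f^*V_f=id_{L_2(M)}$ is also sound, but note that matching $\sum_{i,j}\tau(y_ify_j^*)x_i'(x_j')^*=1$ with the isometry relation $\sum_{i,j}\tau(y_i^*y_jf)x_i^*x_j=1$ requires the conjugation $x_i^*x_j=Jx_i'(x_j')^*J$ together with the anti-linearity of $J$, a step worth writing out.
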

\begin{proof} It suffices to consider rank one matrices $\ket{k}\bra{h}\in B(L_2(M))$ with $k,h \in M$. Since $x_i\in M'$ we find
\begin{align*}
\theta_1(\ket{k}\bra{h})&=\sum_{i,j}\tau(y_iy_j^*)x_i \ket{k}\bra{h}x_j^* \lel \sum_{i,j}\tau(y_iy_j^*) \ket{x_ik}\bra{x_jh}\\
&=\sum_{i,j}\tau(y_iy_j^*) \ket{kx_i}\bra{hx_j}\lel
  k(\sum_{i,j} \lan y_i^*,y_j^*\ran  \ket{x_i}\bra{x_j})h^*\lel k BB^*h^* \lel kh^*\pl.
\end{align*}
Then  we  observe that for any $a\in M$,
\begin{align*}tr(\ket{k}\bra{h}a) \lel \bra{h}a\ket{k} \lel tr(h^*ak) \lel tr(kh^*a)\pl .\end{align*}
Thus $\theta_1=\E_M$ is the conditional expectation onto $M$ by the definition. For ii), thanks to \eqref{phie} the conditional expectation is given by the integral over $U(M')$. Let $f\in N$ be a density, and $\ket{k}\bra{h}\in B(L_2(M))$ again a matrix unit. Then we have \begin{align*}\E_M[\theta_f(\ket{k}\bra{h})]=\int_{U(M')}ukBfB^*h^*u^*du=k\E_M(BfB^*)h^*\pl \pl. \end{align*}
Thus it suffices  to show $\E_M(BfB^*)=1$. For positive $x\in M$ we have
 \begin{align*}
  tr(xBfB^*) \lel \|\sqrt{x}B\sqrt{f}\|_2^2\pl .
  \end{align*}
Then we note that
 \begin{align*} \sqrt{x}B\sqrt{f}
 \lel \sqrt{x}(\sum_i |x_i\ran \ten \bra{ y_i^*})\sqrt{f}
 \lel  \sum_i |\sqrt{x}x_i\ran \ten \bra{ \sqrt{f}y_i^*} \pl .\end{align*}
Recall that  $\si(|x\ran)=\lan x^*|$ is a linear isometry and thus \begin{align*}
  tr(xBfB^*) &= \|\sqrt{x}B\sqrt{f}\|_2^2 \lel
 \|(1\ten \si)\sqrt{x}B\sqrt{f}\|_2^2 \lel
  \|(\sum_i x_i\ten y_i) (|\sqrt{x}\ran \ten |\sqrt{f}\ran)\|_2^2 \\
 &= \|U(|\sqrt{x}\ran \ten |\sqrt{f}\ran)\|_2^2 \lel
 \|(|\sqrt{x}\ran \ten |\sqrt{f}\ran)\|_2^2
 \lel tr(x) \tau(f) =tr(x)\pl .
 \end{align*}
By linearity this remains true for all $x\in M$, which completes the proof.
\end{proof}

\begin{prop}\label{th1f} Assume {\rm C1), C2)} and {\rm C3)}.  Let ${\rho \in S_1(H_A\ten L_2(M))}$ be a bipartite state with some Hilbert space $H_A$, and $f_1,f_2\in L_1(N,\tau)$ be densities. Then for all $1\le p\le \infty$,
\begin{align*}\|id_A\ten \theta_1(\rho)\|_p \kl \|(id_A\ten \theta_f)(\rho)\|_p \pl .\end{align*}
\end{prop}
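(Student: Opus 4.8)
The plan is to exploit the identity $\theta_1\theta_f=\theta_1$ provided by Lemma \ref{mu}, which lets me factor the $\theta_1$-output through the $\theta_f$-output and then invoke only the fact that $\theta_1$ is a norm-nonincreasing conditional expectation. First I would tensor the relation $\theta_1=\theta_1\circ\theta_f$ with $id_A$, giving
\[ id_A\ten\theta_1 \lel (id_A\ten\theta_1)\circ(id_A\ten\theta_f)\pl, \]
so that $(id_A\ten\theta_1)(\rho)=(id_A\ten\theta_1)\big((id_A\ten\theta_f)(\rho)\big)$. The desired inequality then reduces to showing that $id_A\ten\theta_1$ does not increase the Schatten-$p$ norm.

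To this end I would verify that $id_A\ten\theta_1$ is a contraction on $S_p(H_A\ten L_2(M))$ for every $1\le p\le\infty$. By Lemma \ref{mu} the map $\theta_1=\E_M$ is the conditional expectation onto $M$, hence completely positive, unital and trace preserving; tensoring with $id_A$ preserves all three properties, so $id_A\ten\theta_1$ is again completely positive, unital and trace preserving. A unital positive map is a contraction for the operator norm ($p=\infty$), while a trace preserving positive map is a contraction for the trace class norm ($p=1$). Interpolating these two endpoint bounds for the constant map $id_A\ten\theta_1$ via \eqref{interpolation}, using $S_p=(S_\infty,S_1)_{1/p}$, yields $\|id_A\ten\theta_1\|_{S_p\to S_p}\le 1$ for all intermediate $p$.

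Combining the two steps gives
\[ \|(id_A\ten\theta_1)(\rho)\|_p \lel \big\|(id_A\ten\theta_1)\big((id_A\ten\theta_f)(\rho)\big)\big\|_p \kl \|(id_A\ten\theta_f)(\rho)\|_p\pl, \]
which is the assertion. The argument is short because the substantive content already lives in Lemma \ref{mu}; the only point requiring care is that the endpoint contractions must be established for $id_A\ten\theta_1$ rather than for $\theta_1$ alone, so that one genuinely uses the complete positivity of $\theta_1$ (ensuring $id_A\ten\theta_1$ stays positive, unital and trace preserving) before interpolating.
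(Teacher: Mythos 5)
Your proof is correct and follows essentially the same route as the paper: both factor $id_A\ten\theta_1=(id_A\ten\E_M)\circ(id_A\ten\theta_f)$ via Lemma \ref{mu} and then use that $id_A\ten\E_M$, being completely positive, unital and trace preserving, is a contraction on $S_p(H_A\ten L_2(M))$ for all $1\le p\le\infty$. The only difference is that you spell out the interpolation argument for the $S_p$-contraction, which the paper leaves implicit.
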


\begin{proof} According to Lemma \ref{mu} we have
 \[ (id_A\ten \theta_1) \lel (id_A\ten \theta_1)(id_A\ten \theta_f) \lel (id_A\ten \E_M) (id_A\ten \theta_f)\pl .\]
However, $id_A\ten \E_M$ is a unital and trace preserving completely positive map and hence a contraction on $S_p(H_A\ten L_2(M))$ for all $1\le p\le \infty$.
\qd 
\section{Negative Cb-entropy and Combined bounds}
\subsection{Negative cb-entropy}
\noindent The cb-entropy was first introduced
in \cite{DJKR}, and rediscovered as
``reverse coherent information'' in \cite{invcoh}. We will give a formula of the cb-entropy of $\theta_f$ using condition C4).  The ideas go back to the so far unfortunately unpublished manuscript \cite{JNR2}. Let us recall that for a channel ${\Phi :S_1(H_{A'})\to S_1(H_B)}$, the negative cb-entropy of $\Phi$ is defined as
\begin{align*}
-S_{cb}(\Phi)=\sup_{\rho\ pure}{H(A)_\si-H(AB)_\si}\pl.
\end{align*}
Here $H(A)-H(AB)=I_c(B\ran A)$ motivates the terminology ``reverse coherent information''. Our discussion is based on the differential description  from \cite{DJKR},
\begin{equation}\label{djkr}
-S_{cb}(\Phi) \lel \dep{\norm{\Phi:S_1(H_{A'})\rra S_p(H_B)}{cb}} \pl .
\end{equation}
Using $CB(X,Y)\cong X^*\ten_{\min} Y$, we may consider the vector-valued $(\infty,p)$ norm defined in \eqref{QPP} for its Choi matrix. Indeed, assuming a basis $\{e_i\}_{1\le i\le m}$ for $H_{A'}$, the Choi matrix of $\Phi:S_1(H_{A'})\to S_1(H_B)$ is given by
 \begin{align*} \large{\chi_{\Phi}} \lel \sum_{i,j} e_{i,j}\ten \Phi(e_{i,j})
 \lel m \pl \big(id\ten \Phi(|\psi_m\ran \lan \psi_m|)\big) \pl ,\end{align*}
where $\ket{\psi_m}=\frac{1}{\sqrt{m}}\sum_i e_i\ten e_i $ is a maximally entangled state in $H_{A'}\ten H_{A'}$ with $|A'|=m$. The complete isometry
 \begin{align*} CB(S_1(H_{A'}),S_p(H_B))\cong B(H_{A'})\ten_{\min} S_p(H_B)\lel M_m(S_p(H_B))\pl\end{align*}
is explicitly given by the Choi matrix
\begin{align*}\norm{\Phi:S_1(H_{A'})\rra S_p(H_B)}{cb} \lel \norm{\chi_\Phi}{M_m(S_p(H_B))}\pl .\end{align*}

\begin{theorem} \label{cbentropy} Let $N\subset B(L_2(N))$ be $n$-dimensional von Neumann algebra with induced faithful normalized trace $\tau=\frac{tr}{n}|_N$. If $U=\sum_i x_i\otimes y_i$ is a unitary in $M_m\otimes N$ such that ${\rm B}=\sum_{i}\ket{x_i}\otimes \bra{y_i^*}$ in $B(L_2(N),L_2(M_m))$ satisfies ${\rm B}^*{\rm B}= \mu\pl id_{L_2(N)}$. Then
$\mu=\frac{m}{n}$ and
\begin{align*} -S_{cb}(\theta_f) \lel  \ln \mu +\tau{(f\ln f)}\pl , \end{align*}
where the optimal value is  attained at maximally entangled states.
\end{theorem}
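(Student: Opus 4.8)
The plan is to establish the two assertions in turn: the scalar identity $\mu=m/n$ by a trace computation, and the entropy formula by evaluating $\|\theta_f:S_1\to S_p\|_{cb}$ exactly and then differentiating at $p=1$ through \eqref{djkr}. Throughout I write $\lambda_N$ for the GNS representation of $N$ on $L_2(N)$ and $R_a$ for right multiplication $x\mapsto xa$. To get $\mu$, note that since $\mathrm{B}^*\mathrm{B}=\mu\,id_{L_2(N)}$ and $\dim L_2(N)=n$, the Hilbert space trace gives $\mu n=\operatorname{tr}(\mathrm{B}^*\mathrm{B})=\operatorname{tr}(\mathrm{B}\mathrm{B}^*)$. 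Expanding $\mathrm{B}\mathrm{B}^*=\sum_{i,j}\tau(y_iy_j^*)\ket{x_i}\bra{x_j}$ and using cyclicity yields $\operatorname{tr}(\mathrm{B}\mathrm{B}^*)=\sum_{i,j}\operatorname{tr}(x_ix_j^*)\tau(y_iy_j^*)=(\operatorname{tr}\otimes\tau)(UU^*)$; as $U$ is unitary this is $(\operatorname{tr}\otimes\tau)(1_m\otimes 1_N)=m$, so $\mu=m/n$.

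For the lower bound I feed the maximally entangled state $\ket\psi=m^{-1/2}\sum_k e_k\otimes e_k$ into $-S_{cb}(\theta_f)=\sup_{\rho\ \mathrm{pure}}H(A)_\sigma-H(AB)_\sigma$. Trace preservation forces $\sigma^A=\tfrac1m 1$, hence $H(A)=\ln m$, and since the Stinespring purification $\sigma^{ABE}$ of Lemma \ref{eleprop} is pure, $H(AB)_\sigma=H(\sigma^E)$. Using $\ket{y_i\sqrt f}=J_N\sqrt fJ_N\ket{y_i}$, a direct computation gives $\sigma^E=\tfrac1m J_N\sqrt fJ_N\,D\,J_N\sqrt fJ_N$ with $D=\sum_{i,i'}\operatorname{tr}(x_{i'}^*x_i)\ket{y_i}\bra{y_{i'}}$. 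The decisive point is $D=J_N(\mathrm{B}^*\mathrm{B})J_N=\mu\,id_{L_2(N)}$, whence $\sigma^E=\tfrac{\mu}{m}J_NfJ_N=\tfrac1n J_NfJ_N$. Because $\operatorname{tr}_{L_2(N)}(J_NgJ_N)=n\,\tau(g)$ for $g\in N$ (both sides are tracial and agree at $g=1$), one obtains $H(\sigma^E)=\ln n-\tau(f\ln f)$, the normalization identity of the Remark after Corollary \ref{entropyin}. Thus $-S_{cb}(\theta_f)\ge\ln m-H(\sigma^E)=\ln\mu+\tau(f\ln f)$.

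For the reverse inequality I compute the cb-norm via the Choi matrix, $\|\theta_f:S_1\to S_p\|_{cb}=\|\chi_{\theta_f}\|_{M_m(S_p)}$, and the vector valued norm \eqref{QPP} with $r=p$. Since for pure inputs $(id\otimes\theta_f)(\ket{\phi_a}\bra{\phi_a})$ has the same nonzero spectrum as its $E$-marginal $\tfrac1m J_N(\sqrt f\,\mathrm{B}^*R_{\bar a^2}\mathrm{B}\,\sqrt f)J_N$ (the same complementary-channel identity, now carrying a weight $\bar a^2$), this reduces the cb-norm to $\sup_{a\ge0,\ \|a\|_{2p}\le1}\|R_{\bar a}\,\mathrm{B}\,\lambda_N(\sqrt f)\|_{S_{2p}(L_2(N),L_2(M_m))}^2$, which at $a\propto 1$ already equals $\mu^{1/p'}\|f\|_p$. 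To see this is the supremum I would prove $\|R_{\bar a}\mathrm{B}\lambda_N(\sqrt f)\|_{S_{2p}}\le\mu^{1/2p'}\|a\|_{2p}\|f\|_p^{1/2}$ by Stein interpolation (Theorem \ref{stein}), exactly as in the proof of Theorem \ref{compair}, inserting the analytic powers $\bar a^{\,pz}$ and $f^{\,pz/2}$: at $p=1$ one has the identity $\|R_{\bar a}\mathrm{B}\lambda_N(\sqrt f)\|_{S_2}^2=\operatorname{tr}(\theta_f(\bar a^2))=\operatorname{tr}(\bar a^2)=\|a\|_2^2$ by trace preservation (giving constant $\mu^0=1$), while at $p=\infty$ the bound $\|R_{\bar a}\mathrm{B}\lambda_N(\sqrt f)\|_\infty\le\|a\|_\infty\,\|\mathrm{B}\|\,\|\sqrt f\|_\infty=\mu^{1/2}\|a\|_\infty\|f\|_\infty^{1/2}$ uses $\|\mathrm{B}\|=\sqrt\mu$. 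Hence $\|\theta_f:S_1\to S_p\|_{cb}=\mu^{1/p'}\|f\|_p$, with the supremum attained at the maximally entangled state.

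Differentiating this closed form at $p=1$ (where it equals $1$), with $\tfrac{d}{dp}\|f\|_p|_{p=1}=\tau(f\ln f)$ (the Remark after Corollary \ref{entropyin}, cf. \eqref{pentropy}) and $\tfrac{d}{dp}\mu^{1/p'}|_{p=1}=\ln\mu$, yields $-S_{cb}(\theta_f)=\ln\mu+\tau(f\ln f)$, the optimum occurring at maximally entangled inputs as already observed. I expect the main obstacle to be the upper bound: arranging the analytic family so that Stein's hypotheses hold in the asymmetric $L_p$/Haagerup-tensor framework — the left edge landing in the Hilbert-Schmidt class and the right edge in the operator norm — and keeping the two traces consistent (the full trace on $M_m$ against the normalized trace $\tau$ on $N$, which is exactly what makes the factor $\mu^{1/p'}$ appear). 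The two quiet workhorses $D=\mu\,id$ and $\operatorname{tr}_{L_2(N)}(J_N\,\cdot\,J_N)=n\tau(\cdot)$ are what force both bounds to collapse to the same value.
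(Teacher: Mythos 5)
Your proposal is correct, and it reaches the paper's formula through the same overall skeleton — an exact evaluation of $\|\chi_{\theta_f}\|_{M_m(S_p^m)}=\mu^{1-1/p}\|f\|_p$ followed by differentiation through \eqref{djkr}, with attainment read off from the maximally entangled input — but the two hard steps are implemented differently. For the upper bound the paper interpolates the single linear map $\Theta(f)=\chi_{\theta_f}$ between the endpoints $\|\Theta:L_\infty(N)\to M_m(M_m)\|\le\mu$ (a consequence of the spatial identity $\chi_{\theta_f}=\mathrm{B}f\mathrm{B}^*=\mu\, wfw^*$ with $w=\mu^{-1/2}\mathrm{B}$ an isometry) and $\|\Theta:L_1(N)\to M_m(S_1^m)\|\le 1$ (trace preservation), invoking only the constant-map estimate \eqref{interpolation}. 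You instead fix $f$, reduce the $M_m(S_p^m)$-norm via \eqref{QPP} to a supremum over positive $a$ with $\|a\|_{2p}\le 1$, identify $(id\otimes\theta_f)(|a\rangle\langle a|)$ with the environment marginal $\sqrt f\,\mathrm{B}^*R_{\overline{a^2}}\mathrm{B}\,\sqrt f$, and run a genuine Stein interpolation (Theorem \ref{stein}) with analytic powers in both $a$ and $f$, in the style of Theorem \ref{compair}. Both work: your edge computations are sound (the $\mathrm{Re}(z)=0$ edge yields $\|\mathrm{B}\|=\sqrt\mu$ by unitarity of the imaginary powers, and the $\mathrm{Re}(z)=1$ edge yields $1$ because $\theta_{f^p}$ is trace preserving once $\tau(f^p)=1$), and your lower bound, computed as the entropy of $\sigma^E=\frac1n J_NfJ_N$ at the maximally entangled state, is exactly the entropic face of the paper's identity $\chi_{\theta_f}=\mu\, wfw^*$. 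The paper's route is more economical — two endpoint estimates and one application of \eqref{interpolation} — while yours is heavier on bookkeeping but makes the operational content (complementary channel, reverse coherent information at the maximally entangled state) explicit, and it is the natural template if one wants the $p$-norm statement input by input.

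Two small repairs are needed. First, your justification of $\operatorname{tr}_{L_2(N)}(J_NgJ_N)=n\tau(g)$ — ``both sides are tracial and agree at $g=1$'' — is not sufficient when $N$ is not a factor, since a non-factor carries a simplex of normalized traces; the correct (and immediate) argument is that $J_N$ is anti-unitary, hence $\operatorname{tr}(J_NTJ_N)=\overline{\operatorname{tr}(T)}$, and the hypothesis $\tau=\frac{tr}{n}|_N$ then gives the identity for self-adjoint $g$. Second, the analytic powers $\bar a^{pz}$ and $f^{pz/2}$ require $a$ and $f$ invertible; as in the proof of Theorem \ref{compair}, one should perturb to $f+\delta 1$ and $a+\delta 1$ and let $\delta\to 0$. (Also watch the stray factor $\frac1m$ in your formula for the $E$-marginal: it belongs to the normalization of the input vector, not to the operator, and it disappears in the normalized supremum you actually use.) None of these affects the validity of the argument.
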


\begin{proof} First, the equality $\mu=\frac{m}{n}$ follows easily from computing the traces,
\begin{align*}m=tr\ten \tau (U^*U) = tr_{B(L_2(N))}({\rm B}^*{\rm B})=tr_{B(L_2(N))}(\mu \pl id_{L_2(N)})=n \mu\pl.\end{align*}
Let $\psi_m$ be a maximally entangled state in $M_m\ten M_m$ and a matrix $a$ be in $M_m$. Then \[(a\ten 1)(m^{1/2}\ket{\psi_m})=\sum_{ij}a_{ij} |i\ran |j\ran=|a\ran\] is the GNS vector of $a$ in $L_2(M_m, tr)$. This implies that
 \begin{align*}
 (\theta_f\ten id)(m|\psi_m\ran \lan\psi_m|)
 &= \sum_{i,j} \tau(y_ify_j^*) [(x_i\ten 1)m|\psi_m\ran \lan\psi_m|(x_j^*\ten 1)]\\
 &\lel \sum_{i,j} \tau(y_ify_j^*) |x_i\ran \lan x_j|
 ={\rm B}f{\rm B}^* \lel \mu\pl wfw^*  \pl ,
 \end{align*}
 where  $w=\mu^{-1/2}{\rm B}$ is a partial isometry satisfying $w^*w=id_{L_2(N)}$. Therefore $\pi(T)=wTw^*$ is a faithful $*$-homomorphism from $B(L_2(N))$ to $B(L_2(M_m))$ and
\begin{align*} tr_{B(L_2(N))}(T) \lel tr_{B(L_2(M_m))}(\pi(T)) \end{align*}
holds for all $T\in B(L_2(N))$. By our assumption $n=\dim N$ and $\tau(T)=n^{-1} tr_{B(L_2(N))}(T)$ for $T\in N$, this implies
\begin{align*}  \|f\|_{L_p(N,\tau)}^p \lel n^{-1}tr_n(|f|^p) \lel
 n^{-1}tr_{m^2}(\pi(|f|^p)) \lel n^{-1}\|w^*fw\|_{S_p^{m^2}}^p \pl .\end{align*}
Therefore we get
 \begin{align*}
   &\|f\|_{L_p(N,\tau)} \lel n^{-1/p}\|w^*fw\|_p
   \lel \mu^{-1}n^{-1/p}\|{\rm B}f{\rm B}^*\|_{p}=
 \mu^{-1}n^{-1/p}
    \|(\theta_f\ten id)(m|\psi_m\ran \lan\psi_m|)\|_{p}\\
 &=
    \mu^{-1}n^{-1/p}
   \|(id\ten \theta_f)(m|\psi_m\ran \lan\psi_m|)\|_p
    \lel \mu^{-1} n^{-1/p} \|\chi_{\theta_f}\|_{p} \pl .
   \end{align*}
For the fourth equality we use that the tensor flip map \[{\rm \flip}(T\ten S)=S\ten T\] is a trace preserving $^*$-homomorphism. In particular, for $p=\infty$ we have
 \begin{align}\label{infty} \|\chi_{\theta_f}\|_{M_m(M_m)} \lel \mu \|f\|_{L_{\infty}(N)} \pl .\end{align}
Moreover, by the definition \eqref{QPP}, we have a lower bound for $M_m(S^m_p)$ norm,
 \begin{equation} \label{lower}
  \|\chi_{\theta_f}\|_{M_m(S_p^m)}\gl m^{-1/p}
 \|\chi_{\theta_f}\|_{S_p^{m^2}}
 \lel \mu n^{1/p}m^{-1/p} \|f\|_p \lel \mu^{1-1/p} \|f\|_p\pl .
 \end{equation}
For the upper bound, we use interpolation. Consider the channel map $\Theta(f)=\chi_{\theta_f}$, by \eqref{infty} it satisfies
 \begin{align*} \|\Theta: L_{\infty}(N)\to M_m(M_m)\|\kl \mu \pl. \end{align*}
On the other hand, for any $H_{A}$ and $\rho\in S_1(H_A\ten H_{A'})$
 \begin{equation}
 \begin{split}
  \|(id_A \ten \theta_{f})(\rho)\|_{S_1(H_A\ten H_{B})}
 &= \|(id_{AB}\ten \tau)(1_A\ten U(\rho \ten f)1_A\ten U^*)\|_{S_1(H_A\ten H_{B})} \\&\le \|\rho \ten f\|_{S_1(H_A\ten H_{A'})\hat{\ten} L_1(N)}\lel \norm{\rho}{1}\norm{f}{1}
 \pl . \end{split}
 \end{equation}
This implies for arbitrary $f\in N$
\begin{align*}\norm{\chi_{\theta_f}}{M_m(S_1^m)}=\norm{\theta_f: S_1^m \to S_1^m}{cb}\le \norm{f}{1}\pl, \end{align*}
  and hence $\|\Theta:L_1(N)\to M_m(S_1^m)\|\le 1$. By interpolation \eqref{interpolation}, we deduce that
  \begin{align}\label{upth} \|\Theta:L_p(N)\to M_m(S_p^m)\|\kl \mu^{1-1/p} \pl .\end{align}
Combining \eqref{upth} with \eqref{lower}, the upper and lower bound coincide
 \begin{align*}\|\chi_{\theta_f}\|_{M_m(S_p)} \lel  \mu^{1-1/p}\|f\|_p \pl .\end{align*}
Differentiation \eqref{djkr} implies the formula for
$-S_{cb}(\theta_f)$. Since we used a maximally entangled state $\psi_m$ for the lower bound, this concludes the proof. \qd

\begin{rem}\label{lambda} {\rm In our previous setting we considered $B = \sum_{i=1}^n |x_i\ran_{L_2(M)} \ten \lan y_i^*|$, where we use the right action of $M'$ on $L_2(M,tr)$. These two operators $B$ and ${\rm B}$ are actually related by a partial isometry. Assume $x=J_Mx'^*J_M$ for some $x'\in M$, consider the map
\[W: L_2(M) \to L_2(M_m)\pl , \pl |x\ran_{L_2(M)}=\ket{x'}_{L_2(M)} \to \ket{x}_{L_2(M_m)}\pl.\]
This is well-defined because $M\cong JM'J\subset M_m$ as a standard form. We can choose the specific orthogonal basis $\{\ket{h_i}\}\subset L_2(M) \cong l_2^m$ which satisfies $\sum_j h_jh_j^*=1$. Then for any $x, y\in M'$,
\begin{align*}
\bra{y}x\ran_{L_2(M_m, tr)}&=tr(y^*x)=\sum_i \bra{h_i}y^*x\ket{h_i}=\sum_i \bra{h_i}J_My'x'^*J_M\ket{h_i}\\&=\sum_i \bra{h_i}h_ix'y'^*\ran=tr(\sum_i h_i^*h_ix'y'^*)= tr(x'y'^*)=\bra{y'}x'\ran_{L_2(M)}\pl ,
\end{align*}
Thus $WB\lel {\rm B} $. Of course, this does not change ${\rm B}^*{\rm B}=B^*B$, and hence we may combine Theorem \ref{cbentropy} with Theorem \ref{comp}.
}
\end{rem}


We first have a hashing bound by maximally entangled states.
\begin{prop}\label{lowerbound} Under the assumption of Theorem \ref{cbentropy}, let $\theta_f(1)=\omega_f$. Then $\fs{1}{m}\omega_f$ is a density in $M_m$, and
\begin{enumerate}
  \item[i)] $-S_{cb}(\theta_f)+H(\frac1m \om_f)\le C_{EA}(\theta_f)\le -S_{cb}(\theta_f)+\ln m \pl,$
 \item[ii)]  $-S_{cb}(\theta_f)+H(\frac1m \omega_f)-\ln m\le Q^{(1)}(\theta_f)\le Q(\theta_f)\le \frac12 (-S_{cb}(\theta_f)+\ln m )\pl.$
 \end{enumerate}
In particular, if $\theta_f$ is unital, then
 \begin{align*} -S_{cb}(\theta_f)+\ln m = C_{EA}(\theta_f),\ \
&-S_{cb}(\theta_f)\le Q^{(1)}(\theta_f)\le Q(\theta_f)\le \frac12 (-S_{cb}(\theta_f)+\ln m )\pl. \end{align*}
\end{prop}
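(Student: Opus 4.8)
The plan is to derive all four inequalities from two ingredients: evaluating the relevant information quantities on the maximally entangled input $\ket{\psi_m}$, which produces the lower (hashing-type) bounds, and combining the dimension estimate $H(B)\le \ln m$ with the definition of $-S_{cb}$, which produces the upper bounds. The preliminary claim that $\frac1m\om_f$ is a density is immediate: $\theta_f$ is completely positive, so $\om_f=\theta_f(1)\gl 0$, and trace preservation gives $tr(\om_f)=tr(1)=m$.

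First I would set $\si^{AB}=id_A\ten\theta_f(\ketbra{\psi_m})$ and compute its marginals. Since the reduced state of $\psi_m$ on $A$ is maximally mixed, $H(A)_\si=\ln m$; and $\si^B=\theta_f(\frac1m 1)=\frac1m\om_f$, so $H(B)_\si=H(\frac1m\om_f)$. For the joint entropy I would invoke Theorem \ref{cbentropy}, which asserts that the supremum defining $-S_{cb}(\theta_f)$ is attained at maximally entangled states; hence at $\psi_m$ one has $H(A)_\si-H(AB)_\si=-S_{cb}(\theta_f)$, so that $H(AB)_\si=\ln m-(-S_{cb}(\theta_f))$. (Equivalently this can be read off from the identity $\si^{AB}=\frac1n wfw^*$ established in the proof of Theorem \ref{cbentropy}.) Substituting into $I(A;B)_\si=H(A)_\si+H(B)_\si-H(AB)_\si$ and $I_c(A\ran B)_\si=H(B)_\si-H(AB)_\si$ yields $I(A;B)_\si=-S_{cb}(\theta_f)+H(\frac1m\om_f)$ and $I_c(A\ran B)_\si=-S_{cb}(\theta_f)+H(\frac1m\om_f)-\ln m$; since $C_{EA}$ and $Q^{(1)}$ are suprema over pure inputs, these two evaluations are exactly the desired lower bounds.

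For the upper bounds I would argue pointwise over pure inputs. For every such input, $I(A;B)=H(B)+(H(A)-H(AB))\kl \ln m+(-S_{cb}(\theta_f))$, using $H(B)\le \ln|B|=\ln m$ and the definition of $-S_{cb}$; taking the supremum gives $C_{EA}(\theta_f)\kl -S_{cb}(\theta_f)+\ln m$. For the quantum capacity I would use the elementary inequality $I_c(A\ran B)\le \frac12 I(A;B)$, which follows from the Araki--Lieb inequality $H(AB)\ge H(B)-H(A)$; combined with the previous display this gives $I_c(A\ran B)\le \frac12(\ln m+(-S_{cb}(\theta_f)))$ for every pure input, hence $Q^{(1)}(\theta_f)\le \frac12(\ln m+(-S_{cb}(\theta_f)))$. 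To pass to the regularized $Q$ I would apply this bound to $\theta_f^{\ten k}=\theta_{f^{\ten k}}$, which again satisfies the hypotheses of Theorem \ref{cbentropy} (now with constant $\mu^k$); using additivity of the cb-entropy, $-S_{cb}(\theta_f^{\ten k})=k(-S_{cb}(\theta_f))$, together with output dimension $m^k$, dividing by $k$ and letting $k\to\infty$ leaves the bound unchanged, so $Q(\theta_f)\le \frac12(-S_{cb}(\theta_f)+\ln m)$; the chain $Q^{(1)}\le Q$ is trivial.

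Finally, in the unital case $\om_f=\theta_f(1)=1$, so $H(\frac1m\om_f)=\ln m$. The lower and upper bounds for $C_{EA}$ then both equal $-S_{cb}(\theta_f)+\ln m$, giving the stated equality, and the $Q^{(1)}$ lower bound collapses to $-S_{cb}(\theta_f)$, producing the final chain. I expect the only delicate points to be the correct identification of $H(AB)_\si$ at $\psi_m$ --- which is precisely where Theorem \ref{cbentropy} carries the load --- and ensuring that the factor-$\frac12$ bound survives regularization, which is handled through the additivity of $-S_{cb}$ and the self-reproducing form $\theta_f^{\ten k}=\theta_{f^{\ten k}}$.
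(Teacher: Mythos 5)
Your proof is correct. Three of the four bounds are obtained exactly as in the paper: the lower bounds for $C_{EA}$ and $Q^{(1)}$ by evaluating $I(A;B)$ and $I_c(A\rangle B)$ at the maximally entangled input and using the attainment part of Theorem \ref{cbentropy} to identify $H(A)-H(AB)=-S_{cb}(\theta_f)$ there, and the upper bound for $C_{EA}$ by the grouping $I(A;B)=H(B)+[H(A)-H(AB)]\le \ln m+(-S_{cb}(\theta_f))$; your justification via $H(B)\le\ln m$ is in fact the cleaner one, since the paper's cited inequality $H(A)=H(A')\le\ln m$ only yields the bound after the same regrouping. The genuine difference is the upper bound $Q(\theta_f)\le\frac12(-S_{cb}(\theta_f)+\ln m)$. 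The paper gets it in one line from the operational facts $Q\le Q_{EA}$ and $Q_{EA}=\frac12 C_{EA}$ (teleportation and superdense coding), which also dispenses with regularization because these are genuine, already-regularized capacities. You instead prove the single-letter inequality $I_c(A\rangle B)\le\frac12 I(A;B)$ from Araki--Lieb and then regularize by hand, applying the bound to $\theta_f^{\otimes k}=\theta_{f^{\otimes k}}$ and using additivity of the cb-entropy, $-S_{cb}(\theta_f^{\otimes k})=k(-S_{cb}(\theta_f))$, which here also follows directly from the formula of Theorem \ref{cbentropy} since its hypotheses are stable under tensor powers. Your route is more self-contained, avoiding the entanglement-assisted capacity theorem entirely, at the price of the extra tensorization step; the paper's route is shorter but imports standard quantum Shannon theory. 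Both arguments are sound.
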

\begin{proof}
In the proof of Theorem \ref{cbentropy} we have seen
that $-S_{cb}(\theta_f)$ is attained at a maximally entangled state. This implies
\begin{align*} Q^{(1)}(\theta_f)\gl H(B)-H(A)+(H(A)-H(AB))
 \lel H(\frac1m \omega_f)-\ln m+ (-S_{cb}(\theta_f) )\pl .\end{align*}
The estimate $C_{EA}(\theta_f)\le -S_{cb}(\theta_f)+ \ln m $ follows from ${H(A)=H(A')\le \ln |A|}$ for pure inputs $\rho^{AA'}$. For the lower bound, we see that $C_{EA}(\theta_f)\gl -S_{cb}(\theta_f)+H(\frac1m \omega_f)$ by a maximally entangled input. Moreover, since
$Q\le Q_{EA}=\frac12 C_{EA}$, we deduce the second upper bound for $Q(\theta_f)$.  If $\theta_f$ is unital, $H(\frac1m \omega_f)=H(\frac1m 1)=\ln m$.
\qd
\begin{rem} {\rm Under the assumptions of the Theorem \ref{comp}, we can show that $\theta_f(1)=\E_{M'}(BfB^*)$. Indeed, since the inclusion $M=\oplus_{k=1}^{d} (M_{n_k}\ten 1_{M_{n_k}}) \subset B(L_2(M))$ is standard, we can find an orthonormal basis $\{\frac{1}{n_k}e^k_{rs}|\pl 1\le r,s\le n_k,\pl 1\le k\le d\}$ where the index set has $m=\sum_n n_k^2$ many elements. Denote this basis by $\{
\ket{h_j}|1\le j\le m\}$. For any orthonomal basis we have
$\sum_j \ketbra{h_j}=1$. Thus we get
  \[ \omega_f\lel \theta_f(1) \lel
  \theta_f(\sum_j|h_j\ran\lan h_j|) \lel \sum_j h_jBfB^*h_j^*  \pl .\]
However, for any unitary $u\in M$, $\{\ket{h_ju}\}_{1\le j \le m}$ is also an orthonomal basis and hence, as above, we get
 \[ \omega_f \lel \sum_j h_ju(BfB^*)u^*h_j^* \pl. \]
Averaging over the Haar measure on $U(M)$, we obtain
\begin{align*}
\omega_f &= \sum_j \int_{U(M)} h_ju(BfB^*)u^*h_j^* du
\lel \sum_j h_j\E_{M'}(BfB^*)h_j^*  \\
&=   \E_{M'}(BfB^*) \sum_j h_jh_j^*  \lel \E_{M'}(BfB^*) \pl .
\end{align*}
Here we used that the specific basis satisfies $\sum_j h_jh_j^*=1$ again. Let us recall that C1)-C3) implies $\E_{M}(BfB^*)=1$ for densities $f$, but not necessarily true for $\E_{M'}(BfB^*)$. Actually, a nonunital example is provided in Section 8.}
\end{rem}
Now we are ready to summarize the estimates for quantum capacity. We combine the condition C3) and C4) to be condition C3$'$) as below.
\begin{theorem}\label{best} Let $N\subset B(L_2(N))$ be a von Neumann algebra with induced normalized trace $\tau$. Let $U$ be a unitary in $M_m\otimes N$. For a density $f \in N$, the VN-channel $\theta_f:S_1^m\rra S_1^m$ is given by
\begin{align*}
\theta_f(\rho)=id\otimes \tau(U(\rho\otimes f )U^*).
\end{align*}
Assume that
\begin{enumerate}
\item[C1)] there exist a subalgebra $M\subset M_m$ as a standard inclusion;
\item[C2)] the unitary $U$ admits a tensor representation $U=\sum_i x_i\ten y_i\in M'\ten N$ with ${x_i\in M', y_i\in N}$;
\item[C3$'$)] the operator $B=\sum_i |x_i\ran\ten \lan y_i^*|\in B(L_2(N), L_2(M))$ is a unitary, i.e. ${BB^*= \pl  id_{L_2(M)}}$ and $B^*B= \pl  id_{L_2(N)}$.
\end{enumerate}
Let $M=\oplus_k (M_{n_k}\otimes 1_{M_{n_k}})\subset M_m$ and $\omega_f=\theta_f(1)$. Then
 \begin{enumerate}
\item[i)] $-S_{cb}(\theta_f)=\tau(f\ln f)$;
\item[ii)] $\tau(f\ln f)+H(\frac{1}{m}\omega_f)\le C_{EA}(\theta_f)=2Q_{EA}(\theta_f)\le \ln m+\tau(f\ln f)$;
\item[iii)] $Q(\theta_f)\le Q_{EA}(\theta_f)\le
\frac12(\ln m+\tau(f\ln f))$ and
\[ \max \{\ln d_M, H(\frac1m f)-H(\frac{1}{m}\omega_f)\}\!\le\!
  Q^{(1)}(\theta_f)\le Q(\theta_f)\!\le \! Q^{(pot)}(\theta_f)\!\le\!
  \tau(f\ln f)+ \ln d_M.\]
 \end{enumerate}
\end{theorem}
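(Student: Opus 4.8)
The plan is to assemble the statement from the quantitative results already in hand, since hypothesis C3$'$) packages C1)--C4) together: as $B$ is unitary we have $B^*B=\mu\,id_{L_2(N)}$ with $\mu=1$, and the trace computation in the proof of Theorem \ref{cbentropy} then forces $\mu=m/n=1$, i.e. $m=n$. Fixing this identification first is what lets me pass between $\tau(f\ln f)$ and von Neumann entropies: with $\tau=\tfrac{tr}{m}|_N$ the identity $\tau(f\ln f)=\ln m-H(\tfrac1m f)$ from the remark after Corollary \ref{entropyin} becomes available. Part i) is then immediate: feeding $\mu=1$ into Theorem \ref{cbentropy} gives $-S_{cb}(\theta_f)=\ln\mu+\tau(f\ln f)=\tau(f\ln f)$, with the optimum attained at the maximally entangled state.

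For ii) I would combine i) with Proposition \ref{lowerbound} i), whose bounds $-S_{cb}(\theta_f)+H(\tfrac1m\omega_f)\le C_{EA}(\theta_f)\le -S_{cb}(\theta_f)+\ln m$ turn into the claimed chain once $-S_{cb}(\theta_f)=\tau(f\ln f)$ is substituted. The remaining equality $C_{EA}(\theta_f)=2Q_{EA}(\theta_f)$ is not internal to the paper; I would invoke the standard identification of entanglement-assisted quantum and classical capacities via teleportation and superdense coding (\cite{BSST}), already used in the proof of Proposition \ref{lowerbound}. This is the one genuinely external ingredient. The two soft upper bounds of iii), namely $Q(\theta_f)\le Q_{EA}(\theta_f)=\tfrac12 C_{EA}(\theta_f)\le\tfrac12(\ln m+\tau(f\ln f))$, then follow from ii) together with the obvious monotonicity $Q\le Q_{EA}$, while the ordering $Q^{(1)}\le Q\le Q^{(pot)}$ is definitional.

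The substance of iii) is collecting the sharp bounds. For the upper estimate I would use Proposition \ref{potential}, $Q^{(pot)}(\theta_f)\le\tau(f\ln f)+Q^{(pot)}(\theta_1)$, together with Lemma \ref{mu} identifying $\theta_1=\E_M$ and Proposition \ref{apropiori} giving $Q^{(pot)}(\E_M)=\ln d_M$. For the lower bounds, the entropic term comes from evaluating the hashing bound of Proposition \ref{lowerbound} ii) at the maximally entangled state and rewriting $-S_{cb}$ via i) and the trace--entropy identity; the term $\ln d_M\le Q^{(1)}(\theta_f)$ is where the structural input enters. Here I would read Lemma \ref{mu} as a degradation statement: the factorization $\theta_1=\theta_1\theta_f=\E_M\circ\theta_f$ exhibits $\theta_1$ as a post-processing of $\theta_f$, so the data-processing inequality for the coherent information gives $I_c(A\!\ran\! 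B)_{\sigma_1}\le I_c(A\!\ran\! B)_{\sigma_f}$ for every input and hence $\ln d_M=Q^{(1)}(\theta_1)\le Q^{(1)}(\theta_f)$ by Proposition \ref{apropiori}. The main obstacle is organizational rather than analytic: one must keep the normalization $m=n$ consistent so the entropy rewriting is legitimate, and correctly interpret Lemma \ref{mu} so that the $\ln d_M$ lower bound drops out of data processing instead of demanding a fresh computation.
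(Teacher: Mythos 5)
Your proposal is correct and is essentially the paper's own proof, which is exactly this assembly: $\dim N=\dim M=m$ from C3$'$), Theorem \ref{cbentropy} with $\mu=1$ for i), Proposition \ref{lowerbound} together with the fact $Q\le Q_{EA}=\frac12 C_{EA}$ for ii) and the soft bounds in iii), and Corollary \ref{cbounds}/Proposition \ref{potential} combined with Proposition \ref{apropiori} and Lemma \ref{mu} for the upper bound $\tau(f\ln f)+\ln d_M$. Your data-processing argument for $\ln d_M\le Q^{(1)}(\theta_f)$ --- reading Lemma \ref{mu} as the factorization $\theta_1=\E_M\circ\theta_f$ and invoking monotonicity of coherent information under post-processing on Bob's side --- makes explicit the one step that the paper's citation list does not literally cover; it matches how the paper itself uses Lemma \ref{mu} in the remark establishing $C_{CQE}(\theta_1)\subset C_{CQE}(\theta_f)$, so this is a gap filled rather than a different route. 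One point worth recording: your rewriting of the hashing bound yields $H(\frac1m\omega_f)-H(\frac1m f)$ as the entropic lower bound, not the printed $H(\frac1m f)-H(\frac1m\omega_f)$; your version is the one actually produced by Proposition \ref{lowerbound} ii) and is consistent with the unital remark after the theorem and with the non-unital example of Section 8.6, so the theorem's statement has the two entropy terms transposed.
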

\begin{proof} Note that $\dim N=\dim M=m$ follows from the assumption C3$'$).
Then combine Corollary \ref{cbounds}, Proposition \ref{apropiori}, Theorem \ref{cbentropy} and Corollary \ref{lowerbound} with fact ${Q\le Q_{EA}=\frac{1}{2}C_{EA}}$. \end{proof}

\begin{rem}{\rm To compare the two upper bounds of $Q(\theta_f)$, we denote by  $\delta=\frac12\ln m-\ln d_M$ the representation gap. If we have $\tau(f\ln f)<2\delta$, then $\tau(f\ln f)+\ln d_M< \frac12{\ln n +\tau(f\ln f)}$, then the comparison bound is better. Otherwise, the entanglement-assisted quantum capacity $Q_{EA}$ gives a better upper bound. We will find examples where $\delta=0$, and hence the comparison property leads to worse bounds for $Q$, but the majorization of $Q^{(p)}$ is not trivial in any case.}
\end{rem}
\begin{rem}{\rm If in addition $\theta_f$ is unital, then the estimates becomes
\begin{enumerate}
\item[i)]
 $\max \{\ln d_M, \tau(f\ln f)\}\!\le\! Q^{(1)}(\theta_f)\le Q(\theta_f)\!\le\! Q^{(pot)}(\theta_f)\!\le\! \tau(f\ln f)+ \ln d_M\pl ;$
\item[ii)] $-S_{cb}(\theta_f)=\tau(f\ln f),\ C_{EA}(\theta_f)=2Q_{EA}(\theta_f)= \ln m +\tau(f\ln f)\pl .$
\end{enumerate}
The Figure.1 gives an illustration of this case.}
\end{rem}
\begin{figure}
\centering
\includegraphics[width=0.5\textwidth]{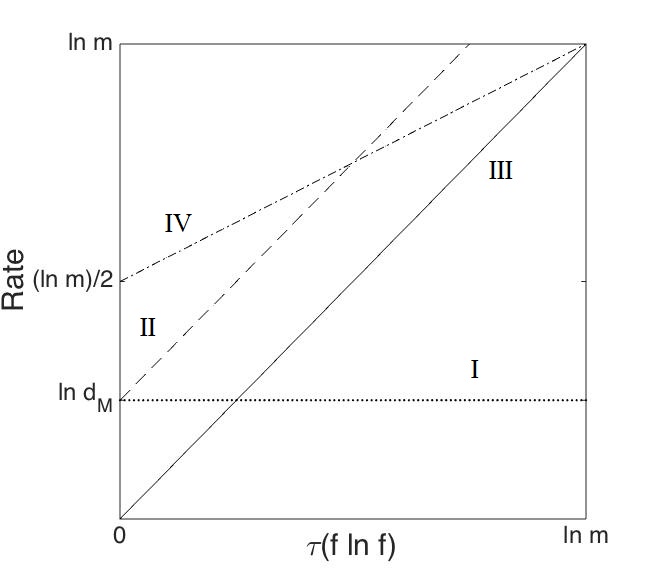}
\caption{Combined bounds for quantum capacity of $\theta_f$ depending on $\tau(f\ln f)$. $\ln d_M$ varies from $0$ to $\frac12 \log m$. Curve I is $R=\ln d_M$. Curve II is $R=\ln d_M+\tau(f\ln f)$. Curve III is $R=\tau(f\ln f)$. Curve IV is  $R=\frac12(\ln m+\tau(f\ln f))$. The real values of $Q^{(1)}$ and $Q$ are in the quadrilateral surrounded by four lines. When $\ln d_M$ is small, our estimates are tight. This is the figure for $\ln d_M=\frac14 \ln m$.}
\end{figure}

\section{Examples}
\subsection{Group channels}
Starting from a finite group $G$, we will construct two classes of channels. We will use the
quantum group framework \cite{JNR} for both of these constructions. From a harmonic analysis point of view, group channels were also discussed in \cite{Neufang} for general locally compact groups. We restrict ourselves to finite groups here.

\subsubsection{Hadamard channels}
Generalized dephasing channels, as a special case of Hadamard channels, are called Schur multipliers in the operator algebra literature. The Hadamard channels are known to be degradable (see \cite{DS}), hence the quantum capacity does not require regularization, i.e. $Q^{(1)}=Q$. Our estimates overlap with  the quantum capacity formula in \cite{Neufang} for finite groups, but both approaches are based on the unfortunately unpublished joint work \cite{JNR2}. The arguments, however, are different. Our approach  provides a new proof of
$Q=Q^{(p)}$ for these particular Schur multipliers, but this is already known thanks to the fact that Hadamard channels are strongly additive for $Q^{(1)}$ \cite{WD}.

Suppose $G$ is a finite group with order $|G|=m$ and  $1$ as its identity. We denote the group von Neumann algebra by $L(G)$, the algebra generated by $\{\la(g)|g\in G\}$ . Here $\lambda(g)$ is the left shift unitary defined on $B(l_2(G))$ as follows
\begin{align*}
\la(g)(e_h)=e _{gh}\pl, \ \ \ \ \ \forall h\in G \pl ,\end{align*}
where $\{e_h|h\in G\}$ is the canonical basis of $l_2(G)$, i.e.  $e_h(g)=\delta_{h,g}$. The algebra
of functions $l_\infty(G)$ is dual to $L(G)$ in sense of quantum groups and sits as diagonal matrices in $B(\ell_2(G))$. Let us denote by $e_{g,g}$ the diagonal matrix unit. Then $f=\sum_g f(g)e_{g,g}$ is in $l_\infty(G)$. The normalized traces on $L(G)$ and $l_\infty(G)$ are $\tau$ and $\tau'$ respectively
\begin{align*}\tau(\sum_g \al(g) \la(g))=\al(1),\ \ \tau'(\sum_g f(g)e_{g,g})=\frac1m\sum_g f(g) \pl .\end{align*}
We note that $L_2(l_\infty(G),\tau')\cong L_2(L(G),\tau)\cong l_2(G)$, and $l_\infty(G)\subset B(l_2(G))$, $L(G)\subset B(l_2(G))$ are both standard inclusions. The matrix Schur multiplication (or Hadamard product) is given by (here and in this section ``$*$'' always denotes the Schur multiplication for two matrices)
\begin{align*}(a_{ij})*(b_{ij})=(a_{ij}\cdot b_{ij}).\end{align*}
It is a well-known fact (see \cite{Pau}) that the multiplier map for a given matrix $a=(a_{ij})$,
\begin{align*}M_a (b)=a* b\ \ \ \text{for}\  b=(b_{ij})\in M_m\pl ,\end{align*}
is completely positive if and only if $a$ is positive. Moreover, $M_a$ is trace preserving if and only if $a_{ii}=1$ for $1\le i\le m$. In our situation, we further restrict the matrix $a$ to be a density in $L(G)$. The Stinespring unitary has the following form
\begin{align*}U=\sum_g e_{g,g}\otimes \la(g) \in l_{\infty}(G)\ten L(G)\pl .\end{align*}
This means $N=L(G)$ will be considered as the algebra of symbols, and $M=M'=l_{\infty}(G)$. The VN-channel depending on a density $\rho=\sum \rho(g)\la(g)\in L(G)$ is defined as follows,
\begin{align*}\theta_\rho(\omega)&=id \otimes \tau[(\sum_g e_{g,g}\otimes \la(g)\omega\ten \rho  (\sum_g e_{g,g}\otimes \la(g)^*)]\\&=\sum_{g,g'} \tau (\la(g)\rho \la(g')^*)e_{g,g} \omega e_{g',g'} \lel (\rho(g^{-1}g')){\rm *}(\omega_{g,g'})\ \pl ,
\end{align*}
where $\omega=\sum_{g,g'} \omega_{g,g'}e_{g,g'}\in S_1(l_2(G))$. This is a Schur multiplier by a density in $R(G)$. It is obvious that $\ket{e_{g,g}}$ and $\ket{\la(g)}$ are two orthogonal bases in $L_2(M)$ and $L_2(N)$ respectively. Hence Theorem \ref{best} applies, we obtain
\begin{enumerate}
\item[i)] $-S_{cb}(\theta_\rho)=Q^{(1)}(\theta_\rho)=Q(\theta_\rho)=Q^{(p)}(\theta_\rho)=\tau(\rho\ln \rho)\pl ,$
\item[ii)] Since $\theta_\rho$ is unital, we have \[-S_{cb}(\theta_\rho)+\ln m=C_{EA}(\theta_\rho)=2Q_{EA}(\theta_\rho)= \ln m+\tau(\rho\ln \rho)\pl,\] and these are attained at a maximally entangled state.
\end{enumerate}
 Note here $M=l_{\infty}(G)$ is commutitave, we have $\ln d_{l_\infty(G)}=0$. Thus in Figure.1 the Curve II and Curve III coincide and give the equality. In \cite{Neufang}, the formula for $Q(\theta_\rho)$ is obtained differently.
\begin{exam}\label{dephasing}{\rm
A well-studied qubit example is the dephasing channel. Let $0\le q\le 1$ be the dephasing parameter, we have
\begin{align*}
\Phi_q \large(\big[ \begin{array}{cc}
    a&b\\
    c&d
  \end{array}\big]\large)=\big[\begin{array}{cc}
    a&qb\\
   q c&d
  \end{array}\big]\pl .
\end{align*}
The channel  can also be expressed using the Pauli matrix $Z=\left[
\begin{array}{ccc}
    1& 0\\
  0&  -1\\
  \end{array}\right]$,
\begin{align*}
\Phi_q(\rho)=(1-\fs{1-q}{2})\rho+\fs{1-q}{2}Z\rho Z\pl .
\end{align*}
This corresponds to $G=\mathbb{Z}_2$ for $\rho=1+qX=\big[\begin{array}{cc}
    1&q\\
   q &1
  \end{array}\big]$ in our setting. We obtain $Q(\theta_\rho)=\tau(\rho\ln \rho)=\ln 2-H(\fs{1+q}{2})$, which is same with the formula in \cite{HW}.}
\end{exam}
When the dimension $m>2$, we cannot recover an arbitrary generalized dephasing channels via the group construction, because the class of channels $\theta_{\rho}$ is a strict subset of all Schur multipliers.
\subsubsection{Random unitary}
 A channel map is called a random unitary channel if it is a convex combination of unitary conjugation. Again, we use the shift unitaries $\{\la(g)\}$ defined above and $U=\sum_g e_{g,g}\otimes \la(g)$ as the Stinespring unitary defined as in the previous case.
 We switch, however,  the roles of the environment and output. This means we consider $M'=L(G)$ and the symbol algebra $N=l_{\infty}(G)$. Thus $M=R(G)$ as the right group von Neumann algebra generating by right shift unitary $\{r(g)|g\in G\}$. For each density $f\in l_\infty(G)$, we define the VN-channel by
\begin{align*} \theta_f(\rho)&=\tau'\ten id (U(f\ten \rho)U^*)=\sum_{g,g'} \tau'(e_{g,g} f e_{g',g'}) \la(g)\rho \la(g')^*\\&=\fs{1}{m}\sum_g f(g)\la(g)\rho \la(g)^*\ , \ \  \  \forall\  \rho\in S_1(l_2(G))\pl .\end{align*}
Two extreme cases are $f=m \pl e_{g,g}$ and $f=1$. The former one is a perfect unitary conjugation channel by $\la(g)$, and the latter one is the conditional expectation onto ${M=R(G)}$. Thanks to the Peter-Weyl theorem, here the index  $d_{R(G)}$ is the largest degree of irreducible representations, or the dimension of the largest irreducible representations. For short, we denote $d_G\equiv d_{R(G)}$. Theorem \ref{best} implies,
\begin{enumerate}
\item[i)] $\max \{\ln d_{G}, \tau(f\ln f)\}\le \! Q^{(1)}(\theta_f)\!\le\! Q(\theta_f)\le Q^{(p)}(\theta_f)\!\le \!\tau(f\ln f)+\ln d_{G} ;$
\item[ii)] $-S_{cb}(\theta_f)+\ln m=C_{EA}(\theta_f)=2Q_{EA}(\theta_f)= \ln m+\tau(f\ln f)$ is attained at a maximally entangled state.
\end{enumerate}
\re When the group $G$ is abelian, $R(G)$ is a commutative algebra. Then $d_G=0$, so upper and lower bounds coincide as the Hadamard channels:
\begin{align*}-S_{cb}(\theta_f)=Q^{(1)}(\theta_f)=Q(\theta_f)=Q^{(p)}(\theta_f)= \tau(f\ln f)\pl .\end{align*}
In this case, we have $R(G)\cong l_\infty (\hat{G})$ with $\hat{G}$ being $G$'s dual group. For finite $G$, $G\cong \hat{G}$ so $\theta_f$ are also Hadamard channels.
\mar
\begin{exam}\label{dihe}{\rm
  The qubit example is the bit-flip channel. Let $G=\mathbb{Z}_2$, the nontrivial shift unitary is the pauli matrix $X=\left[
\begin{array}{ccc}
    0& 1\\
  1&  0\\
  \end{array}
\right]$ . For the flip parameter $0\le q\le 1$ and qubit density $\rho\in S_1^2$,
\[
\Phi_q(\rho)=(1-q)\pl\rho+q\pl X\rho X\pl .
\]
One can see this is unitarily equivalent to the dephasing channel in Example \ref{dephasing} with dephasing parameter $\frac{1-q}{2}$.}
\end{exam}

In general the degree of the largest irreducible representation is not $1$, unless $G$ is commutative. There are several facts in representation theory giving upper bounds for the integer $d_G$. One we will use below is that if $H\subset G$ as an abelian subgroup, then $\max_k n_k\le [G:H]$. We will compare the two upper bounds for $Q$ in the following examples.
\begin{exam}{\rm
For the dihedral groups $D_{2n}$, the group of symmetries of a $n$-regular polygon \cite{dummit}, our estimates are almost optimal. Indeed, for dihedral groups $d_{D_{2n}}$ is always $2$ for any $ \nen$. So our estimates control everything up to one qubit \begin{align*}\max \{\ln 2, \tau(f\ln f)\}\le Q^{(1)}(\theta_f)\le Q(\theta_f)\le Q^{(p)}(\theta_f)\le \tau(f\ln f)+\ln 2\pl .\end{align*}
When $n$ is large and $f$ is close to pure states, $\ln 2$ is small compared to $\tau(f\ln f)$ .}
\end{exam}

\begin{exam}{\rm Let $G$ be the semi-product group $\Z_d^l\rtimes \Z_l$, where $\Z_d^l$ is the $l$ direct sum of cyclic groups $\Z_d$, $\Z_l$ does the shift action as follows,
\begin{align*}
(x_1,x_2,\cdots, x_d, j)(x_1',x_2',\cdots, x_l', j')=(&x_1+x_{1+j}',x_2+x_{2+j}',\cdots, x_l+x_{l+j}', j+j')\pl,
\end{align*}
for any $1\le x_i,x'_i\le d,\ 0\le i, j\le l\pl.$
Note that since $\Z_d^l$ is an abelian subgroup of $G$, then it is easy to see that  $d_{G}\le l$. The comparison bound is better when $\tau(f\ln f)\le l\ln d-2\ln l$. When $d$ is large, $\ln l\ll l\ln \sqrt{d}= |G|^{\frac12}$.}
\end{exam}
\begin{exam}\label{asym}{\rm For the symmetry group $|S_n|=n!$, it is shown in \cite{asym} that there exists constants $c_1, c_2>0$ such that
\begin{align*} -c_1\sqrt{n}\le d_{S_n}-\frac{1}{2} \ln {n!}\le -c_2\sqrt{n} \pl .\end{align*}
This implies that the comparison bound is better if $\tau(f\ln f) \le 2c_2\sqrt{n}$ and the upper bound via $Q_{EA}$ bound is better when $\tau(f\ln f) \ge 2c_1\sqrt{n}$. Note although $[0,2c_2\sqrt{n}]$ is a relatively small region in the range of $\tau(f\ln f)$ (since $n \ll n! =|G|$), it is a definitely gaining part of the comparison estimate when the density $f$ is slightly perturbed from the identity $1$.}
\end{exam}
\subsection{Pauli channels}
Pauli channels are by no means optimal for the comparison bounds, but they do fit in our framework. Pauli channels
are convex combinations of unitary conjugations by Pauli matrices. In high dimensions, we may interpret the Heisenberg-Weyl operators as the generalized Pauli matrices \cite{Wildebook}. These operators are used to establish teleportation and superdense coding in high dimension. Let us consider $\{e_k|1\le k\le n\}$ as the standard basis of an $n$-dimensional complex Hilbert $H= l_2^n$. The generalized Pauli matrices $X$ and $Z$ for an $n$-dimensional system are
\begin{align*}
X(e_k)=e_{k+1},\ \  Z(e_k)=\exp ({\fs{2k\pi i}{n}})e_k\ \  \text{for} \ \ 1\le k \le n\pl .
\end{align*}
For $k=n$ we use the convention $e_{n+1}=e_1$. $X$ and $Z$ satisfy the commutation relations,
\begin{align*}
XZ=\exp({\fs{2k\pi i}{n}})ZX\pl .
\end{align*}
Now an $n$-dimensional Pauli channel can be defined as follows,
\begin{align*}
\theta_f(\rho)=\fs{1}{n^2}\sum_{1\le i, j \le n}f_{ij} X^iZ^j \rho (X^iZ^j)^*\pl .
\end{align*}
In order to be a channel, the coefficient $f_{ij}$ must satisfy $f_{ij}\ge 0, \sum f_{ij}=n^2$. Now we consider $f\in N=l_{\infty}^{n^2}\subset B(l_{2}^{n^2})$, where $N$ is the commutative algebra spanned by ${\{P_{ij}|1\le i,j\le n\}}$ as its rank one projections. The normalized trace (which makes the operator $B$ a unitary) is given by  $\tau(f)=\fs{1}{n^2}\sum_{ij} f(ij)$. We have the Stinespring dilation,
\begin{align*}
\theta_f(\rho)=\sum_{1\le i, j \le n} \tau (P_{ij}f)X^iZ^j \rho (X^iZ^j)^* = id\otimes \tau (U (\rho \otimes f) U^{*}) \pl ,
\end{align*}
where $U$ is a joint unitary in $B(l_2^n)\otimes N$,
\begin{align*}
U=\sum_{1\le i,j\le n} X^iZ^j\otimes P_{ij}\pl .
\end{align*}
One can easily see that $\theta_f$ is unital and $U$ satisfies the assumptions of Theorem \ref{cbentropy}. Indeed $\{X^iZ^j|1\le i, j\le n\}$ is an orthogonal basis for $M_n$ and $\{P_{ij}| 1\le i, j\le n\}$ is an orthogonal basis for $L_2(\ell_{\infty}^{n^2}))$. Thus by Corollary \ref{lowerbound} we deduce that \begin{align}\label{scb}-S_{cb}(\theta_f)=\tau(f\ln f)-\ln n,\ C_{EA}(\theta_f)=2Q_{EA}(\theta_f)= \tau(f\ln f)\pl .\end{align}
For the comparison bound, we consider $\theta_f\ten id_{M_n}$ instead of $\theta_f$. Note that ${\{X^iZ^j\ten 1}_{i,j}$ is an orthogonal  basis for $M_n\otimes 1$ and $M_n\ten 1\subset M_n\ten M_n$ is a standard inclusion as in the Example \ref{GNS}. This allows us to apply Theorem \ref{comp} and its corollary:
\begin{align*}\ln n\le Q^{(1)}(\theta_f\ten id_n)\le Q(\theta_f\ten id_n)\le Q^{(p)}(\theta_f\ten id_n)\le \tau(f\ln f)+\ln n \pl. \end{align*}
Note that $Q^{(p)}$ is subadditive, we find
 \begin{align*}
   Q^{(p)}(\theta_f\ten id_n)= Q^{(p)} (\theta_f)+\ln n \pl .
   \end{align*}
Hence
 \begin{align*}0\le Q^{(1)}(\theta_f)\le Q(\theta_f)\le Q^{(p)}(\theta_f)\le \tau(f\ln f)\pl .\end{align*}
Thus for generalized Pauli channels, the comparison bound is always outperformed by \eqref{scb} and entanglement assistance, i.e. $Q(\theta_f)\le Q_{EA}(\theta_f)=\fs{1}{2}\tau(f\ln f)= \ln n -\frac{1}{2}H(\fs{1}{n^2}f)$ (because $\dim N =n^2$). This in the Figure.1 corresponds to the case $\ln d_M= \frac{1}{2}\ln m$, and hence the Curve IV is always lower then the Curve II. However, by applying an averaging trick, we obtain an new bound for potential quantum capacity $Q^{(p)}$ for high dimension depolarizing channel.
\begin{exam}{\rm The $d$-dimensional depolarizing channel with parameter $q\in [0,1]$ is
 \[ \D_{q}(\rho)\lel q \rho+(1-q)\frac{1}{d} \pl.\]
The depolarizing part $\rho\to \frac{1}{d}$ is actually the generalized Pauli channel with uniform distribution,
\[\frac{1}{d^2}\sum_{i,j}X^iZ^j\rho (X^iZ^j)^*=tr(\rho)\frac{1}{d}.\]
Then $\D_{q}$ is the Pauli channel with the distribution $f_{00}=q+\frac{1-q}{d^2}$, $f_{ij}=\frac{1-q}{d^2}$ for $(i,j)\neq (0,0)$.
Let us first consider the following dephasing channel
\[\Phi_{q'}(\rho)= q' \rho+(1-q')\E(\rho) \pl,\]
where $\E$ is the conditional expectation onto the diagonal matrices (the completely dephasing channel) and $q'\in [0,1]$. This channel dephases the off diagonal entry by a factor $q'$ and by the discussion of 8.1.1 we know
\[Q^{(p)}(\Phi_q')=\log d- \frac{(d-1)q'+1}{d}\log \frac{(d-1)q'+1}{d}-\frac{(d-1)(1-q')}{d}\log\frac{(1-q')}{d}\pl.\]
Similarly, the channel $\rho\to U^*\Phi_q'(U\rho U^*)U$ is also a dephasing channel but to the basis $\{Ue_i\}_i$ instead of the standard basis $\{e_i\}_i$. We claim that the averaging of dephasing channels uniformly on all basis will give us a depolaring channel. Namely for any state $\rho\in M_d$
\[\int_{U(M_d)} U^*\E(U \rho U^*)U=\frac{1}{d+1}\rho +\frac{1}{d+1}\pl. \]
This can be proved by the averaging the Choi matrix. Denote $\E_U=U^*\E(U \cdot U^*)U$, let $\psi_d$ be the maximally entangled state $\sum_{i=1}^d e_i\ten e_i$, then
\begin{align*}\chi_{\E_U}&=id\ten\E_U(d\ket{\psi_d}\bra{\psi_d})=d\pl id\ten U^*\E(1\ten U\ket{\psi_d}\bra{\psi_d}1\ten U^*)U\\&= d id\ten U^*\E(U^t\ten 1\ket{\psi_d}\bra{\psi_d}\bar{U}\ten 1)U\\&=(U^t\ten U^*)id\ten \E(d\ket{\psi_d}\bra{\psi_d})(\bar{U}\ten U)\\&=(U^t\ten U^*)\chi_{\E}(\bar{U}\ten U)\end{align*}
Note that $\chi_\E=\sum_{i=1}^d e_{i,i}\ten e_{i,i}$ and hence the partial transpose on first component gives us
\[t\ten 1((U^t\ten U^*)\chi_{\E}(\bar{U}\ten U))=(U^*\ten U^* )\chi_\E(U\ten U)\pl.\]
By representation theory (\cite{rep}, Proposition $2.2$), we have
\[\int_{U(M)}\chi_{\E_U}=\frac{d}{d+1}\ket{\psi_d}\bra{\psi_d}+\frac{1}{d+1}1\ten 1\pl,\]
which proves the claim. Then for averaging the $q'$-dephasing channel, we have
\[\int_{U(M_d)} U^*\Phi_{q'}(U\rho U^*)U=(q'+\frac{1-q'}{d+1})\rho + \frac{1-q'}{d+1}=D_{q'+\frac{1-q'}{d+1}}(\rho)\pl.\]
Set $q'+\frac{1-q'}{d+1}=q$, by convexity of $Q^{(p)}$ we get
\begin{align} \label{new}
   Q^{(p)}(\D_q) \le &\log d -H(\frac{q(d^2-1)+1}{d^2})\\&-\frac{(d^2-1)(1-q)}{d^2}\log (d-1)
    \pl . \nonumber
\end{align}
It is known that for $q\!\!\le \!\!\frac{1}{d+1}$ the channel $\D_{p}$ becomes entanglement-breaking (it is an averaging of completely dephasing channel.) and hence $Q^{(p)}(\D_{\frac{1}{d+1}})=0$ (see \cite{WD}).
This upper bound \eqref{new} vanishes at $p= 1/(d+1)$ and is convex in the interval $[1/(d+1),1]$. For $d=2$ it is \cite{Winterss} proved the upper bound
\begin{align}
   Q^{(p)}(\D_p) \le  1 -H(\frac{3p+1}{4})\pl,\nonumber
   \end{align}
by using a convex combination of dephasing channels to Pauli-$X,Y,Z$ basis.
Using the unitaries from teleportation one can generalize their method to higher dimension, but that upper estimate only yields the first two terms in \eqref{new}. Since the third term is negative, our upper bound are tighter for $d>2$.
\begin{figure}
\centering
\includegraphics[width=0.5\textwidth]{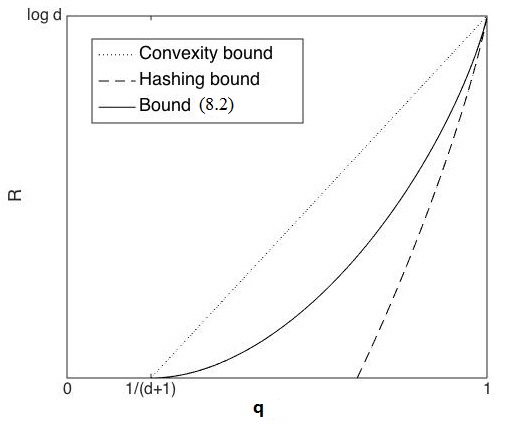}
\caption{Our upper bound for the ``one-shot'' potential quantum capacity of a $d$-dimenional depolarizing channel: The dotted line is the convexity bound by the fact, $Q^{(p)}(\D_{\frac{1}{d+1}})=0$. The dashed curve is the hashing (lower) bound. The solid curve is our new upper bound \eqref{new}. This is the figure for $d=5$.}
\end{figure}}
\end{exam}
\subsection{Majorana-Cliffords}
The fourth class example we consider is Clifford algebra. The Clifford algebra $\mathit{Cl}_n$ has $n$ generators $\{C_i\}_{1\le i\le n}$, which satisfy the CAR (canonical anti-commutative relations):
\[
C_i= C_i^*\ \ , \  C_iC_j+C_jC_i=2 \delta _{ij}\ \ \ \text{for}\ \  \forall 1\le i, j\le n \pl .
\]
The self-adjoint property $C_i = C_i^*$ has a physical interpretation as creation and annihilation operators for Majorana fermions. Proposed candidates for Majorana fermions include supersymmetric analogs of bosons, dark matter, neutrinos, and electron-hole superpositions in topological condensed matter systems \cite{wilczek,kitaev-unpaired}. Recent experiments have observed evidence of Majorana fermions in such condensed matter systems \cite{nadj-perge,li,bunkov}. Condensed matter Majorana modes may serve as the basis for topological quantum computers \cite{wilczek}, such as the physical motivation for the Drinfeld Double example below.

It is a known fact that $\mathit{Cl}_n$ is isomorphic to $2^n$-dimensional matrix algebra $M_{2^n}$. We have the canonical orthogonal basis of $L_2(\mathit{Cl}_n, tr)$ defined by $\{C_A| A \subset [n]\}$, where $[n]=\{1,2,3\cdots ,n\}$ and
\[
C_A=\Pi_{i\in A}C_i:=C_{i_1}C_{i_2}\cdots C_{i_k} \pl \text{with}  \pl i_1<i_2<\cdots <i_k\ \  \text{and}\ \  \{i_1, i_2, \cdots, i_k\}=A\subset [n]\pl .
\]
The order of the product matters because of the CAR. Similar to Pauli channels, let us set $N=l_{\infty}(2^{n})$ equipped with normalized trace $\tau$. The Stinespring unitary is
\[
U=\sum_{A\subset [n]} C_A\otimes P_A \in M_{2^n}\otimes N\pl .
\]
For a density (probability distribution) $f\in l_{\infty}(2^{n})$, we can define a Clifford channel
\[
\theta_f(\rho)\lel id\otimes \tau (U(\rho\otimes f) U^*)\lel \frac{1}{n^2}\sum_{A\subset [n]}f(A)  C_A\rho{C_A}^{*} \pl ,
\]
as random unitaries. By Theorem \ref{best}, we obtain similar results as Pauli channels,
\begin{align}-S_{cb}(\theta_f)=\tau(f\ln f)-n\ln 2\le Q^{(1)}(\theta_f), \ C_{EA}(\theta_f)=2Q_{EA}(\theta_f)= \tau(f\ln f)\pl .\end{align}
Again the upper bound via $Q_{EA}$ is tighter than the one given by the comparison theorem.

\subsection{Quantum group channels}
A finite dimensional quantum group is a Hopf algebra with an antipode. Quantum groups form a class of Hopf algebras that contains groups and their duals. More precisely, we are given a (finite dimensional) algebra $A$ and a $*$-homomorphism $\Delta:A\to A\ten A $ and the co-multiplication which satisfies
 \[ (\Delta\ten id_A)\Delta \lel (id_A\ten \Delta)\Delta \pl .\]
For locally compact quantum groups the antipode is determined by the left and right Haar weight (see \cite{vaes}). Finite dimensional quantum groups are of Kac-type. For us this means that we have a trace $\tau$ such that
 \[ (\tau\ten id)\Delta(x) \lel \tau(x)1 \lel (id\ten \tau)\Delta(x) \pl .\]
More importantly every quantum group (of Kac-type, see \cite{baaj,kac}) admits a (multiplicative) unitary $V\in B(L_2(A))\ten B(L_2(A))$ such that
 \[ \Delta(x) \lel V(x\ten 1)V^*  \pl .\]
Moreover, $V\in \hat{A}\ten A$ (see \cite{baaj} Section 3.6 and 3.8) with dual object $\hat{A}$. Following \cite{JNR} we may define
 \[ \theta_f^{\dag}(T) \lel id\ten \tau((1\ten f)V(T\ten 1)V^*)\pl, \pl \theta_f(\rho)\lel id\ten \tau(V^*(\rho\ten f)V) \pl .\]
Here $\theta_f^{\dag}$ is the adjoint map of the channel $\theta_f$.
Thus we find the Stinespring unitary $U=V^*\in \hat{A}\ten A$ and $\Theta:L_1(A,\tau)\to CB(S_1(L_2(\hat{A}'))$ the channel map. Here we may and will assume that $\tau$ is the restriction of the normalized trace on $B(L_2(A))$.  Thus we set $N=A$ and $M=\hat{A}'$, and they are of the same dimension. It was shown in the unpublished paper \cite{JNR2} that $B$ corresponds to the Fourier transform, and hence sends an orthonormal basis in $L_2(\hat{A}')=L_2(\hat{A},\tau)$ to an orthonormal basis in $L_2(A,\tau)$. Therefore the assumptions of Theorem \ref{best} are all satisfied and in particular,
 \[Q^{(1)}(\theta_f)\le  Q(\theta_f)\le Q^{(p)}(\theta_f)\le 2 Q^{(1)}(\theta_f) \pl .\]

\begin{rem} {\rm Here we have an trivial but interesting observation. Let $A$ be a finite dimensional quantum group with representation ${A=\oplus_k M_{n_k}}$. It is easy to see from representation theory $d_A\le \sqrt{\dim A}$. On the other hand, we perform the construction above for $\hat{A}'$ instead of $A$. Then $\theta_1$ is the conditional on $A$, and hence
 \[ \ln d_{A} \lel Q(\theta_1)\le \frac{1}{2}C_{EA}(\theta_1) \lel \frac{\ln n+\tau(1\ln 1)}{2}= \frac{\ln n}{2}\]
where $n=\dim(A)=\dim(A')$. This gives quantum information perspective of $d_A\le \sqrt{\dim A}$. }
\end{rem}
\subsection{Crossed product}
Our particular Hadamard channels in 8.1.1 and random unitaries in 8.1.2 are quantum group channels for commutative or co-commutative symbol algebra.
 Here we will use crossed products to build a mixture of these two. A connection is found in Kitaev's work on quantum computation by anyons \cite{kitaev}. Given a finite group $G$, we consider the operators $\{A_g, B_g\pl |\pl  g\in G\}$ satisfying the following relations
\begin{align}A_h A_g = A_{hg} \ ,\  B_g B_h = \delta_{g,h} B_g \ , \  A_g B_h = B_{ghg^{-1}} A_g\pl,\pl \forall g,h\in G \pl.\label{cmr}\end{align}
They are the local gauge transformations and magnetic charge operators for vertices on a two-dimesional lattice in which edges correspond to spins. The crossed product corresponds to an algebra of local operators, which commute with the topological operators used to perform quantum computations. For this reason, the local operators generating the crossed product leave a significant subspace invariant, which in Kitaev's physics corresponds to the space of degenerate ground states. This means that the anyonic quantum computer is naturally immune to local perturbations, possibly obviating the need for active error correction and presenting a quantum computation paradigm that resists decoherence due to its underlying physical structure.

 Now consider $l_\infty(G)\subset B(l_2(G))$ as the diagonal matrices. Define the action $\al$ of $G$ acting on $l_\infty(G)$ as automorphism \[\al_g(e_{h,h})=W_ge_{h,h}W_g^*=e_{ghg^{-1},ghg^{-1}}\pl,\]
  where $W_g(e_h)=e_{ghg^{-1}}$ are unitary in $B(l_2(G))$. The (reduced) crossed product $M=l_\infty(G)\rtimes_\alpha G $ is defined to be the algebra generated by the range of the following two representations on $l_2(G,l_2(G))\cong l_2(G)\ten l_2(G)$,
 \begin{align*}
\pi: &l_\infty(G)\rra B(l_2(G)\ten l_2(G))\pl ,\ \ \ \ \pi(x)=1\otimes x\ ;\\
\tilde{\lambda}: & G\ \rra B(l_2(G)\ten l_2(G))\pl ,\ \ \ \
\tilde{\lambda}(g)=\lambda(g)\otimes W_g \pl ,
\end{align*}
where $\lambda$ is the left regular representation of group $G$. We observe that $M,M'\subset B(l_2(G\times G))$ is a standard inclusion, and the operator $J$ and commutant $M'$ are given as follows,
\begin{align*}J(e_g\otimes e_h)=e_{g^{-1}}\otimes e_{g^{-1}hg}\pl ,\pl
J\pi(x)J=\sum_g e_{g,g}\otimes W_gxW_g^*\ ,\ \ J\tilde{\lambda}(g)J=r(g)\otimes 1\pl.\end{align*}
Thus neither $M$ nor $M'$ is commutative. Denote $A_g=\la(g)\otimes W_g$ and $ B_h=1\otimes e_{h,h}$, one can check they satisfy the commutation relations \eqref{cmr} in Kitaev's setting. Now we are ready to use these operators to construct channels.\\
Case 1. Consider the Stinespring unitary $U\in M'\ten B(l_2(G\times G))$
\[U=\sum_{g,h} (A_gB_h)\otimes (\lambda(h)\otimes e_{g,g})\pl ,\]
with the first bracket elements in $M$ and second bracket in $N=L(G)\bar\otimes l_\infty(G)$. For $f\in L(G)\otimes l_\infty(G)$, we can write $f=\sum_g f_g \otimes e_{g,g}$, where each $f_g=\sum_h f_g(h)\la (h)\in L(G)$. The channel for a density $f\in N$ is defined as follows,
\begin{align*}
&\theta_f: S_1(l_2(G\times G))\rra S_1(l_2(G\times G))\\
& \theta_f(\rho)=\sum_{g,g',h, h'} \tau[(\lambda(h)\otimes e_{g,g})f(\lambda(h')^*\otimes e_{g',g'})]A_gB_h\rho (A_{g'}B_{h'})^*\\
&=\sum_{g,h, h'} f_g(h'^{-1}h)\la(g)\rho_{h,h'}\la(g)^*\otimes e_{ghg^{-1},gh'g^{-1}},
\ \ \forall \pl\rho=\sum_{g,h\in G} \rho_{h,h'}\otimes e_{h,h'}\in S_1(l_2(G\times G))\pl .
\end{align*}
One can see that this channel is a mixture of random unitary and Schur multiplier. It is unital because
\begin{align*}\theta_f(1)=\sum_{g,h}\tau(f_g) 1_{B(l_2(G))}\ten  e_{ghg^{-1},ghg^{-1}}=1_{B(l_2(G)\ten l_2(G) )} \pl.\end{align*}
It is easy to check that $U$ satisfies assumptions of Theorem \ref{best}. Note that $dim M=n^2$, we have
\begin{enumerate}
\item[i)] $-S_{cb}(\theta_f)=\tau(f\ln f), \ C_{EA}(\theta_f)=2Q_{EA}(\theta_f)= \tau(f\ln f)+2\ln n$;
\item[ii)] $\max\{d_M, \tau(f\ln f)\}\le Q^{(1)}(\theta_f)\le Q(\theta_f)\le Q^{(p)}(\theta_f)=\tau(f\ln f)+\max_k \ln n_k\pl .$
\end{enumerate}
Case 2. Consider another unitary
\[U'=\sum_{g,h} (A_gB_h)\otimes e_{hg,g} \pl .\]
Now the symbol algebra $N$ is $B(l_2(G))$. For a density, $f=\sum_{g,h}f_{g,g'}e_{g,g'}\in N$, we define the channel $\theta_f: S_1(l_2(G\times G))\to S_1(l_2(G\times G))$ associated with $f$ as
\begin{align*}
\theta_f(\rho)=\sum_{g,g',h, h} \tau(e_{hg,g}fe_{g',h'g'})A_g B_h\rho  (A_gB_h)^*=\fs{1}{n}\sum_{h g=h' g'} f_{g,g'}(\la(g)\rho_{h,h'}\la(g)^*\otimes W_ge_{h,h'}W_{g'}^*),
\end{align*}
for any $ \rho=\sum_{h,h'\in G} \rho_{h,h'}\otimes e_{h,h'}\in S_1(l_2(G\times G))\pl .$
Again it is unital, so our theorem give the same estimates as case 1.
\subsection{Non-unital channels}
So far the examples above are unital channels. In this part, we provide a non-unital example for which our estimates still apply. Let $G$ be a finite group of order $m$, and $g,h\in G$ be its group elements. Denote  $B(l_2(G))\cong M_m$ and $e_{g,h}$ as the matrix units. Consider the Stinespring unitary
\[U=\sum_{g,h\in G} e_{gh,h}\otimes e_{g,gh}\in M_m\ten M_m\pl .\]
For each density $f\in (M_m,\fs{1}{m}tr)$ (for the symbol algebra we use the normalized trace), we may define $\theta_f: S_1^m\rra S_1^m$ as follows
\begin{align*}
\theta_f(\rho)&=\frac1m\sum_{g,h,h'\in G} f_{gh,gh'}\rho_{h,h'}e_{gh,gh'}=\frac1m\sum_g f*(\lambda(g)\rho\lambda(g)^*)\\
&= f*(\frac1m\sum_g\lambda(g)\rho\lambda(g)^*)\pl ,\ \ \ \ \
 \forall \rho=\sum_{h,h'}\rho_{h,h'} e_{h,h'}\in S_1(l_2(G))\pl .
\end{align*}
Here ``$*$'' is again the Schur multiplication and $\lambda$ is the left regular representation. One can see that this channel is a composition of a random unitary and a Schur multiplier. In general this channel is not unital,
\[
\theta_f(1)=\frac1m\sum_g f* 1=\E(f)\pl .
\]
Here $\E$ denote the conditional expectation onto the diagonal matrices $f=\sum_g f_{g,g} e_{g,g}$. Since $\{e_{gh,h}\}$ and $\{e_{g,gh}\}$ are orthogonal basis of the full matrix algebra $M_m$, Theorem \ref{best} implies
\begin{align}\label{nu}&-S_{cb}(\theta_f)=\tau(f\ln f)-\ln m,\ \  \ \   C_{EA}(\theta_f)=2Q_{EA}(\theta_f)\le \tau(f\ln f)\pl, \nonumber\\
&H(\frac1m \E(f))-H(\frac1m f)\le Q^{(1)}(\theta_f)\le Q(\theta_f)\le \frac12 \tau(f\ln f) \pl . \end{align}
In particularly, we know $H(\frac1m \E(f))-H(\frac1m f)\ge 0$, because unital channels always increase the entropy. As for Pauli channels, the comparison estimates apply for $id\ten \theta_f$ instead of $\theta_f$, but \eqref{nu} is tighter than the comparison estimates.\\


\noindent\emph{Acknowledgement}---We thank Mark M. Wilde for helpful discussion and passing along the reference \cite{cubitt}, Andreas Winter and Debbie Leung for interesting remarks on the potential quantum capacity, and Carlos Palazuelos for continuing discussions on capacities. MJ is partially supported by NSF-DMS 1501103. NL is supported by NSF Graduate Research Fellowship Program DGE-1144245.

\end{document}